\newtheorem{theorem}{Theorem}
\newtheorem{assumption}{Assumption}
\newtheorem{proposition}{Proposition}
\newtheorem{corollary}{Corollary}
\newtheorem{lemma}{Lemma}
\newtheorem{example}{Example}
\newtheorem{remark}[theorem]{Remark}
\newcommand{\twotriangle}{\hfill $\bigtriangleup \bigtriangleup$  }
\newcommand{\eax}{\twotriangle  \end{example}}
\newcommand\bim{\begin{itemize}}
\newcommand\eim{\end{itemize}}
\begin{document}

\title{
Ambiguity Function Analysis of AFDM Under Pulse-Shaped Random ISAC Signaling 
}

\author{
	Yuanhan Ni, {\it Member, IEEE}, Fan Liu, {\it Senior Member, IEEE}, Haoran Yin, Yanqun Tang, Zulin Wang, {\it Member, IEEE}
	\thanks{Part of this paper has been accepted by the 2025 IEEE International Conference on communication Workshops (ICC Workshops)\cite{ni2025AFDM}. (Corresponding author: Fan Liu and Zulin Wang.)}
	\thanks{Y. Ni and Z. Wang are with the School of Electronic and Information Engineering, Beihang University, Beijing 100191, China (e-mail: yuanhanni@buaa.edu.cn; wzulin@buaa.edu.cn).}
	\thanks{F. Liu is with the National Mobile communication Research Laboratory, Southeast University, Nanjing 210096, China (e-mail: fan.liu@seu.edu.cn).}
	\thanks{H. Yin and Y. Tang is with the School of Electronics and Communication Engineering, Sun Yat-sen University, Shenzhen 518107, China (e-mail: yinhr6@mail2.sysu.edu.cn; tangyq8@mail.sysu.edu.cn).}
}

\maketitle

\begin{abstract}
This paper investigates the ambiguity function (AF) of the emerging affine frequency division multiplexing (AFDM) waveform for Integrated Sensing and Communication (ISAC) signaling under a pulse shaping regime. Specifically, we first derive the closed-form expression of the average squared discrete period AF (DPAF) for AFDM waveform without pulse shaping, revealing that the AF depends on the parameter $c_1$ and the kurtosis of random communication data, while being independent of the parameter $c_2$. As a step further, we conduct a comprehensive analysis on the AFs of various waveforms, including AFDM, orthogonal frequency division multiplexing (OFDM) and orthogonal chirp-division multiplexing (OCDM). Our results indicate that all three waveforms exhibit the same number of regular depressions in the sidelobes of their AFs, which incurs performance loss for detecting and estimating weak targets. However, the AFDM waveform can flexibly control the positions of depressions by adjusting the parameter $c_1$, which motivates a novel design approach of the AFDM parameters to mitigate the adverse impact of depressions of the strong target on the weak target. Furthermore, a closed-form expression of the average squared DPAF for pulse-shaped random AFDM waveform is derived, which demonstrates that the pulse shaping filter generates the shaped mainlobe along the delay axis and the rapid roll-off sidelobes along the Doppler axis. Numerical results verify the effectiveness of our theoretical analysis and proposed design methodology for the AFDM modulation.
\end{abstract}

\begin{IEEEkeywords}
Affine frequency division multiplexing, ambiguity function, integrated sensing and communication, random communication data, pulse shaping.
\end{IEEEkeywords}

\section{Introduction}

The integrated sensing and communication (ISAC) has been recognized as one of the six key usage scenarios in the 6G vision by the International Telecommunication Union (ITU), due to its ability to simultaneously deliver useful information to communication users and sense the surrounding environment \cite{liu2020joint}. To implement ISAC, waveform design plays a vital role, as it is required to accomplish dual tasks in a single transmission over shared time-frequency resources. ISAC waveforms are generally designed following three methodologies: \romannumeral 1) sensing-centric design, \romannumeral 2) communication-centric design, and \romannumeral 3) joint design \cite{liu2022integrated}. The sensing-centric ISAC waveform design aims to insert communication data into classical radar waveforms. For example, minimum shift keying (MSK) or continuous phase modulation (CPM) modulations can naturally be combined with linear frequency modulation (LFM) carriers to conceive an ISAC waveform \cite{zhang2017modified}. Nevertheless, the resulting spectral efficiency of such waveforms is critically dependent on the pulse repetition frequency (PRF) of the radar and therefore struggles to support modern communication applications that require high data rates.


The communication-centric ISAC waveform design, on the other hand, aims to enable sensing functionality by reusing existing communication waveforms, such as single-carrier (SC) modulations and multi-carrier modulations. Specifically, due to their high spectral efficiency, multi-carrier modulations have been widely studied as candidate waveforms for ISAC. For example, the sensing capability of orthogonal frequency division multiplexing (OFDM) has been examined in the literature, where cyclic prefix (CP) OFDM carrying random data payloads was directly used to sense targets, and a corresponding symbol-wise division-based sensing method was proposed \cite{sturm2011waveform}. The OFDM-ISAC waveform exhibits superior communication and sensing performance in static or quasi-static scenarios, but its performance deteriorates in high-mobility scenarios.


To improve both sensing and communication performances in high-mobility scenarios, a variety of emerging communication-centric signaling strategies have been considered as ISAC waveforms, e.g., orthogonal chirp-division multiplexing (OCDM) \cite{Ouyang2016Orthogonal, Oliveira2020An} and orthogonal time-frequency space (OTFS) \cite{Raviteja2018Interference, Gaudio2020On}. Specifically, the OCDM waveform multiplexes a set of orthogonal chirps, which are complex exponentials whose instantaneous frequencies vary linearly, based on the discrete Fresnel transform (FrT) \cite{Ouyang2016Orthogonal}. For sensing performance, the results in \cite{Oliveira2020An} showed that the sidelobe level of the radar image of the OCDM-ISAC waveform slightly increases compared with the OFDM-ISAC waveform when carrying random communication data. However, OCDM can only achieve partial communication diversity in doubly selective channels \cite{bemani2023affine}. In contrast, OTFS has been shown to achieve full communication diversity in doubly selective channels \cite{Raviteja2018Interference, Gaudio2020On}. In addition to that, the sensing performance of OTFS-ISAC waveforms has been investigated, which exhibits slight degradation in range and velocity estimation accuracy compared with OFDM and frequency modulated continuous wave (FMCW) waveforms \cite{Gaudio2020On}.


More recently, a new affine frequency division multiplexing (AFDM) waveform has been proposed for high-mobility communications. This method multiplexes communication data in the discrete affine Fourier transform (DAFT) domain and can achieve full diversity in doubly selective channels \cite{bemani2023affine,ni2025Integrated,bemani2024integrated,yin2025joint}. Moreover, the results in \cite{bemani2023affine} showed that, compared with OTFS, AFDM achieves comparable BER performance while offering lower complexity and higher spectral efficiency. Towards that end, extensive research has been conducted to further explore the potential of AFDM \cite{yin2024diagonally,luo2024afdm,tao2025affine}. For instance, pilot-aided channel estimation using guard intervals (GI) and equalization algorithms in the DAFT domain have been proposed for AFDM \cite{bemani2023affine}, which has then been extended from the single-antenna scenario to the multiple-input multiple-output (MIMO) scenario \cite{yin2024diagonally}. Moreover, an AFDM-empowered sparse code multiple access (SCMA) system, referred to as AFDM-SCMA, was studied to boost the spectral efficiency in massive connectivity scenarios \cite{luo2024afdm}. The results in \cite{luo2024afdm} showed that the proposed AFDM-SCMA significantly outperforms OFDM-SCMA in uncoded and coded systems. Due to inheriting the advantageous radar properties of chirp signals, AFDM is anticipated to exhibit superior communication and sensing performances simultaneously, especially in high-mobility scenarios.


Due to these compelling advantages, AFDM is regarded as a promising ISAC waveform. The sensing and communication performance bounds and trade-offs of the AFDM-ISAC waveform have been analyzed in \cite{ni2025Integrated}, in terms of communication and sensing spectral efficiency as well as the maximum tolerable delay and Doppler shift. Moreover, an efficient sensing parameter estimation method in the DAFT domain has been proposed, which utilizes all symbols (including pilots and data) of the AFDM-ISAC waveform, demonstrating excellent sensing performance even under large Doppler scenarios \cite{ni2025Integrated}. Additionally, an AFDM-based ISAC scheme that relies on only a single DAFT-domain pilot symbol for sensing has been studied, which exhibits slight performance degradation compared to the scheme using all symbols \cite{bemani2024integrated}.

The ambiguity function (AF) is well-recognized as a key performance indicator of a waveform. For classical predetermined sensing waveforms, e.g., LFM and FMCW, the AFs are deterministic and have closed-form expressions. However, compared with classical sensing waveforms, ISAC waveforms exhibit the following characteristics in practical systems: \romannumeral 1) embedding random communication data, and \romannumeral 2) applying a pulse shaping (PS) filter to restrict the bandwidth of the signal and to eliminate interference among communication symbols \cite{lu2025sensing,liu2025uncovering}. These characteristics may substantially affect the sensing performance. For instance, the randomness of communication data introduces significant variability in the sidelobe levels of the AF, while PS may distort the shape of both the mainlobe and sidelobes of the AF. Therefore, it is crucial to analyze the statistical properties of the AF of pulse-shaped random ISAC waveforms.

Recent literature has investigated the AFs of ISAC waveforms with random communication data and PS. Since sensing signal processing is generally implemented in the discrete time domain, discrete AFs have been widely studied. The authors in \cite{liu2024ofdm} first derived the closed-form expression of the average squared delay cut of the discrete periodic/aperiodic ambiguity function, i.e., the discrete periodic/aperiodic auto-correlation functions (PACF/AACF), of the random OFDM-ISAC waveform. Results in \cite{liu2024ofdm} showed that the AF of a random ISAC waveform is affected by the fourth-order moment, i.e., kurtosis, of the random communication data, and the OFDM-ISAC waveform can achieve the lowest delay-cut ranging sidelobe. The delay-Doppler integrated sidelobe level (ISL) of the discrete periodic ambiguity function (DPAF) of the MIMO-OFDM ISAC waveform is minimized by the symbol-level precoding method \cite{li2024mimo}. Based on this, the authors in \cite{liao2025pulse} investigated the AF of the pulse-shaped random SC-ISAC waveform and proposed a randomness-aware PS method to minimize the average sidelobe level of the AF. An in-depth and systematic analysis of the ranging performance of pulse-shaped random ISAC signals under arbitrary modulation schemes was presented \cite{liu2025uncovering}. The derived closed-form expression of the PACF under a Nyquist PS filter revealed that the average squared PACF can be metaphorically represented as an ``iceberg in the sea'', where the ``iceberg'' corresponds to the squared PACF of the chosen PS response, and the ``sea level'' reflects the variability of the random data.

More relevant to this work, recent studies have shown growing interest in the AF of the AFDM-ISAC waveform. In \cite{zhu2023low}, the AF of a special AFDM waveform was investigated, where the entire communication data was deterministically set to one. In parallel with our earlier conference version \cite{ni2025AFDM}, some independent research on the AF of the random AFDM-ISAC waveform has emerged. The authors in \cite{zhang2025afdm} formulated the DPAF of the pilot-assisted AFDM waveform and derived two conditions that minimize the variance of the AF of AFDM. The continuous form of the AF of a random AFDM-ISAC waveform was studied, and the amplitude of the derived AF was approximated by the Rice distribution \cite{bedeer2025ambiguity}. Moreover, the authors in \cite{yin2025ambiguity} investigated the AFs of continuous-time AFDM signals, revealing the spike-like local property and periodic-like global property of the AF of AFDM pilot subcarriers, as well as the thumbtack-like characteristic of the AF of AFDM symbols.

While the existing works on the AF of AFDM waveforms have provided valuable insights, they generally overlooked the impact of practical PS filters and lacked a detailed examination of sidelobe characteristics, resulting in an incomplete analysis of the sensing performance of AFDM. Moreover, previous works in \cite{liu2025uncovering} and \cite{liao2025pulse} only analyzed the impact of random communication data and PS on the ACF, thus capturing only part of the complete AF. Therefore, it remains necessary to investigate the closed-form expression of the whole AF of the pulse-shaped random AFDM-ISAC waveform.

This paper comprehensively investigates the AF of the pulse-shaped random AFDM-ISAC waveform. The closed-form expressions for the average squared DPAFs of random AFDM, OFDM, and OCDM waveforms with and without PS are respectively derived. On this basis, the impact of the AFDM parameters and the PS filter on the DPAF is analytically characterized, which motivates a novel design methodology for the AFDM parameters that can significantly enhance sensing performance. Numerical results are in complete agreement with the theoretical analysis and verify that AFDM achieves superior velocity estimation performance compared to OFDM in the strong-weak target scenario. For clarity, we summarize our contributions as follows:

\begin{itemize}
	{
		\item We derive a closed-form expression for the average squared DPAF of random AFDM-ISAC signals without PS. The impact of the AFDM parameters on the AF is analyzed, revealing that the DPAF of AFDM is influenced by both the parameter $c_1$ and the kurtosis of random communication data, but is independent of the parameter $c_2$. 
		
		\item On this basis, we comprehensively compare the DPAFs of the AFDM waveform with conventional OFDM and OCDM waveforms. We reveal that there exist the same number of regular depressions in the sidelobes of the DPAFs of AFDM, OCDM, and OFDM waveforms, which may severely deteriorate the detection performance of the weak target in the presence of the strong target. To address this issue, we demonstrate that the AFDM waveform can flexibly control the positions of these depressions by appropriately adjusting the parameter $c_1$, thereby mitigating the negative impact of the depressions caused by the strong target on the detection and estimation of the weak target. 
		
		\item We derive a closed-form expression for the average squared DPAF of pulse-shaped random AFDM-ISAC signals. Theoretical analysis reveals that the PS filter results in a shaped mainlobe along the delay axis determined by the squared PACF of the PS response, and rapidly decaying sidelobes along the Doppler axis determined by the squared spectrum of the squared envelope (SSE) of the PS response.
	}	
\end{itemize} 

The rest of this paper is organized as follows. Section II introduces the preliminaries. In Section III, we derive the AFs of random AFDM, OFDM, and OCDM waveforms without PS and propose a design guideline for the AFDM parameter. Section IV presents the AF of the pulse-shaped random AFDM waveform. Numerical results are presented in Section V. Section VI concludes the paper.

Notation: Throughout the paper, $\mathbf{X}$, $\mathbf{x}$, and $x$ denote a matrix, vector, and scalar, respectively. ${\left\langle \cdot \right\rangle _ N}$, $\delta\left(\cdot\right)$, $\odot$, $\left( \cdot \right)^{*}$, $\left( \cdot \right)^{ T}$, and $\left( \cdot \right)^{ H}$ are the modulo $N$ operation, the Dirac delta function, the Hadamard product, the conjugate operation, the transpose operation, and the Hermitian transpose operation, respectively. $\mathrm{diag}\left(\mathbf{x}\right)$ returns a diagonal matrix with the elements of $\mathbf{x}$ on the main diagonal. $\mathrm{vec}\left(\mathbf{X}\right)$ denotes the vectorization of $\mathbf{X}$. The $n$-th entry of a vector $\mathbf{x}$ and the $\left(m,n\right)$-th entry of a matrix $\mathbf{X}$ are denoted as $x_n$ and $x_{m,n}$, respectively. The Dirichlet function with period $N$ is defined as
\begin{equation*}
\mathcal{{D}}_{N}\left(x\right) = \frac{{\sin \left( {\pi x} \right)}}{{\sin \left( {{{\pi x} \mathord{\left/
					{\vphantom {{\pi x} N}} \right.
					\kern-\nulldelimiterspace} N}} \right)}}.
\end{equation*}

\section{Preliminaries}

\subsection{AFDM with Random Communication Data}

Define $\mathbf{s} \in \mathbb{C}^{N \times 1}$ as the transmitted random data vector, whose entries are independently and identically distributed (i.i.d.) according to a pre-defined complex constellation $\mathcal{S}$, e.g., quadrature amplitude modulation (QAM) or phase shift keying (PSK).
According to \cite{liu2024ofdm}, we impose the following generic assumptions on the adopted constellation.


\begin{assumption} \label{asp:unit_power}
	(Unit Power) We focus on constellations with unit power, namely,\cite{liu2024ofdm}
	\begin{equation} \label{eq:unit_power}
	{\mathbb{E}}\left( {{{\left| {{{s} }} \right|}^2}} \right) = 1, \ \forall s \in \mathcal{{S}}.
	\end{equation}
\end{assumption}
\begin{assumption} \label{asp:rot_sym}
	(Rotational Symmetry) The expectation and pseudo-variance of the constellation are zero, namely,\cite{liu2024ofdm}
	\begin{equation} \label{eq:rot_sym}
	{\mathbb{E}}\left( {{{s}}} \right) = 0,\quad{\mathbb{E}}\left( {{{s}}^2} \right) = 0,\ \forall s \in \mathcal{{S}}.
	\end{equation}
\end{assumption}
If the constellation has zero mean and unit power, its \emph{kurtosis}, i.e., the 4th-order moment, is  
\begin{equation}
\frac{{\mathbb{E}\left\{ {{{\left| {{{s}} - {\mathbb{E}}\left( {{{s}}} \right)} \right|}^4}} \right\}}}{{\mathbb{E}{{\left\{ {{{\left| {{{s}} - {\mathbb{E}}\left( {{{s}}} \right)} \right|}^2}} \right\}}^2}}} = \mathbb{E}\left\{ {{{\left| {{{s}}} \right|}^4}} \right\} \buildrel \Delta \over = {\mu _4} ,\ \forall s \in \mathcal{{S}}.
\end{equation} \normalsize
The kurtosis is equal to 1 for all PSK constellations and is between 1 and 2 for all QAM constellations, which is referred to as a sub-Gaussian constellation\cite{liu2024ofdm}. This paper mainly focuses on sub-Gaussian constellations, i.e., ${\mu _4} < 2$. 
\begin{lemma} \label{Lemma:aver_symbol}
	For i.i.d. constellation data that meets Assumptions 1 and 2, and $m,m',n,n' \in \left[0,N-1\right]$, a useful result was derived in \cite{liu2024ofdm}, i.e.,
	\begin{align}\label{eq:aver_symbol_s}
	&\mathbb{E}\left\{ {{{s}_m^{*}} s_{m'}^{} s_n^{} {{s}_{n'}^*}} \right\} \nonumber \\
	& = \delta \left(m-m'\right) \delta \left(n-n'\right)   + \delta \left(m-n\right)  \delta \left(m'-n'\right)  \nonumber \\
	&  \quad + \left( {{\mu _4} - 2} \right)\delta \left(m-m'\right) \delta  \left(m-n\right) \delta \left(m-n'\right) .
	\end{align}
\end{lemma}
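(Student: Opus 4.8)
The plan is to evaluate the fourth-order mixed moment directly, exploiting the i.i.d.\ structure of the constellation symbols: since only symbols carrying equal indices are statistically dependent, the joint expectation factorizes across distinct index values. Accordingly, I would partition the four factors $s_m^{*}$, $s_{m'}$, $s_n$, $s_{n'}^{*}$ according to which of the indices $m,m',n,n'$ coincide, and evaluate the contribution of each configuration separately. The conjugated factors sit at indices $m$ and $n'$, while the unconjugated factors sit at $m'$ and $n$, and this conjugation pattern is what decides whether a given grouping survives.

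First I would discard every configuration in which some index occurs an odd number of times. By Assumption 2, any isolated factor contributes $\mathbb{E}(s)=0$ or $\mathbb{E}(s^{*})=0$, so the fully-distinct case, each single-pair case, and each triple-plus-singleton case vanishes, because every such configuration contains at least one index appearing exactly once. This immediately reduces the problem to the two-pair and all-equal configurations.

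Next I would treat the three ways of splitting the four positions into two index-matched pairs. The two splittings $\{m=m',\,n=n'\}$ and $\{m=n,\,m'=n'\}$ each pair one conjugated factor with one unconjugated factor, so each pair yields $\mathbb{E}(|s|^{2})=1$ by Assumption 1, giving a contribution of $1$; these account for $\delta(m-m')\delta(n-n')$ and $\delta(m-n)\delta(m'-n')$, respectively. The remaining splitting $\{m=n',\,m'=n\}$ groups the two conjugated factors together and the two unconjugated factors together, producing $\mathbb{E}\bigl((s^{*})^{2}\bigr)\,\mathbb{E}\bigl(s^{2}\bigr)=0$ by the pseudo-variance condition of Assumption 2. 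This is precisely the step where rotational symmetry is indispensable, rather than merely the zero-mean property.

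Finally, the all-equal configuration $m=m'=n=n'$ yields $\mathbb{E}(|s|^{4})=\mu_4$. The one bookkeeping step I would be most careful about is the overlap at this point: both surviving delta-products already evaluate to $1$ when all four indices agree, so their sum over-counts the all-equal configuration as $2$ rather than the correct $\mu_4$. I would therefore correct the excess by subtracting $(\mu_4-2)$ on the set where all indices coincide, i.e.\ adding the term $(\mu_4-2)\,\delta(m-m')\delta(m-n)\delta(m-n')$, whose indicator is supported exactly on $m=m'=n=n'$. Combining the two surviving pair contributions with this inclusion--exclusion correction then reproduces \eqref{eq:aver_symbol_s}.
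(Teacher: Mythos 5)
Your proposal is correct, but note that the paper itself contains no proof of this lemma: it is quoted verbatim from \cite{liu2024ofdm}, so there is nothing internal to compare against, and your argument serves as the self-contained derivation the paper outsources. Your route is the standard (and surely the intended) one: partition the index tuples by coincidence pattern; kill every configuration containing a singleton index via independence and ${\mathbb{E}}(s)=0$; evaluate the three two-pair splittings, where $\{m=m',n=n'\}$ and $\{m=n,m'=n'\}$ each give $\bigl({\mathbb{E}}|s|^2\bigr)^2=1$ while $\{m=n',m'=n\}$ gives ${\mathbb{E}}\bigl((s^{*})^{2}\bigr){\mathbb{E}}\bigl(s^{2}\bigr)=0$ by the zero pseudo-variance condition (you are right that this is where rotational symmetry, not merely zero mean, is essential); and finally repair the all-equal set, where the two surviving delta products sum to $2$ but the true value is $\mu_4$, by the term $\left( {{\mu _4} - 2} \right)\delta \left(m-m'\right) \delta  \left(m-n\right) \delta \left(m-n'\right)$. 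One wording slip worth fixing: you say you would ``correct the excess by subtracting $(\mu_4-2)$,'' which contradicts the (correct) displayed correction of \emph{adding} $(\mu_4-2)\,\delta(m-m')\delta(m-n)\delta(m-n')$; since the over-count is $2-\mu_4$, the additive correction $+(\mu_4-2)$ in your formula is the right one, so the slip is purely verbal and the final identity matches \eqref{eq:aver_symbol_s} in every case.
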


Next, we briefly review the basic principles of the AFDM. For a given two-dimensional (2D) random communication data block $\mathbf{S} = \left[\mathbf{s}_0,\cdots,\mathbf{s}_{N_{\rm sym}-1}\right] \in \mathbb{C}^{N \times N_{\rm sym}}$, the $N$-point inverse discrete affine Fourier transform (IDAFT) is then applied to each column of $\mathbf{S}$ to generate $N_{\rm sym}$ AFDM symbols in the time domain\cite{bemani2023affine,ni2025Integrated}
\begin{equation}\label{eq:AFDM}
{x}_{n,k} = \frac{1}{{\sqrt N }}\sum\nolimits_{m = 0}^{N - 1} {s_{m,k}} {\phi _n}\left( m \right)  ,
\end{equation} 
where ${\phi _n}\left( m \right) = {e^{j2\pi \left( {{c_1}{n^2} + \frac{mn}{N} + {c_2}{m^2}} \right)}}$ with $c_1$ and $c_2$ being the AFDM parameters, $n\in \left[0,N-1\right]$ and $k\in \left[0,N_{\rm sym}-1\right]$. This paper assumes that $c_1$ is chosen such that $2Nc_1$ is an integer\cite{bemani2023affine}.

Then, to ensure the cyclic shift property of the received sensing echo and communication signal, a chirp-periodic prefix (CPP) of length $N_{\rm cp}$ is appended to each of the $N_{\rm sym}$ AFDM symbols, defined as\cite{bemani2023affine}
\begin{equation}\label{eq:symbol_AFDM}
\mathbf{X}_{\rm cp}\left[m,k\right] {=} {e^{ - i2\pi {c_1}\left( {{N^2} + 2Nn} \right)}} {x}_{{{\left\langle n \right\rangle }_N},k},n \in \left[- {N_{\rm cp}},-1\right] .
\end{equation}
It is assumed that the maximum delay introduced by the sensing target or the communication channel is less than the duration of the CPP \cite{bemani2023affine,ni2025Integrated}. After parallel-to-serial conversion, the discrete-time AFDM signal vector $\mathbf{\tilde x} \in \mathbb{C}^{\left(N+N_{\rm cp}\right)N_{\rm sym} \times 1}$ is given by
\begin{equation}\label{eq:symbol_CP_AFDM}
\mathbf{\tilde x} = {\rm vec}\left({\left[ {{\mathbf{X}^{ T}_{\rm cp}}}, \mathbf{X}^{ T}
	\right]^{ T}}\right).
\end{equation}

\subsection{AFDM with Pulse Shaping}

To generate the baseband signal, the discrete-time signal is further passed through a PS filter $\tilde g(t)$, yielding the continuous-time signal $\tilde x_{\rm ps}(t)$ for transmission over the band-limited ISAC channel. Typically, band-limited prototype Nyquist pulses with a one-sided bandwidth $B$ and a roll-off factor $\alpha$ are employed as the PS filter. The resulting continuous-time signal $\tilde x_{\rm ps}(t)$ can be expressed as \cite{liu2025uncovering}
\begin{eqnarray}
\tilde x_{\rm ps}\left( t \right) = \sum\limits_{n = 0}^{N_{\rm all} - 1} {\tilde x_n \tilde g\left( {t - nT} \right)},
\end{eqnarray} 
where $N_{\rm all} = \left(N+N_{\rm cp}\right)N_{\rm sym}$, and $T = \frac{1+\alpha}{2B}$. By employing the unit impulse function $\delta(t)$ in the continuous-time domain, the signal $\tilde x_{\rm ps}(t)$ can be equivalently represented in a linear convolution form as follows:\cite{liu2025uncovering}
\begin{eqnarray}
\tilde x_{\rm ps}\left( t \right) = \sum\limits_{n = 0}^{N_{\rm all} - 1} {\tilde x_n\delta \left( {t - nT} \right) * \tilde g\left( t \right)} .
\end{eqnarray}		
Given the fact that baseband signals are usually processed in the discrete time domain, following \cite{liu2025uncovering}, we proceed with our study using an oversampling-based implementation in the discrete time domain with a sampling rate $f_s = \frac{1}{T_s}$ and a sampling duration $T_s$. Accordingly, $L = \frac{T}{T_s}$ denotes the over-sampling ratio and is assumed an integer. As such, the $k$-th sample of pulse-shaped signal $\tilde x_{\rm ps}\left( t \right)$ is written as\cite{liu2025uncovering}
\begin{eqnarray}\label{eq:pulse_shaping_0}
{{\tilde x_{{\rm ps},k}}} = \tilde x_{\rm ps}\left( {k{T_s}} \right) = \sum\limits_{n = 0}^{N_{\rm all} - 1} {\tilde x_n\delta \left( {k{T_s} - nT} \right) * {\tilde g _k}} ,
\end{eqnarray}
where ${\tilde g _k} = \tilde g\left( {k{T_s}} \right)$.

\begin{figure}[!htbp]
	\centering
	\includegraphics[width=2.0in]{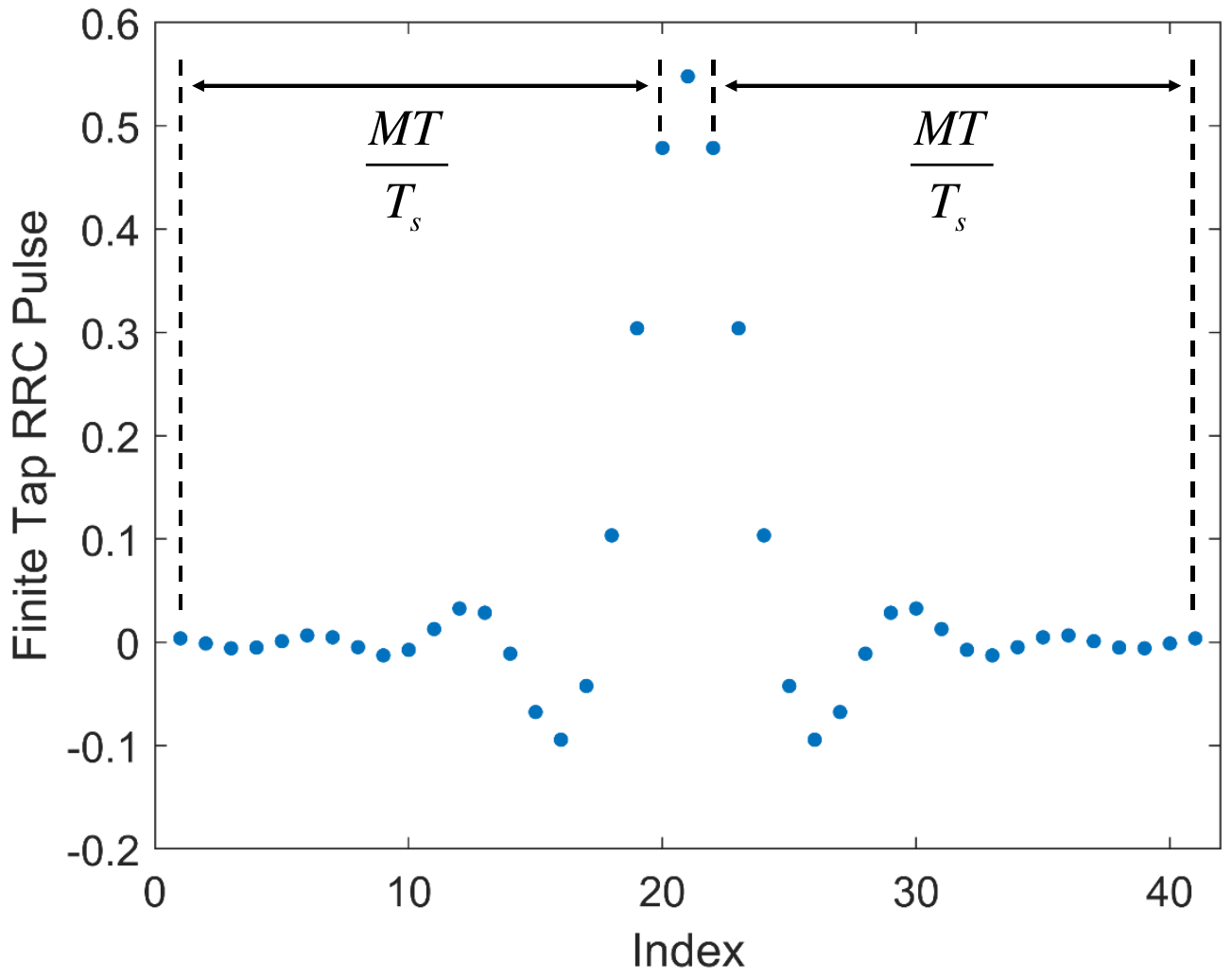}	
	\caption{A finite-tap RRC pulse with $M=5$ and $L=4$.  
		\label{fg:FiniteRap_RRC_N_10_L_4}}
	\vspace*{-5pt} 
\end{figure}

In practical implementations, the amplitude response of the PS filter decays rapidly, and hence the PS filter used in practice typically has a finite number of taps \cite{lin2022orthogonal}. 
This paper considers the PS filter with $2ML+1$ taps, i.e., $\mathbf{\tilde g} = {[{\tilde g _0},{\tilde g _1}, \cdots ,{\tilde g _{\left({2ML}\right)}}]^{{T}}} \in \mathbb{R}^{\left(2ML+1\right)\times 1}$ with ${\left\| \mathbf{\tilde g} \right\|^2} = 1$. An example of a finite-tap RRC PS filter is shown in Fig. \ref{fg:FiniteRap_RRC_N_10_L_4}. 

\begin{figure}[!htbp]
	\centering
	\includegraphics[width=2.6in]{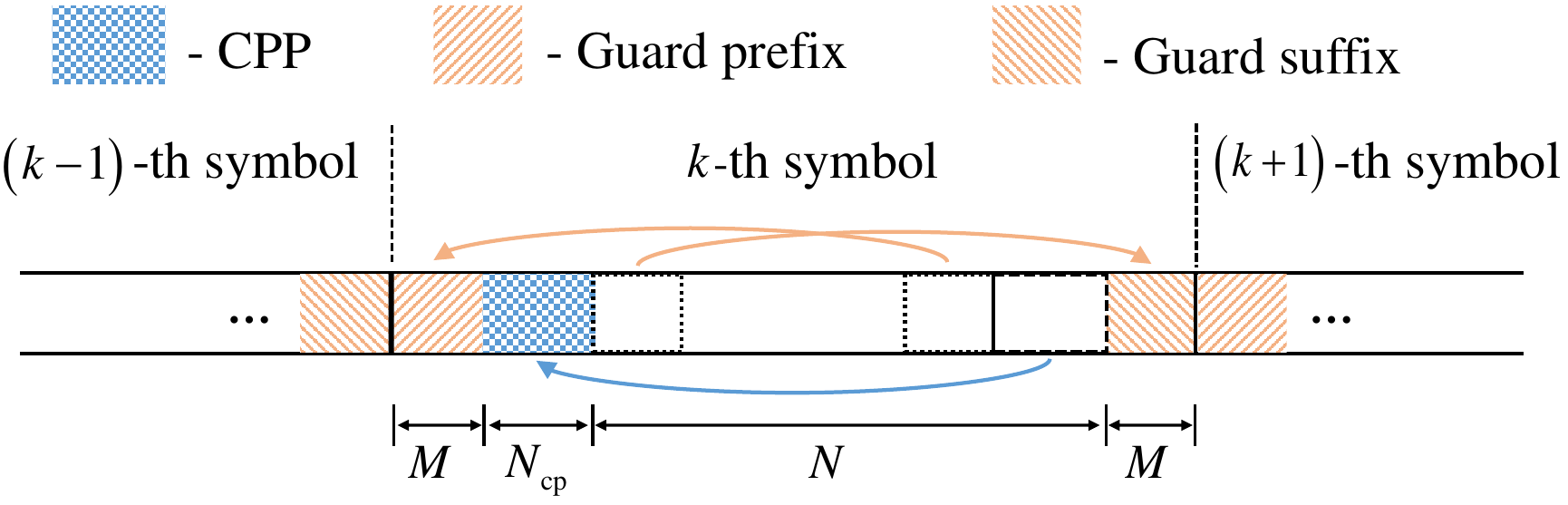}	
	\caption{The framework of GPS-AFDM.  
		\label{fg:GPS_CP_AFDM}}
	\vspace*{-5pt} 
\end{figure}
However, if the discrete-time AFDM signal in (\ref{eq:symbol_CP_AFDM}) is directly filtered by the PS filter $\mathbf{\tilde g}$, the received signal loses its cyclic shift property after propagating over the ISAC channel, thereby introducing inter-symbol interference. To address this issue, we insert the guard prefix and suffix (GPS) segments into the AFDM signal, referred to as GPS-AFDM, to eliminate the interference caused by PS. Specifically, the framework of the GPS-AFDM signal is shown in Fig. \ref{fg:GPS_CP_AFDM}. 
The guard prefix and guard suffix of the $k$-th AFDM symbol are given by
\begin{align}
&{x_{{\rm{gp}},n,k}} = {e^{ - i2\pi {c_1}\left( {{N^2} + 2Nn} \right)}} x_{ {{\left\langle n \right\rangle }_N},k },\\
&{x_{{\rm{gs}},m,k}} = {e^{ - i2\pi {c_1}\left( {{N^2} - 2Nm} \right)}} x_{ {{\left\langle m \right\rangle }_N},k },
\end{align}
where $n =  - {N_{\rm cp}} - M, \ldots , - {N_{\rm cp}} - 1$, and $m = N + 1, \ldots ,N + M$. After Parallel-to-Serial conversion, let $N_{\rm gps} = \left(N+N_{\rm cp}+2M\right)N_{\rm sym}$, the discrete-time GPS-AFDM signal vector $\mathbf{\tilde x}_{\rm gps} \in \mathbb{C}^{N_{\rm gps} \times 1}$ is given by
\begin{equation}\label{eq:symbol_GPS_CP_AFDM}
\mathbf{\tilde x}_{\rm gps}  = {\rm vec}\left({\left[ 
	{{\mathbf{X}^{ T}_{\rm gp}}}, {{\mathbf{X}^{ T}_{\rm cp}}},\mathbf{X}^{ T},{{\mathbf{X}^{ T}_{\rm gs}}}
	\right]^{ T}}\right).
\end{equation}

Then, according to (\ref{eq:pulse_shaping_0}), by defining the up-sampled GPS-AFDM signal as
\begin{equation}
{\mathbf{\tilde x}_{\rm gps, up}} = {\left[ {{\tilde x_{{\rm gps},0}},\mathbf{0}_{L - 1}^{{T}}, \cdots ,{\tilde x_{{\rm gps},N_{\rm gps}-1}},\mathbf{0}_{L - 1}^{{T}}} \right]^{{T}}},
\end{equation}
the pulse-shaped signal can be expressed in matrix form as:
\begin{equation}\label{eq:pulse_shaping}
{\mathbf{\tilde x}_{\rm ps}} = \mathbf{\tilde G}{\mathbf{\tilde x}_{\rm gps, up}},
\end{equation}
where $\mathbf{\tilde x}_{\rm ps} \in \mathbb{C}^{\left(N_{\rm gps}+2M\right)L\times 1}$ denotes the pulse-shaped AFDM signal, and ${\mathbf{\tilde G}} \in {\mathbb{R}^{\left(N_{\rm gps}+2M\right)L \times  N_{\rm gps} L}}$ denotes the aperiodic PS matrix, which is given by
\begin{equation}\label{eq:pulse_shaping_matrix}
{\mathbf{\tilde G}} = \left[ {\begin{array}{*{20}{c}}
	{\tilde g _ 0}&0& \cdots &0\\
	\vdots &{\tilde g _ 0}&{}& \vdots \\
	{\tilde g_ {NL - 1} }& \vdots & \ddots &0\\
	0&{\tilde g_ {NL - 1}}&{}&{\tilde g _ 0}\\
	\vdots & \vdots &{}& \vdots \\
	0&0& \cdots &{\tilde g_ {NL - 1}}
	\end{array}} \right]  .
\end{equation} 

\subsection{Received Echo of Pulse-Shaped AFDM}

Considering a sensing channel with $Q$ point-like targets, the received echo vector $\mathbf{\tilde y} \in \mathbb{C}^{\left(N_{\rm gps}+2M\right)L\times 1}$ can be expressed as
\begin{equation}
\mathbf{\tilde y} = \sum\nolimits_{q = 1}^Q {{{\mathbf{\tilde B}} _q}{\mathbf{{\tilde \Delta }}_{{\nu _q}}}{\mathbf{{\tilde J}}_{{\tau _q}}}{\mathbf{{\tilde x}}_{\rm ps}}}  + \mathbf{\tilde z},
\end{equation}
where $\mathbf{\tilde z}$ denotes the noise in the time domain, and ${\tau _q}=\frac{1}{LT}{\tilde \tau _q}$ and ${\nu _q}=\frac{{1}}{{{\Delta  _f}}}\tilde \nu _q$ are the normalized delay and normalized Doppler frequency shift of the $q$-th target, respectively. Here ${\tilde \tau _q}$ and ${\tilde \nu _q}$ denote the continuous delay and Doppler frequency shift of the $q$-th target, and $\Delta _f$ is the subcarrier spacing. Following \cite{liu2025uncovering}, while this paper mainly focuses on the case of integer normalized delay, it yields a practical and accurate approximation of the actual problem, especially when the over-sampling ratio $L$ is large. The target response matrix ${{\mathbf{\tilde B}} _q} \in \mathbb{C}^{\left( {{N_{\rm gps}} + 2M} \right)L \times \left( {{N_{\rm gps}} + 2M} \right)L}$ of the $q$-th target can be written as
\begin{equation}
{{\mathbf{\tilde B}} _q} = \mathrm{diag}\left(\tilde \beta _{q,n},\ n \in \left[ 0,\left(N_{\rm gps}+2M\right)L-1 \right] \right),
\end{equation}
where $\tilde \beta _{q,n}$ denotes the complex reflection coefficient of the $q$-th target at the $n$-th snapshot. For the nonfluctuating (Swerling 0) target, ${{\mathbf{\tilde B}} _q} = {\bar \beta _{q}} \mathbf{I}$ with ${\bar \beta _{q}}$ being the average complex reflection coefficient of the $q$-th target. However, due to the high velocity and maneuverability of targets in high-mobility scenarios, targets may exhibit fast fluctuations, e.g. following Swerling 2 model, a phenomenon known as pulse-to-pulse decorrelation, where each pulse corresponds to one AFDM symbol\cite[Sec. 7.5]{mark2010principles}. As such, following \cite{mark2010principles}, this paper assumes that the target reflection coefficient remains constant within each received pulse-shaped AFDM symbol and changes independently from pulse to pulse. The Doppler matrix ${\mathbf{{\tilde \Delta }}_{{\nu _q}}}$ is defined as
\begin{equation}
{\mathbf{{\tilde \Delta }}_{{\nu _q}}} = \mathrm{diag}\left( {{e^{ j\frac{2\pi}{NL} {\nu _q}n}}},n \in \left[ 0,\left(N_{\rm gps}+2M\right)L-1 \right] \right),
\end{equation}
and ${\mathbf{{\tilde J}}_{{\tau _q}}} \in {\mathbb{C}^{\left( {{N_{\rm gps}} + 2M} \right)L \times \left( {{N_{\rm gps}} + 2M} \right)L}}$ denotes the aperiodic time-shift matrix, which is given by\cite{liu2025uncovering}
\begin{equation}
{\mathbf{{\tilde J}}_{{\tau _q}}} = \left[ {\begin{array}{*{20}{c}}
	\mathbf{0}&\mathbf{0}\\
	{{\mathbf{I}_{\left( {{N_{\rm gps}} + 2M} \right)L - {\tau _q}}}}&\mathbf{0}
	\end{array}} \right].
\end{equation}

Before the sensing signal processing, we only retain the valid signal with $N_{\rm gps}L$ samples in the sampling range $n\in\left[ML,N_{\rm gps}L+ML-1\right]$ by removing the transition segment introduced by the PS filter. After that, performing Serial-to-Parallel conversion and then removing the guard prefix, CPP and guard suffix of each symbol, the received echo signal matrix $\mathbf{Y}\in {\mathbb{C}^{NL\times N_{\rm sym}}}$ can be written as
\begin{equation} \label{eq:rec_Matrix}
\mathbf{Y} = \sum\limits_{q = 1}^Q {{e^{ j\frac{2\pi}{NL} {\nu _q}\left(N_{\rm cp}+2M\right)L}}{\mathbf{{\Delta }}_{{\nu _q}}}{\mathbf{{J}}_{{\tau _q}}}{\mathbf{G}}{\mathbf{ X}_{\rm up}}}{{\mathbf{ B}} _q}{\mathbf{{D}}_{{\nu _q}}}  + \mathbf{Z},
\end{equation}
where $\mathbf{Y} = \left[\mathbf{y}_0,\cdots,\mathbf{y}_{N_{\rm sym}-1}\right]$, and $\mathbf{Z}$ denotes the noise matrix. ${\mathbf{{\Delta }}_{{\nu _q}}} \in \mathbb{C}^{NL\times NL}$ and $\mathbf{D}_{\nu _q} \in \mathbb{C}^{N_{\rm sym}\times N_{\rm sym}}$ represent the fast-time Doppler matrix and the slow-time Doppler matrix, respectively, defined as
\begin{align}
&{\mathbf{{\Delta }}_{{\nu _q}}} = \mathrm{diag}\left( {{e^{ j\frac{2\pi}{NL} {\nu _q}n}}} \right), \\
&\mathbf{D}_{\nu _q}= \mathrm{diag}\left( {{e^{ j\frac{2\pi}{NL} {\nu _q}\left(N+N_{\rm cp}+2M\right)Lk}}} \right),
\end{align} 
where $n \in \left[ 0,NL-1 \right]$ and $k \in \left[ 0,N_{\rm sym}-1 \right]$.
${{\mathbf{B}} _q} \in \mathbb{C}^{N_{\rm sym} \times N_{\rm sym}}$ represents the target response matrix, i.e.,
\begin{equation}
{{\mathbf{ B}} _q} = \mathrm{diag}\left( \beta _{q,k},\ k \in \left[ 0,N_{\rm sym}-1 \right] \right),
\end{equation}
where $\beta _{q,k}$ denotes the complex reflection coefficient of the $q$-th fast fluctuating Swerling 2 target for the $k$-th pulse-shaped AFDM symbol, which follows the complex Gausion distribution, i.e., $\beta _{q,k} \sim \mathcal{CN}\left(0, \bar \beta _{q}^2\right)$\cite{mark2010principles}.
${\mathbf{{J}}_{{\tau _q}}}\in \mathbb{C}^{NL\times NL}$ deontes the periodic time-shift matrix, which is\cite{liu2025uncovering}
\begin{equation}
{\mathbf{{J}}_{{\tau _q}}} = \left[ {\begin{array}{*{20}{c}}
	\mathbf{0}&\mathbf{I}_{\tau _q}\\
	{{\mathbf{I}_{NL - {\tau _q}}}}&\mathbf{0}
	\end{array}} \right],
\end{equation} 
and ${\mathbf{G}} \in {\mathbb{R}^{NL \times NL}}$ is the periodic PS matrix, defined as	
\begin{equation}\label{eq:pulse_shaping_matrix_circ}
{\mathbf{G}} = \left[ {\begin{array}{*{20}{c}}
	{\tilde g_{ML}}&{\tilde g_{ML - 1}}& \cdots &{\tilde g_{ML + 1}}\\
	\vdots & {\tilde g_{ML}} &{}& \vdots \\
	{\tilde g_{2ML} }&{\vdots}&{}&{ }\\
	0&{\tilde g_{2ML} }&{}&0\\
	\vdots &0& \ddots & \vdots \\
	0& \vdots &{}&0\\
	{\tilde g _ 0 }&0&{}&{}\\
	\vdots & \vdots &{}& \vdots \\
	{\tilde g_{ML - 1} }&{\tilde g_{ML - 2}}& \cdots &{\tilde g_{ML}}
	\end{array}} \right] .
\end{equation} 
Moreover, $\mathbf{X}_{\rm up}$ is the up-sampled signal matrix of ${\mathbf{ X}}$, expressed as
\begin{equation}
{\mathbf{X}_{\rm up}} = \left[ {{\mathbf{x}_{{\rm up},0}}, \cdots ,{\mathbf{x}_{{\rm up},{N_{\rm sym}} - 1}}} \right],
\end{equation}
where ${\mathbf{x}_{{\rm up},k}} = {\left[ {{x_{0,k}},\mathbf{0}_{L - 1}^{{T}}, \cdots ,{x_{N-1,k}},\mathbf{0}_{L - 1}^{{T}}} \right]^{{T}}}$. Let ${\mathbf{x}_{{\rm ps},k}}$ denote the $k$-th pulse-shaped AFDM symbol, which is\cite{liu2025uncovering}
\begin{equation} \label{eq:pulsed_AFDM}
{\mathbf{x}_{{\rm ps},k}} = {\mathbf{G}} {\mathbf{x}_{{\rm up},k}}.
\end{equation}
Then, the $k$-th received AFDM symbol in (\ref{eq:rec_Matrix}) is given by
\begin{align}\label{eq:received_signal}
&{y}_k\left[n\right] = \sum\limits_{q = 1}^Q {\beta _{q,k}}{e^{ j\frac{2\pi}{NL} {\nu _q}\left(N_{\rm cp}+2M\right)L}}{e^{ j\frac{2\pi}{NL} {\nu _q}\left(N+N_{\rm cp}+2M\right)Lk}} \nonumber \\
& \quad\quad\quad\quad \cdot {{x}_{{\rm ps},k}}\left[{{\left\langle n - {\tau _q} \right\rangle }_{NL}}\right]{e^{ j\frac{2\pi}{NL} {\nu _q}n}}  + z_{n,k},
\end{align}
where $k\in \left[0,N_{\rm sym}-1\right]$ and $n\in \left[0,NL-1\right]$. 
It is observed from (\ref{eq:received_signal}) that, under the GPS-AFDM scheme, each received symbol $\mathbf{y}_k$ constitutes a periodic shift counterpart of the pulse-shaped AFDM symbol ${\mathbf{x}_{{\rm ps},k}}$. Hence, for each received symbol $\mathbf{y}_k$, applying a matched filter (MF) through periodic convolution with the reference signal ${\mathbf{x}_{{\rm ps},k}}$ enables the extraction of the delay and Doppler information of sensing targets, yielding
\begin{align}
{{{\tilde r}}_k}\left( {\tau ,\nu } \right) = \sum\limits_{n = 0}^{NL-1} {{y_k}\left[ n \right]x_{{\rm ps},k}^ * \left[ {{{\left\langle {n - \tau } \right\rangle }_{NL}}} \right]{e^{ - j\frac{{2\pi }}{{NL}}\nu n}}}.
\end{align}

Due to the randomness of the transmitted ISAC waveform, the sidelobes of ${{{\tilde r}}_k}\left( {\tau ,\nu } \right)$ exhibit significant fluctuations, which may adversely affect target detection and parameter estimation. Integration of multiple matched filter (MF) outputs can reduce the fluctuations caused by randomness \cite{liu2025uncovering}. However, for fast fluctuating targetsc, the phases of these symbols may be noncoherent. Therefore, this paper primarily considers noncoherent integration. According to \cite{mark2010principles}, the output of noncoherent integration over $N_{\rm sym}$ symbols with a square law detector can be formulated as
\begin{align}\label{eq:noncoh_integ}
	r\left( {\tau ,\nu } \right)  &= \frac{1}{N_{\rm sym}} \sum\nolimits_{k = 0}^{{N_{{\rm{sym}}}} - 1} {{{\left| {{{{{\tilde r}}}_k}\left( {\tau ,\nu } \right)} \right|}^2}}.
\end{align}
When $N_{\rm sym}$ is sufficiently large, $r\left( {\tau ,\nu } \right)$ approaches the expectation of ${{{\left| {{{{{\tilde r}}}_k}\left( {\tau ,\nu } \right)} \right|}^2}}$, which is defined as (\ref{eq:ave_squ_rk}) at the top of next page.
\begin{figure*} [htp]
	\vspace*{-10pt} 
	\small
	\begin{flalign}\label{eq:ave_squ_rk}
	&\ r\left( {\tau ,\nu } \right) \approx	\mathbb{E}\left\{ {{{\left| {{{{{\tilde r}}}_k}\left( {\tau ,\nu } \right)} \right|}^2}} \right\} & \nonumber \\
	&\	\mathop  = \limits^{\left( a \right)} \mathbb{E}\left\{ {{{\left| {\sum\limits_{q = 1}^Q {{\beta _{q,k}}{e^{j\frac{{2\pi }}{{NL}}{\nu _q}\left( {{N_{{\rm{cp}}}} + 2M} \right)L}}{e^{j\frac{{2\pi }}{{NL}}{\nu _q}\left( {N + {N_{{\rm{cp}}}} + 2M} \right)Lk}}\sum\limits_{n = 0}^{NL - 1} {{x_{{\rm{ps}},k}}\left[ {{{\left\langle {n - {\tau _q}} \right\rangle }_{NL}}} \right]x_{\rm{ps},k}^ * \left[ {{{\left\langle {n - \tau } \right\rangle }_{NL}}} \right]{e^{ - j\frac{{2\pi }}{{NL}}\left( {\nu  - {\nu _q}} \right)n}}} } } \right|}^2}} \right\} + \mathbb{E}\left\{ {{{\left| {{{\tilde z}_k}\left( {\tau ,\nu } \right)} \right|}^2}} \right\} & \nonumber \\ 
	&\	\mathop  = \limits^{\left( b \right)} \sum\limits_{q = 1}^Q {\mathbb{E}\left\{ {{{\left| {{\beta _{q,k}}} \right|}^2}} \right\}\mathbb{E}\left\{ {{{\left| {\sum\limits_{n = 0}^{NL - 1} {{x_{{\rm{ps}},k}}\left[ n \right]x_{\rm{ps},k}^ * \left[ {{{\left\langle {n - \left( {\tau  - {\tau _q}} \right)} \right\rangle }_{NL}}} \right]{e^{ - j\frac{{2\pi }}{{NL}}\left( {\nu  - {\nu _q}} \right)n}}} } \right|}^2}} \right\}}  + \mathbb{E}\left\{ {{{\left| {{{\tilde z}_k}\left( {\tau ,\nu } \right)} \right|}^2}} \right\} & \nonumber \\
	&\	\mathop  =  \sum\limits_{q = 1}^Q {\bar \beta _q^2 \cdot \mathbb{E}\left\{ {{{\left| {\chi _k \left( {\tau  - {\tau _q},\nu  - {\nu _q}} \right)} \right|}^2}} \right\}}  + \mathbb{E}\left\{ {{{\left| {{{\tilde z}_k}\left( {\tau ,\nu } \right)} \right|}^2}} \right\}.&
	\end{flalign} \normalsize
		\hrule
	\vspace*{-10pt} 
\end{figure*}
In (\ref{eq:ave_squ_rk}), ${{{\tilde z}_k}\left( {\tau ,\nu } \right)}$ is the noise component, defined as
\begin{equation}
{{{\tilde z}_k}\left( {\tau ,\nu } \right)} = \sum\limits_{n = 0}^{NL-1} {{z}_{n,k} \cdot x_{{\rm ps},k}^ * \left[ {{{\left\langle {n - \tau } \right\rangle }_{NL}}} \right]{e^{ - j\frac{{2\pi }}{{NL}}\nu n}}},
\end{equation}
and $\chi _k \left( {\tau} ,{\nu} \right)$ denotes the DPAF of the pulse-shaped signal vector ${\mathbf{x}_{{\rm ps},k}}$ in (\ref{eq:pulsed_AFDM}), which is given by
\begin{equation}\label{eq:DPAF_k}
\chi _k\left( {\tau} ,{\nu} \right) = \sum\limits_{n = 0}^{NL - 1} {{{ x}_{{\rm ps},k}}\left[ n \right]{{{ x}^ *_{{\rm ps},k}} }\left[ {{{\left\langle {n - \tau} \right\rangle }_{NL}}} \right]{e^{-j\frac{{2\pi }}{NL}{\nu}n}}}. 
\end{equation}
The equality (a) holds because the received zero-mean noise is independent of the signal. Moreover, the equality (b) holds because the zero-mean complex coefficient ${\beta _{q,k}}$ is independent of each other across different $q$ and $k$.
Consequently, the output of noncoherent integration of MF is affected by the average squared DPAF of the pulse-shaped signal, i.e., $\mathbb{E}\left\{ \left|\chi \left( {\tau} ,{\nu} \right)\right|^2  \right\}$. 

\subsection{Impact of randomness and PS on the ACF}

Next, we briefly review the impact of randomness and PS on the ACF according to the Iceberg Theorem in \cite{liu2025uncovering}.
\begin{lemma} \label{lemma:Iceberg}
	(Iceberg Metaphor\cite{liu2025uncovering}) The average squared ACF of random pulse-shaped waveforms consists of two components: the ``iceberg'' and the ``sea level''.
	The ``iceberg'' depends on the PS, which determines the overall shape of the ACF and primarily affects the sensing performance for targets near the mainlobe. In contrast, the ``sea level'' represents the sidelobes exhibiting approximately uniform levels, raised from the randomness of the data payload. Additionally, periodic ripples appear in the sidelobe region of the ACF, which are caused by the PS and can be metaphorically interpreted as ``sea waves'' over the ``sea level''.
\end{lemma}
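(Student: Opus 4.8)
The plan is to establish the claimed decomposition by deriving a closed-form expression for the average squared ACF and then reading off its physical meaning term by term. Since the ACF is the zero-Doppler slice $\chi_k(\tau,0)$ of the DPAF in (\ref{eq:DPAF_k}), I would first write the pulse-shaped samples explicitly as the periodic convolution of the random data with the PS taps, namely $x_{{\rm ps},k}[n] = \sum_{m=0}^{N-1} x_{m,k}\,\tilde g[\langle n-mL\rangle_{NL}]$, which follows from (\ref{eq:pulsed_AFDM}) together with the circulant structure of $\mathbf{G}$ in (\ref{eq:pulse_shaping_matrix_circ}). Substituting this into $\chi_k(\tau,0)$ turns the ACF into a double sum over the data index weighted by products of delay-shifted PS taps.

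The core step is to form $|\chi_k(\tau,0)|^2$ and take the expectation over the random payload. Expanding the squared modulus yields a quadruple sum over four data indices $m,m',p,p'$, and I would invoke Lemma \ref{Lemma:aver_symbol}, eq. (\ref{eq:aver_symbol_s}), to reduce the fourth-order moment to its three contributions: the two delta-function pairings plus the kurtosis correction weighted by $(\mu_4-2)$. The first pairing $\delta(m-m')\delta(p-p')$ decouples the quadruple sum into $|\mathbb{E}\{\chi_k(\tau,0)\}|^2$, i.e. the squared \emph{deterministic} ACF of the PS response alone; being data-independent and concentrated near $\tau=0$, with a profile fixed entirely by the squared PACF of $\tilde g$, this is exactly the ``iceberg'' that shapes the mainlobe and governs targets near it.

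The remaining two contributions — the second pairing $\delta(m-p)\delta(m'-p')$ and the kurtosis term $(\mu_4-2)\delta(m-m')\delta(m-p)\delta(m-p')$ — form the fluctuation raised by the data randomness. I would show that, away from the mainlobe, these sums collapse to a nearly delay-independent floor scaled by the signal power and by $(\mu_4-2)$ — the ``sea level'' — onto which the periodic structure of the oversampled PS taps superimposes a mild $\tau$-periodic modulation whose period is tied to the oversampling stride $L$ — the ``sea waves''. Cleanly separating these three pieces is the main obstacle: after applying Lemma \ref{Lemma:aver_symbol} the indices remain entangled through the modulo-$NL$ shifts and the stride $L$, so the delicate work is to regroup the surviving sums, recognize which ones close up into the PS filter's own correlation (iceberg) as opposed to a constant floor (sea level), and isolate the residual periodic-in-$\tau$ term (sea waves), likely with the aid of a Dirichlet-kernel $\mathcal{D}_N(\cdot)$ identity and a large-$N$ averaging argument to justify the ``approximately uniform'' sidelobe claim.
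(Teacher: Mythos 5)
There is an important mismatch of expectations here: the paper never proves Lemma~\ref{lemma:Iceberg}. It is stated purely as a \emph{review} of the Iceberg theorem established in \cite{liu2025uncovering}, so there is no in-paper proof to compare your attempt against. That said, your sketch is the right strategy, and it is essentially the same machinery this paper \emph{does} execute for its own pulse-shaped result: in Appendix~\ref{proof:AS_DPAF_pulsed} the authors write the pulse-shaped samples as a periodic convolution with the effective PS response $\mathbf{g}$, expand the squared correlation into the multi-index sum (\ref{eq:AS_DPAF_pulsed_3}), apply the fourth-moment identity of Lemma~\ref{Lemma:aver_symbol}, and regroup into three terms $\mathcal{A}_1$, $\mathcal{A}_2$, $\mathcal{A}_3$ in (\ref{eq:A_1})--(\ref{eq:A_3}); specializing to $\nu=0$ (Corollary~\ref{cor:delayCut_DPAF_AFDM_Pulsed}) recovers exactly the iceberg/sea-level/sea-wave structure you describe, with the first pairing yielding $N^2\left|R_g(\tau)\right|^2$ (the iceberg) and the remaining terms yielding the floor $\sum_{n}\left|R_g\left(\langle\tau-nL\rangle_{NL}\right)\right|^2$ with its $L$-periodic ripples.

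Two caveats on your proposal. First, it is a plan rather than a proof: the regrouping you defer as ``the delicate work'' is precisely where all the content lies, and the paper (for its DPAF analogue) carries it out explicitly via changes of summation variables and Dirichlet-kernel evaluations in (\ref{eq:A_1})--(\ref{eq:A_3}); without that execution the ``approximately uniform'' sea-level claim remains unestablished. Second, a small but real attribution slip: the sea level is \emph{not} scaled by $(\mu_4-2)$. The dominant floor comes from the second pairing $\delta(m-p)\delta(m'-p')$ (the term $\mathcal{A}_3$, which is kurtosis-independent), while the kurtosis enters only the correction term $\mathcal{A}_2$; for sub-Gaussian constellations ($\mu_4<2$) that correction is negative and carves the depressions into the floor rather than setting its height. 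Keeping these two roles separate is essential to the metaphor the lemma is articulating.
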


\section{Closed-Form Expression for Average Squared DPAF of AFDM without PS}

This section investigates the impact of AFDM parameters on the DPAF and compares the DPAFs of different ISAC waveforms. To this end, we first derive a closed-form expression for the average squared DPAF of random AFDM, OFDM, and OCDM signals without PS. Based on this, the impact of AFDM parameters on the DPAF is analyzed. Finally, a DPAF-inspired AFDM parameter design guideline is proposed.

\subsection{Closed-form Expression for DPAFs of AFDM, OFDM and OCDM Waveforms without PS}

\subsubsection{DPAF of AFDM}
For the random AFDM without PS, $L = 1$, $\mathbf{\tilde g} = {[\mathbf{0}_{M}^{{T}},1,\mathbf{0}_{M}^{{T}}]^{{T}}}$, $\mathbf{G}=\mathbf{I}_N$ and ${\mathbf{ x}_{{\rm ps}}} = \mathbf{x}$. As such, the average squared DPAF of AFDM can be rewritten as (\ref{eq:DPAF_AFDM_1}) at the top of next page, where $\bar s _m = s _m{e^{j2\pi {c_2}{m^2}}}$. The closed-form expression for the average squared DPAF of AFDM without PS is shown in the following Proposition.

\begin{figure*}	[htb]
	\vspace*{-10pt} 
	\small
	\begin{flalign} \label{eq:DPAF_AFDM_1}
	&\	\mathbb{E}\left\{ {{{\left| {\chi _{\text{AFDM}} \left( {\tau ,\nu } \right)} \right|}^2}} \right\} = \frac{1}{{{N^2}}}\sum\limits_{n = 0}^{N - 1} {\sum\limits_{n' = 0}^{N - 1} {\sum\limits_{m = 0}^{N - 1} {\sum\limits_{m' = 0}^{N - 1} {\left\{ {\frac{{{e^{ - j2\pi \left( {n' - n - 2N{c_1}\tau  + \nu } \right)}} - 1}}{{{e^{ - j\frac{{2\pi }}{N}\left( {n' - n - 2N{c_1}\tau + \nu } \right)}} - 1}}\frac{{{e^{  j2\pi \left( {m' - m - 2N{c_1}\tau  + \nu } \right)}} - 1}}{{{e^{  j\frac{{2\pi }}{N}\left( {m' - m - 2N{c_1}\tau  + \nu } \right)}} - 1}}} {{e^{ - j\frac{{2\pi }}{N}\left( {m' - n'} \right)\tau}}\mathbb{E}\left\{ {{{\bar s}_m^{*}} \bar s_{m'}^{} \bar s_n^{} {{\bar s}_{n'}^*}} \right\}} \right\} } } } }  .&
	\end{flalign} \normalsize
	\vspace*{-10pt} 
\end{figure*}

\begin{proposition} \label{prop:AS_DPAF_noPS}
	The closed-form expression of the average squared DPAF of AFDM without PS can be expressed as (\ref{eq:DPAF_AFDM_noPulse_fracDop}) at the top of next page.
			\begin{figure*}	[htb]
				\vspace*{-10pt} 
				\small
	\begin{flalign} \label{eq:DPAF_AFDM_noPulse_fracDop}
	&\	\mathbb{E}\left\{ {{{\left| {\chi _{\text{AFDM}} \left( {\tau ,\nu} \right)} \right|}^2}} \right\}  =  \frac{1}{{{N^2}}}{\left| {\frac{{\sin \left[ {\pi \left( {2N{c_1}\tau  - \nu} \right)} \right]}}{{\sin \left[ {{{\pi \left( {2N{c_1}\tau  - \nu} \right)} \mathord{\left/
								{\vphantom {{\pi \left( {2N{c_1}\tau  + \nu} \right)} N}} \right.
								\kern-\nulldelimiterspace} N}} \right]}}} \right|^2}{\left| {\frac{{\sin \left( {\pi \tau} \right)}}{{\sin \left( {{{\pi \tau} \mathord{\left/
								{\vphantom {{\pi \tau} N}} \right.
								\kern-\nulldelimiterspace} N}} \right)}}} \right|^2} 	+ \left( {{\mu _4} - 2} \right)\frac{1}{N}{\left| {\frac{{\sin \left[ {\pi \left( {2N{c_1}\tau  - \nu} \right)} \right]}}{{\sin \left[ {{{\pi \left( {2N{c_1}\tau  - \nu} \right)} \mathord{\left/
								{\vphantom {{\pi \left( {2N{c_1}\tau  - \nu} \right)} N}} \right.
								\kern-\nulldelimiterspace} N}} \right]}}} \right|^2} &\nonumber\\
	&\ \quad\quad\quad\quad\quad\quad\quad\quad\quad\ \ 	+ \frac{1}{N}\sum\limits_{m = 0}^{N - 1} {{{\left| {\frac{{\sin \left[ {\pi \left( {m + 2N{c_1}\tau  - \nu} \right)} \right]}}{{\sin \left[ {{{\pi \left( {m + 2N{c_1}\tau  - \nu} \right)} \mathord{\left/
										{\vphantom {{\pi \left( {m + 2N{c_1}\tau  - \nu} \right)} N}} \right.
										\kern-\nulldelimiterspace} N}} \right]}}} \right|}^2}} .&
	\end{flalign} \normalsize
	\hrule
	\vspace*{-10pt} 
		\end{figure*}
\end{proposition}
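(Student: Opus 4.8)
The plan is to start from the quadruple sum in \eqref{eq:DPAF_AFDM_1} and collapse it using the fourth-moment identity of Lemma \ref{Lemma:aver_symbol}. The first step I would take is to recognize the two ratio factors in \eqref{eq:DPAF_AFDM_1} as phase-shifted Dirichlet kernels: writing $a := 2Nc_1\tau-\nu$, a half-angle extraction gives $\frac{e^{-j2\pi p}-1}{e^{-j\frac{2\pi}{N}p}-1} = e^{-j\pi p(N-1)/N}\mathcal{D}_N(p)$, while the conjugate-type factor equals $e^{+j\pi p(N-1)/N}\mathcal{D}_N(p)$. Consequently, whenever the two factors carry the same argument $p$ their phases cancel and their product is exactly $\mathcal{D}_N(p)^2 = |\mathcal{D}_N(p)|^2$, which is the mechanism that will produce the squared Dirichlet kernels appearing in the target expression \eqref{eq:DPAF_AFDM_noPulse_fracDop}.

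Next I would substitute the three-term decomposition of $\mathbb{E}\{\bar s_m^* \bar s_{m'}\bar s_n \bar s_{n'}^*\}$. Since $\bar s_m = s_m e^{j2\pi c_2 m^2}$, the $c_2$-dependent phase is $e^{-j2\pi c_2(m^2-m'^2-n^2+n'^2)}$, and on each of the three support sets imposed by the Kronecker deltas, namely $\{m=m',\,n=n'\}$, $\{m=n,\,m'=n'\}$, and $\{m=m'=n=n'\}$, this phase collapses to unity. This immediately yields the claimed independence of $c_2$ and reduces the problem to three residual sums, each free of the chirp phases.

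I would then evaluate the three residual sums in turn. For the term $\delta(m-m')\delta(n-n')$ both ratio factors have argument $-a$, contributing the constant $|\mathcal{D}_N(a)|^2$, while the leftover phase $e^{-j\frac{2\pi}{N}(m-n)\tau}$ factors across the free indices into $\big|\sum_{m}e^{-j\frac{2\pi}{N}m\tau}\big|^2 = |\mathcal{D}_N(\tau)|^2$, producing the first term $\frac{1}{N^2}|\mathcal{D}_N(a)|^2|\mathcal{D}_N(\tau)|^2$. For the diagonal term $\delta(m-m')\delta(m-n)\delta(m-n')$ all indices coincide, the leftover phase is unity, and summing the constant $|\mathcal{D}_N(a)|^2$ over the single free index gives the $(\mu_4-2)\frac{1}{N}|\mathcal{D}_N(a)|^2$ contribution. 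For the term $\delta(m-n)\delta(m'-n')$ the surviving factor is $|\mathcal{D}_N(m'-m-a)|^2$ with leftover phase unity, and here I would invoke the $N$-periodicity of $|\mathcal{D}_N(\cdot)|^2$ to argue that the inner sum over $m'$ is independent of $m$, collapsing the double sum to $N\sum_{k=0}^{N-1}|\mathcal{D}_N(k-a)|^2$; evenness of $|\mathcal{D}_N|^2$ together with summation over a full period then lets me replace $k-a$ by $k+a$, matching the third term $\frac{1}{N}\sum_{m=0}^{N-1}|\mathcal{D}_N(m+a)|^2$.

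The main obstacle I anticipate is the bookkeeping in the second ($\delta(m-n)\delta(m'-n')$) term: one must justify the reduction of the two-dimensional sum to a single sum over one period, which relies on shifting the integer index $m'$ and exploiting $|\mathcal{D}_N(x+N)|^2 = |\mathcal{D}_N(x)|^2$, and then on the sign flip $k-a\mapsto k+a$ that is valid only after summing over a complete residue system. The remaining work, namely identifying the geometric-sum factors as Dirichlet kernels and verifying the $c_2$ phase cancellations on each support set, is routine once the Dirichlet-kernel representation is in place.
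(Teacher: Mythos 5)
Your proposal is correct and follows essentially the same route as the paper's Appendix A: it substitutes the fourth-moment identity of Lemma \ref{Lemma:aver_symbol} into (\ref{eq:DPAF_AFDM_1}) (the paper applies the lemma directly to $\bar s_m$ after noting it inherits Assumptions \ref{asp:unit_power}--\ref{asp:rot_sym}, while you equivalently factor out the $c_2$ phase and verify it vanishes on each Kronecker-delta support), and then evaluates the same three residual sums as squared Dirichlet kernels, including the collapse of the double sum in the $\delta(m-n)\delta(m'-n')$ term via periodicity and evenness. The bookkeeping you flag as the main obstacle is precisely what the paper carries out (implicitly) in (\ref{eq:DPAF_AFDM_noPulse_fracDop_2}), so there is no substantive gap.
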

\begin{proof}
	See Appendix A.
\end{proof}
When the normalized Doppler $\nu$ is an integer, the average squared DPAF of AFDM reduces to
\begin{flalign} \label{eq:AS_DPAF_AFDM_4}
&\mathbb{E}\left\{ {{{\left| {\chi _{\text{AFDM}} \left( {\tau ,{\nu}} \right)} \right|}^2}} \right\} = {N^2}\delta \left( {{{\left\langle { 2N{c_1}\tau - {\nu}} \right\rangle }_N}} \right)\delta \left( \tau \right) + N \nonumber \\
& \quad\quad\quad\quad\quad\quad\quad\quad\quad + \left( {{\mu _4} - 2} \right)N\delta \left( {{{\left\langle {2N{c_1}\tau - {\nu} } \right\rangle }_N}} \right). 
\end{flalign} 
This result is consistent with Proposition 1 in our earlier conference version \cite{ni2025AFDM}.


\subsubsection{DPAF of OFDM}
Let $c_1 = 0$, AFDM is equivalent to OFDM\cite{bemani2023affine}. As such, the derived average squared DPAF in (\ref{eq:DPAF_AFDM_noPulse_fracDop}) reduces to
\begin{flalign} \label{eq:AS_DPAF_OFDM_1}
&\	\mathbb{E}\left\{ {{{\left| {\chi _{\text{OFDM}} \left( {\tau ,\nu} \right)} \right|}^2}} \right\} =  \frac{1}{{{N^2}}}{\left| {\frac{{\sin \left( {\pi  {   \nu} } \right)}}{{\sin \left( {{{\pi  {\nu} } \mathord{\left/
							{\vphantom {{\pi \left( {2N{c_1}\tau  + \nu} \right)} N}} \right.
							\kern-\nulldelimiterspace} N}} \right)}}} \right|^2}{\left| {\frac{{\sin \left( {\pi \tau} \right)}}{{\sin \left( {{{\pi \tau} \mathord{\left/
							{\vphantom {{\pi \tau} N}} \right.
							\kern-\nulldelimiterspace} N}} \right)}}} \right|^2} &\nonumber\\
&\	+ \left( {{\mu _4} - 2} \right)\frac{1}{N}{\left| {\frac{{\sin \left( {\pi  { \nu} } \right)}}{{\sin \left( {{{\pi  { \nu} } \mathord{\left/
							{\vphantom {{\pi  { \nu} } N}} \right.
							\kern-\nulldelimiterspace} N}} \right)}}} \right|^2}  \hspace{-0.15cm} + \hspace{-0.10cm} \frac{1}{N}\sum\limits_{m = 0}^{N - 1} {{{\left| {\frac{{\sin \left[ {\pi \left( {m -  \nu} \right)} \right]}}{{\sin \left[ {{{\pi \left( {m -  \nu} \right)} \mathord{\left/
									{\vphantom {{\pi \left( {m -  \nu} \right)} N}} \right.
									\kern-\nulldelimiterspace} N}} \right]}}} \right|}^2}} ,&
\end{flalign} 
which is the closed-form expression for the average squared DPAF of OFDM without PS. 

\subsubsection{DPAF of OCDM}
According to\cite{bemani2023affine}, OCDM uses $c_1 = c_2 = {1 \mathord{\left/
		{\vphantom {1 \left(2N\right)}} \right.
		\kern-\nulldelimiterspace} \left(2N\right)}$, i.e., $2Nc_1=1$. Consequently, by applying (\ref{eq:DPAF_AFDM_noPulse_fracDop}), the closed-form expression for the average squared DPAF of OCDM without PS is given by 
\begin{flalign} \label{eq:DPAF_OCDM_noPulse_fracDop}
&\	\mathbb{E}\left\{ {{{\left| {\chi _{\text{OCDM}} \left( {\tau ,\nu} \right)} \right|}^2}} \right\} \hspace{-0.15cm} =  \hspace{-0.15cm} \frac{1}{{{N^2}}}{\left| {\frac{{\sin \left[ {\pi \left( {\tau  - \nu} \right)} \right]}}{{\sin \left[ {{{\pi \left( {\tau  - \nu} \right)} \mathord{\left/
							{\vphantom {{\pi \left( {\tau  - \nu} \right)} N}} \right.
							\kern-\nulldelimiterspace} N}} \right]}}} \right|^2}{\left| {\frac{{\sin \left( {\pi \tau} \right)}}{{\sin \left( {{{\pi \tau} \mathord{\left/
							{\vphantom {{\pi \tau} N}} \right.
							\kern-\nulldelimiterspace} N}} \right)}}} \right|^2} &\nonumber\\
&\	\quad\quad\quad\quad\quad\quad	+ \left( {{\mu _4} - 2} \right)\frac{1}{N}{\left| {\frac{{\sin \left[ {\pi \left( {\tau  - \nu} \right)} \right]}}{{\sin \left[ {{{\pi \left( {\tau  - \nu} \right)} \mathord{\left/
							{\vphantom {{\pi \left( {\tau  - \nu} \right)} N}} \right.
							\kern-\nulldelimiterspace} N}} \right]}}} \right|^2} &\nonumber\\
&\ \quad\quad\quad\quad\quad\quad	+ \frac{1}{N}\sum\limits_{m = 0}^{N - 1} {{{\left| {\frac{{\sin \left[ {\pi \left( {m + \tau  - \nu} \right)} \right]}}{{\sin \left[ {{{\pi \left( {m + \tau  - \nu} \right)} \mathord{\left/
									{\vphantom {{\pi \left( {m + \tau  - \nu} \right)} N}} \right.
									\kern-\nulldelimiterspace} N}} \right]}}} \right|}^2}} .&
\end{flalign} 

\subsection{Analyses on DPAFs of AFDM, OFDM and OCDM} \label{sec:Analy_DPAF}

\subsubsection{Mainlobes of DPAFs of AFDM, OFDM and OCDM}


\begin{corollary} \label{cor:mainlobe_DPAF_AFDM}
	The average squared DPAFs of AFDM, OFDM and OCDM waveforms without PS have the same mainlobe level, which is given by
	\begin{align}\label{eq:mainlobe_DPAF_AFDM}
	\mathbb{E}\left\{ {{{\left| {\chi \left( {0,0} \right)} \right|}^2}} \right\} = {N^2} + \left( {{\mu _4} - 1} \right)N.
	\end{align}	
\end{corollary}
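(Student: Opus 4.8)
The plan is to obtain the mainlobe level by directly substituting $\tau = 0$ and $\nu = 0$ into the closed-form expression for the average squared DPAF of AFDM in (\ref{eq:DPAF_AFDM_noPulse_fracDop}), and then to argue that the OFDM and OCDM cases follow immediately because their DPAFs in (\ref{eq:AS_DPAF_OFDM_1}) and (\ref{eq:DPAF_OCDM_noPulse_fracDop}) are just the specializations $c_1 = 0$ and $2Nc_1 = 1$ of the same formula. Thus a single evaluation at the origin suffices for all three waveforms.

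The key observation is that every bracketed ratio in (\ref{eq:DPAF_AFDM_noPulse_fracDop}) is a Dirichlet kernel $\mathcal{D}_N(x) = \sin(\pi x)/\sin(\pi x/N)$. First I would record the two elementary facts needed: $\mathcal{D}_N(0) = N$, obtained as the limit $\lim_{x\to 0}\sin(\pi x)/\sin(\pi x/N) = N$ from $\sin(\pi x)\approx \pi x$ and $\sin(\pi x/N)\approx \pi x/N$; and $\mathcal{D}_N(m) = 0$ for every integer $m$ with $1 \le m \le N-1$, since the numerator vanishes while the denominator does not. These handle the $0/0$ indeterminacies that appear at the origin.

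Next I would evaluate the three contributions in order. Setting $\tau = \nu = 0$ makes the argument $2Nc_1\tau - \nu$ equal to zero, so the first term becomes $\frac{1}{N^2}\lvert \mathcal{D}_N(0)\rvert^2 \lvert \mathcal{D}_N(0)\rvert^2 = \frac{1}{N^2}\cdot N^2 \cdot N^2 = N^2$. The second term becomes $(\mu_4 - 2)\frac{1}{N}\lvert \mathcal{D}_N(0)\rvert^2 = (\mu_4 - 2)N$. For the sum in the third term, the argument reduces to the integer $m$, so only the $m = 0$ summand survives, contributing $\lvert \mathcal{D}_N(0)\rvert^2 = N^2$, while the summands $m = 1, \dots, N-1$ vanish by $\mathcal{D}_N(m) = 0$; hence the third term equals $\frac{1}{N}\cdot N^2 = N$. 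Adding the three pieces gives $N^2 + (\mu_4 - 2)N + N = N^2 + (\mu_4 - 1)N$, which is exactly (\ref{eq:mainlobe_DPAF_AFDM}).

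Finally, I would note that the value just computed is manifestly independent of $c_1$: at $\tau = \nu = 0$ the parameter $c_1$ enters (\ref{eq:DPAF_AFDM_noPulse_fracDop}) only through the product $2Nc_1\tau$, which vanishes. Consequently the identical mainlobe level is shared by OFDM ($c_1 = 0$) and OCDM ($2Nc_1 = 1$), establishing the claim for all three waveforms simultaneously. No real obstacle is anticipated; the only point requiring care is the consistent use of the Dirichlet-kernel limit $\mathcal{D}_N(0) = N$ to resolve the $0/0$ forms, rather than naively reading off $\sin(0)/\sin(0)$.
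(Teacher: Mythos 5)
Your proposal is correct and takes essentially the same route as the paper: the paper's proof is exactly the substitution of $\tau=0$, $\nu=0$ into (\ref{eq:DPAF_AFDM_noPulse_fracDop}), (\ref{eq:AS_DPAF_OFDM_1}) and (\ref{eq:DPAF_OCDM_noPulse_fracDop}), and you simply make the Dirichlet-kernel evaluations $\mathcal{D}_N(0)=N$, $\mathcal{D}_N(m)=0$ explicit. Your small economy of evaluating only the AFDM formula and invoking $c_1$-independence at the origin is legitimate, since the paper itself obtains the OFDM and OCDM expressions as the specializations $c_1=0$ and $2Nc_1=1$.
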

\begin{proof}
	Substituting $\tau=0$, and $\nu=0$ into (\ref{eq:DPAF_AFDM_noPulse_fracDop}), (\ref{eq:AS_DPAF_OFDM_1}) and (\ref{eq:DPAF_OCDM_noPulse_fracDop}) immediately yields (\ref{eq:mainlobe_DPAF_AFDM}), thereby completing the proof.
\end{proof}

\subsubsection{Sidelobes of DPAFs of AFDM, OFDM and OCDM}

\begin{corollary} \label{cor:sidelobe_DPAF_AFDM}
	The average sidelobe level of the squared DPAF of AFDM without PS is expressed as 	
	\begin{align}\label{eq:sidelobe_DPAF_AFDM}
	&\mathbb{E}{\left\{ {{{\left| {\chi _{\text{AFDM}} \left( {\tau,{\nu}} \right)} \right|}^2}} \right\}_{\scriptstyle\hfill\tau  \ne 0\atop
			\scriptstyle\hfill{\rm{or}} \ \nu  \ne 0}}  \nonumber\\
	&	=	 \left( {{\mu _4} - 2} \right)\frac{1}{N}{\left| {\frac{{\sin \left[ {\pi \left( {2N{c_1}\tau  - \nu} \right)} \right]}}{{\sin \left[ {{{\pi \left( {2N{c_1}\tau  - \nu} \right)} \mathord{\left/
								{\vphantom {{\pi \left( {2N{c_1}\tau  - \nu} \right)} N}} \right.
								\kern-\nulldelimiterspace} N}} \right]}}} \right|^2} \nonumber\\
	& \quad	+ \frac{1}{N}\sum\limits_{m = 0}^{N - 1} {{{\left| {\frac{{\sin \left[ {\pi \left( {m + 2N{c_1}\tau  - \nu} \right)} \right]}}{{\sin \left[ {{{\pi \left( {m + 2N{c_1}\tau  - \nu} \right)} \mathord{\left/
										{\vphantom {{\pi \left( {m + 2N{c_1}\tau  - \nu} \right)} N}} \right.
										\kern-\nulldelimiterspace} N}} \right]}}} \right|}^2}} .
	\end{align}
	When the normalized Doppler $\nu$ is an integer, the average sidelobe level of the DPAF of AFDM without PS reduces to 
	\begin{align}\label{eq:sidelobe_DPAF_AFDM_intDop}
	\mathbb{E}{\left\{ {{{\left| {\chi _{\text{AFDM}} \left( {\tau,{\nu}} \right)} \right|}^2}} \right\}_{\scriptstyle\hfill\tau  \ne 0\atop
			\scriptstyle\hfill{\rm{or}} \ \nu  \ne 0}} =  \left( {{\mu _4} - 2} \right)N\delta \left( 2Nc_1 \tau - \nu \right) + N.
	\end{align} 
\end{corollary}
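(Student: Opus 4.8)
The plan is to derive both displays directly from the closed-form expression of Proposition~\ref{prop:AS_DPAF_noPS}, i.e., (\ref{eq:DPAF_AFDM_noPulse_fracDop}), by exploiting the fact that the normalized delay $\tau$ is an integer and that $2Nc_1$ is an integer by assumption, together with the localization property of the Dirichlet kernel $\mathcal{D}_N(\cdot)$. First I would recall that for any integer argument $k$ one has $\mathcal{D}_N(k)=N$ when $k\equiv 0 \pmod N$ and $\mathcal{D}_N(k)=0$ otherwise, so that $|\mathcal{D}_N(k)|^2=N^2\delta(\langle k\rangle_N)$. This single observation drives the entire proof.

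For the general-$\nu$ sidelobe expression (\ref{eq:sidelobe_DPAF_AFDM}), the key step is to isolate the first (mainlobe) term of (\ref{eq:DPAF_AFDM_noPulse_fracDop}), which carries the factor $\left|\sin(\pi\tau)/\sin(\pi\tau/N)\right|^2=|\mathcal{D}_N(\tau)|^2$. Since $\tau$ is an integer, this factor equals $N^2$ only at $\tau\equiv 0$ and vanishes for every nonzero integer delay. Hence this term is confined to the zero-delay axis and constitutes the mainlobe peak; evaluating (\ref{eq:DPAF_AFDM_noPulse_fracDop}) in the sidelobe region (away from the peak) discards it and leaves exactly the second and third terms, which is the claimed (\ref{eq:sidelobe_DPAF_AFDM}).

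To specialize to integer $\nu$ and obtain (\ref{eq:sidelobe_DPAF_AFDM_intDop}), I would apply the localization property to the two surviving terms, noting that $2Nc_1\tau-\nu$ and $m+2Nc_1\tau-\nu$ are integers because $2Nc_1$, $\tau$, and $\nu$ are all integers. The second term then reduces via $\tfrac{1}{N}|\mathcal{D}_N(2Nc_1\tau-\nu)|^2=\tfrac{1}{N}N^2\delta(\langle 2Nc_1\tau-\nu\rangle_N)=N\delta(2Nc_1\tau-\nu)$, yielding the $(\mu_4-2)N\delta(2Nc_1\tau-\nu)$ contribution. For the third term, as $m$ runs over $0,\dots,N-1$ the shifted argument $m+2Nc_1\tau-\nu$ traverses a complete residue system modulo $N$, so precisely one index makes it a multiple of $N$ and contributes $N^2$ while all others vanish; thus $\tfrac{1}{N}\sum_m|\mathcal{D}_N(m+2Nc_1\tau-\nu)|^2=\tfrac{1}{N}N^2=N$. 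Adding the two pieces gives (\ref{eq:sidelobe_DPAF_AFDM_intDop}). As a consistency check, I would verify that restoring the mainlobe term reproduces (\ref{eq:AS_DPAF_AFDM_4}).

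The main obstacle I anticipate is the careful modular bookkeeping rather than any deep idea: one must rigorously justify the delta-localization of each Dirichlet kernel under the integer-delay hypothesis, confirm that $2Nc_1\tau-\nu\in\mathbb{Z}$ so the kernels are sampled exactly at integers, and argue that the sum over $m$ picks up one and only one nonzero term. A secondary subtlety is the treatment of the zero-delay Doppler cut $\tau=0,\ \nu\neq 0$, where the mainlobe factor $|\mathcal{D}_N(\tau)|^2$ no longer annihilates the first term; this edge case must be reconciled with the ``sidelobe'' convention being used.
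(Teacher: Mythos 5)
Your proof is correct and follows essentially the same route as the paper's: substitute the closed form of Proposition \ref{prop:AS_DPAF_noPS} into the sidelobe region so that the mainlobe term carrying $\left|\mathcal{D}_N(\tau)\right|^2$ drops out, then use the integer-argument localization of the Dirichlet kernel, with the complete-residue-system observation giving $\frac{1}{N}\sum_{m=0}^{N-1}\left|\mathcal{D}_N\left(m+2Nc_1\tau-\nu\right)\right|^2 = N$, to obtain the integer-Doppler reduction. If anything, your write-up is more careful than the paper's own two-line proof: you make the residue-system count explicit, and you correctly flag the edge case $\tau=0,\ \nu\notin\mathbb{Z}$, where the first term of (\ref{eq:DPAF_AFDM_noPulse_fracDop}) does not vanish and the stated sidelobe formula requires the paper's implicit convention on what counts as mainlobe.
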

\begin{proof}
	Substituting $\tau\ne0$ or $\nu\ne0$ into (\ref{eq:AS_DPAF_AFDM_4}), (\ref{eq:sidelobe_DPAF_AFDM}) is immediately obtained. Based on this, when $\nu$ is an integer, we have
	\begin{align}
	{\frac{{\sin \left[ {\pi \left( {2N{c_1}\tau  - \nu} \right)} \right]}}{{\sin \left[ {{{\pi \left( {2N{c_1}\tau  - \nu} \right)} \mathord{\left/
							{\vphantom {{\pi \left( {2N{c_1}\tau  - \nu} \right)} N}} \right.
							\kern-\nulldelimiterspace} N}} \right]}}} = N \delta \left( 2Nc_1 \tau - \nu \right), \\
	\sum\nolimits_{m = 0}^{N - 1} {{{\left| {\frac{{\sin \left[ {\pi \left( {m + 2N{c_1}\tau  - \nu} \right)} \right]}}{{\sin \left[ {{{\pi \left( {m + 2N{c_1}\tau  - \nu} \right)} \mathord{\left/
										{\vphantom {{\pi \left( {m + 2N{c_1}\tau  - \nu} \right)} N}} \right.
										\kern-\nulldelimiterspace} N}} \right]}}} \right|}^2}} = N^2.
	\end{align}
	Consequently, we can obtain (\ref{eq:sidelobe_DPAF_AFDM_intDop}), completing the proof.
\end{proof}

With Proposition \ref{prop:AS_DPAF_noPS} and Corollaries \ref{cor:mainlobe_DPAF_AFDM} and \ref{cor:sidelobe_DPAF_AFDM} at hand, one may observe that the average squared DPAF of AFDM is independent of the parameter $c_2$. In contrary, the parameter $c_1$ influences the sidelobes of the squared DPAF of AFDM. Moreover, the kurtosis of communication data, i.e., $\mu _4$, contributes to both the mainlobe and the sidelobes. Next, we present an in-depth and systematic examination of the impacts of the parameter $c_1$ and $\mu _4$ on the sidelobes. While this examination focuses on the case of integer normalized Doppler, the conclusions also hold for the case of fractional normalized Doppler, as verified by example \ref{ex:th_DPAF_AFDM}.

\begin{proposition} \label{prop:solution}
	When $2Nc_1$ is an integer and a sub-Gaussian constellation is employed, in the case of integer normalized Doppler, the average sidelobes of squared DPAF of AFDM without PS take only two possible values, i.e., $N$ (corresponding to the ``sea level'' following Lemma \ref{lemma:Iceberg}) and $\left( {{\mu _4} - 1} \right)N$ (referred to as the depression), according to (\ref{eq:sidelobe_DPAF_AFDM_intDop}). Moreover, the number of depressions is $N-1$, whose positions in the delay-Doppler plane are given by $\left(\tau,{{\left\langle 2N{c_1}\tau \right\rangle }_N}\right), \tau\in\left[1,N-1\right]$.	The delay gap and Doppler gap between adjacent depressions are equal to $\frac{N}{2Nc_1}$ and $2Nc_1$, respectively. An illustrative example is shown in Fig. \ref{fg:example_th_DPAF_AFDM}(a) within Example \ref{ex:th_DPAF_AFDM}. 
\end{proposition}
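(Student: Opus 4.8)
The plan is to read every assertion off the integer-Doppler sidelobe formula (\ref{eq:sidelobe_DPAF_AFDM_intDop}), which in the sidelobe regime reduces the average squared value to the two-term expression $N + (\mu_4-2)N\,\delta(\langle 2Nc_1\tau - \nu\rangle_N)$. First I would establish the two-value dichotomy. The factor $\delta(\langle 2Nc_1\tau-\nu\rangle_N)$ is an indicator equal to $1$ exactly when $2Nc_1\tau \equiv \nu \pmod N$ and $0$ otherwise. Hence the sidelobe equals $N$ off this set, which is the uniform ``sea level'' of Lemma \ref{lemma:Iceberg}, and equals $N + (\mu_4-2)N = (\mu_4-1)N$ on it. Because the analysis is restricted to sub-Gaussian constellations with $\mu_4 < 2$, we have $(\mu_4-1)N < N$, so the second value lies strictly below the sea level and is legitimately a depression; this simultaneously shows that no third value can occur.

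Next I would count and locate the depressions. Working in one fundamental period $(\tau,\nu)\in[0,N-1]^2$, the congruence $2Nc_1\tau\equiv\nu\pmod N$ assigns to each delay $\tau$ a unique Doppler $\nu=\langle 2Nc_1\tau\rangle_N$, where integrality of $2Nc_1$ is used so that this map is well defined on the integer grid. This produces exactly $N$ points $(\tau,\langle 2Nc_1\tau\rangle_N)$, of which the single point with $\tau=0$ is the mainlobe $(0,0)$, excluded from the sidelobe region. Removing it leaves $N-1$ depressions at $(\tau,\langle 2Nc_1\tau\rangle_N)$ for $\tau\in[1,N-1]$, as claimed.

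Finally I would extract the spacing. Since incrementing $\tau$ by one advances the depression Doppler by $2Nc_1$ modulo $N$, the occupied Doppler levels are spaced $2Nc_1$ apart, giving the Doppler gap $2Nc_1$; and for a fixed occupied Doppler level the defining congruence $2Nc_1(\tau-\tau')\equiv 0\pmod N$ forces $\tau-\tau'$ to be a multiple of $N/(2Nc_1)$, giving the delay gap $N/(2Nc_1)$. The main obstacle I anticipate is precisely this last step: the wrap-around modulo $N$ means the clean gaps $2Nc_1$ and $N/(2Nc_1)$ are exact only when $2Nc_1$ divides $N$, whereas in general the occupied levels and delay spacings are governed by $\gcd(2Nc_1,N)$. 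I would therefore either invoke the natural divisibility condition on $2Nc_1$, or phrase the gaps through the generating lattice of $\{(\tau,\nu):\nu\equiv 2Nc_1\tau\ (\mathrm{mod}\ N)\}$, spanned by $(1,2Nc_1)$ and $(0,N)$, which makes the slope $2Nc_1$ and the delay period $N/(2Nc_1)$ transparent. Consistency with the fractional-Doppler formula (\ref{eq:DPAF_AFDM_noPulse_fracDop}) would then be checked against Example \ref{ex:th_DPAF_AFDM}.
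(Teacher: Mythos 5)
Your proposal is correct, and its core---reading the two sidelobe values off (\ref{eq:sidelobe_DPAF_AFDM_intDop}) and counting solutions of the congruence $2Nc_1\tau \equiv \nu \pmod N$ (exactly one Doppler $\nu=\langle 2Nc_1\tau\rangle_N$ per delay $\tau$, hence $N$ solutions, minus the mainlobe at $(0,0)$, leaving $N-1$ depressions)---is exactly the argument in the paper's Appendix B. Where you go further is the spacing claims: the paper's proof stops at the count and the positions and never argues the delay/Doppler gaps at all, whereas you attempt them and correctly flag the hidden assumption. Your caveat is real: the stated gaps, $N/(2Nc_1)$ in delay and $2Nc_1$ in Doppler, are exact only when $2Nc_1$ divides $N$; in general they are governed by $\gcd(2Nc_1,N)$. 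For instance, with $N=10$ and $2Nc_1=4$, the occupied Doppler levels are $\{0,2,4,6,8\}$ with spacing $\gcd(2Nc_1,N)=2$ rather than $4$, and the depressions on a fixed level, e.g.\ $\nu=4$ at $\tau\in\{1,6\}$, are spaced $N/\gcd(2Nc_1,N)=5$ apart rather than $N/(2Nc_1)=2.5$ (which is not even an integer). The paper's illustrative case ($N=128$, $2Nc_1=8$) satisfies $2Nc_1\mid N$, which is why the discrepancy is invisible there. So your proof is not only consistent with the paper's but strictly more careful: it supplies the justification of the gap claims that the paper omits, and it identifies the precise condition ($2Nc_1 \mid N$, or equivalently a restatement of the gaps via $\gcd(2Nc_1,N)$) under which the proposition's spacing statement is literally true.
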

\begin{proof}
	See Appendex \ref{proof_prop_solution}.
\end{proof}

Similarly, the sidelobes of the squared DPAFs of OFDM and OCDM waveforms can be obtained as follows.

\begin{corollary} \label{cor:sidelobe_DPAF_OFDM}
	The average sidelobe level of the squared DPAF of OFDM without PS is 	
	\begin{align}\label{eq:sidelobe_DPAF_OFDM}
	&\mathbb{E}{\left\{ {{{\left| {\chi _{\text{OFDM}} \left( {\tau,{\nu}} \right)} \right|}^2}} \right\}_{\scriptstyle\hfill\tau  \ne 0\atop
			\scriptstyle\hfill{\rm{or}} \ \nu  \ne 0}}  	=	 \left( {{\mu _4} - 2} \right)\frac{1}{N}{\left| {\frac{{\sin \left( {\pi {\nu}} \right)}}{{\sin \left( {{{\pi {\nu}} \mathord{\left/
								{\vphantom {{\pi \left( {\nu} \right)} N}} \right.
								\kern-\nulldelimiterspace} N}} \right)}}} \right|^2} \nonumber\\
	& \quad	+ \frac{1}{N}\sum\limits_{m = 0}^{N - 1} {{{\left| {\frac{{\sin \left[ {\pi \left( {m - \nu} \right)} \right]}}{{\sin \left[ {{{\pi \left( {m - \nu} \right)} \mathord{\left/
										{\vphantom {{\pi \left( {m - \nu} \right)} N}} \right.
										\kern-\nulldelimiterspace} N}} \right]}}} \right|}^2}} .
	\end{align}
	When the normalized Doppler $\nu$ is an integer, the average sidelobe level of the squared DPAF of OFDM without PS reduces to 	
	\begin{align}\label{eq:sidelobe_DPAF_OFDM_intDop}
	\mathbb{E}{\left\{ {{{\left| {\chi _{\text{OFDM}} \left( {\tau,{\nu}} \right)} \right|}^2}} \right\}_{\scriptstyle\hfill\tau  \ne 0\atop
			\scriptstyle\hfill{\rm{or}} \ \nu  \ne 0}} =  \left( {{\mu _4} - 2} \right)N\delta \left( \nu \right) + N.
	\end{align} 
\end{corollary}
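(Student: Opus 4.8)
The plan is to obtain Corollary~\ref{cor:sidelobe_DPAF_OFDM} as a direct specialization of the already-established AFDM sidelobe result, exploiting the fact that OFDM is AFDM with $c_1 = 0$. First I would take the general AFDM sidelobe expression in (\ref{eq:sidelobe_DPAF_AFDM}) and substitute $c_1 = 0$, so that every occurrence of $2N c_1 \tau - \nu$ collapses to $-\nu$ and $m + 2N c_1 \tau - \nu$ collapses to $m - \nu$. Since the squared Dirichlet kernel $|\mathcal{D}_N(x)|^2 = |\sin(\pi x)/\sin(\pi x/N)|^2$ is even in $x$, we have $|\mathcal{D}_N(-\nu)|^2 = |\mathcal{D}_N(\nu)|^2$, and the two surviving terms immediately reproduce (\ref{eq:sidelobe_DPAF_OFDM}). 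Equivalently, one may start from the full OFDM DPAF (\ref{eq:AS_DPAF_OFDM_1}) and drop the leading mainlobe product $\frac{1}{N^2}|\mathcal{D}_N(\nu)|^2|\mathcal{D}_N(\tau)|^2$, which vanishes on the sidelobe region because $|\mathcal{D}_N(\tau)|^2 = N^2 \delta(\tau)$ for integer delay $\tau$.

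The second part requires evaluating the two remaining terms at integer normalized Doppler $\nu$. The key facts to invoke are the standard values of the Dirichlet kernel at integer arguments: $\mathcal{D}_N(\nu) = N$ when $\nu \equiv 0 \ (\mathrm{mod}\ N)$ (via the limit $\lim_{x\to 0}\sin(\pi x)/\sin(\pi x/N) = N$) and $\mathcal{D}_N(\nu) = 0$ for every other integer $\nu$, so that $|\mathcal{D}_N(\nu)|^2 = N^2\delta\left(\left\langle \nu \right\rangle_N\right)$. Substituting this into the first term of (\ref{eq:sidelobe_DPAF_OFDM}) yields $(\mu_4 - 2)\frac{1}{N}N^2\delta(\nu) = (\mu_4 - 2)N\delta(\nu)$.

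For the summation term, I would argue that as $m$ runs over a complete residue system $\{0,\dots,N-1\}$ modulo $N$ and $\nu$ is an integer, the shifted arguments $m - \nu$ again form a complete residue system, so exactly one index satisfies $m \equiv \nu \ (\mathrm{mod}\ N)$. For that single index $|\mathcal{D}_N(m-\nu)|^2 = N^2$, while every other term vanishes; hence $\sum_{m=0}^{N-1}|\mathcal{D}_N(m-\nu)|^2 = N^2$ and the term contributes $\frac{1}{N}N^2 = N$. Adding the two contributions gives $(\mu_4 - 2)N\delta(\nu) + N$, which is exactly (\ref{eq:sidelobe_DPAF_OFDM_intDop}).

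Since the entire derivation is an algebraic specialization of a formula proven earlier, there is no genuinely hard analytic step. The only point demanding care is the correct limiting evaluation of the Dirichlet kernel at integer arguments (to recover the $N$ and $N^2$ factors) together with the bookkeeping that summing over one full period isolates a single nonvanishing term; I expect no further obstacle.
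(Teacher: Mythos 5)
Your proposal is correct and is essentially the paper's own (implicit) argument: the paper states that OFDM is AFDM with $c_1=0$ and obtains this corollary ``similarly'' to Corollary \ref{cor:sidelobe_DPAF_AFDM}, i.e., by specializing the AFDM sidelobe expression at $c_1=0$ and then evaluating the Dirichlet kernels at integer Doppler exactly as you do, with the single surviving term $N^2$ in the sum over the complete residue system. Your evenness observation $\left|\mathcal{D}_N(-\nu)\right|^2=\left|\mathcal{D}_N(\nu)\right|^2$ and the limiting values $\left|\mathcal{D}_N(\nu)\right|^2=N^2\delta\left(\left\langle \nu\right\rangle_N\right)$ are precisely the bookkeeping the paper leaves unstated, so there is no gap.
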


\begin{corollary} \label{cor:sidelobe_DPAF_OCDM}
	The average sidelobe level of the squared DPAF of OCDM without PS is 	
	\begin{align}\label{eq:sidelobe_DPAF_OCDM}
	&\mathbb{E}{\left\{ {{{\left| {\chi _{\text{OCDM}} \left( {\tau,{\nu}} \right)} \right|}^2}} \right\}_{\scriptstyle\hfill\tau  \ne 0\atop
			\scriptstyle\hfill{\rm{or}} \ \nu  \ne 0}}  	=	 \left( {{\mu _4} - 2} \right)\frac{1}{N}{\left| {\frac{{\sin \left[ {\pi \left( {\tau  - \nu} \right)} \right]}}{{\sin \left[ {{{\pi \left( {\tau  - \nu} \right)} \mathord{\left/
								{\vphantom {{\pi \left( {\tau  - \nu} \right)} N}} \right.
								\kern-\nulldelimiterspace} N}} \right]}}} \right|^2} \nonumber\\
	& \quad	+ \frac{1}{N}\sum\limits_{m = 0}^{N - 1} {{{\left| {\frac{{\sin \left[ {\pi \left( {m + \tau  - \nu} \right)} \right]}}{{\sin \left[ {{{\pi \left( {m + \tau  - \nu} \right)} \mathord{\left/
										{\vphantom {{\pi \left( {m + \tau  - \nu} \right)} N}} \right.
										\kern-\nulldelimiterspace} N}} \right]}}} \right|}^2}} .
	\end{align}
	When the normalized Doppler $\nu$ is an integer, the average sidelobe level of the squared DPAF of OCDM without PS reduces to 
	\begin{align}\label{eq:sidelobe_DPAF_OCDM_intDop}
	\mathbb{E}{\left\{ {{{\left| {\chi _{\text{OCDM}} \left( {\tau,{\nu}} \right)} \right|}^2}} \right\}_{\scriptstyle\hfill\tau  \ne 0\atop
			\scriptstyle\hfill{\rm{or}} \ \nu  \ne 0}} =  \left( {{\mu _4} - 2} \right)N\delta \left(  \tau - \nu \right) + N.
	\end{align} 
\end{corollary}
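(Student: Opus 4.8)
The plan is to obtain Corollary~\ref{cor:sidelobe_DPAF_OCDM} as a direct specialization of the AFDM sidelobe analysis, exploiting the fact that OCDM is precisely the instance $2Nc_1 = 1$ of the general AFDM formula. Starting from the closed-form OCDM DPAF in (\ref{eq:DPAF_OCDM_noPulse_fracDop}) --- itself obtained by substituting $2Nc_1 = 1$ into (\ref{eq:DPAF_AFDM_noPulse_fracDop}) of Proposition~\ref{prop:AS_DPAF_noPS} --- I would separate the three additive contributions: a mainlobe term $\frac{1}{N^2}|\mathcal{D}_N(\tau-\nu)|^2\,|\mathcal{D}_N(\tau)|^2$, a kurtosis-weighted term $(\mu_4-2)\frac{1}{N}|\mathcal{D}_N(\tau-\nu)|^2$, and the summed term $\frac{1}{N}\sum_{m=0}^{N-1}|\mathcal{D}_N(m+\tau-\nu)|^2$. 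The first step is to argue that the mainlobe term drops out of the sidelobe region, leaving exactly the two terms displayed in (\ref{eq:sidelobe_DPAF_OCDM}).

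To justify dropping the mainlobe term, I would use that the delay index $\tau$ is an integer: the factor $|\mathcal{D}_N(\tau)|^2$ vanishes for every $\tau\not\equiv 0\pmod N$ and equals $N^2$ only at $\tau\equiv 0$. Hence on the sidelobe region $\tau\ne 0$ the mainlobe term is identically zero, and for the zero-delay cut the complementary condition $\nu\ne 0$ together with the integer-Doppler evaluation of $\mathcal{D}_N(\tau-\nu)$ forces it to vanish as well. This mirrors the argument used for Corollary~\ref{cor:sidelobe_DPAF_AFDM} and immediately yields (\ref{eq:sidelobe_DPAF_OCDM}).

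The second step specializes (\ref{eq:sidelobe_DPAF_OCDM}) to integer normalized Doppler. When $\nu$ is an integer, $\tau-\nu$ is an integer and the two remaining Dirichlet contributions collapse: $|\mathcal{D}_N(\tau-\nu)|^2 = N^2\,\delta(\langle\tau-\nu\rangle_N)$, so the kurtosis term becomes $(\mu_4-2)N\,\delta(\tau-\nu)$ after the modulo reduction over the fundamental range; and in the sum $\sum_{m=0}^{N-1}|\mathcal{D}_N(m+\tau-\nu)|^2$ exactly one index $m$ satisfies $m+\tau-\nu\equiv 0\pmod N$, contributing $N^2$ while all others vanish, so this term reduces to $N$. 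Adding the two gives $(\mu_4-2)N\,\delta(\tau-\nu)+N$, which is (\ref{eq:sidelobe_DPAF_OCDM_intDop}).

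I do not expect a genuine obstacle here, since the result is a specialization rather than a new estimate; the only points requiring care are the evaluation of the Dirichlet kernel at integer arguments (its collapse to scaled Kronecker deltas) and the bookkeeping of the modulo-$N$ reduction so that $\delta(\langle\tau-\nu\rangle_N)$ may be replaced by $\delta(\tau-\nu)$ on the admissible range of $\tau$ and $\nu$. The structural consequence --- that the OCDM depressions lie on the diagonal $\tau=\nu$, consistent with setting $2Nc_1=1$ in the depression locus $\langle 2Nc_1\tau\rangle_N$ of Proposition~\ref{prop:solution} --- would then follow for free.
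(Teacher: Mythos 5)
Your proposal is correct and takes essentially the same route as the paper: the paper states this corollary without a separate proof, presenting it as the direct analogue of Corollary~\ref{cor:sidelobe_DPAF_AFDM}, whose proof likewise specializes the Proposition~\ref{prop:AS_DPAF_noPS} formula (here with $2Nc_1=1$), drops the mainlobe term on the sidelobe region, and collapses the Dirichlet kernels to scaled Kronecker deltas at integer Doppler. Your added care about the $\tau=0$, fractional-$\nu$ cut and about reducing $\delta\left({\left\langle \tau-\nu \right\rangle _N}\right)$ to $\delta\left(\tau-\nu\right)$ on the admissible index range matches, and if anything slightly sharpens, the paper's implicit argument.
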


{\bf{Comparison among DPAFs of AFDM, OFDM and OCDM}}: According to Corollaries \ref{cor:sidelobe_DPAF_AFDM} to \ref{cor:sidelobe_DPAF_OCDM}, we note that there are $N-1$ depressions in the sidelobes of the average squared DPAF for all three waveforms, which are however positioned at different delays and Doppler indices for AFDM, OFDM and OCDM. Specifically, the indices of depressions are fixed for OFDM and OCDM. For the DPAF of OFDM, all $N-1$ depressions are positioned in the delay cut, i.e., the indices $\left(\tau,0\right)$, $\tau\in \left[1,N-1\right]$. For the DPAF of OCDM, the $N-1$ depressions are positioned at the indices $\left(\tau,\tau\right)$, $\tau\in \left[1,N-1\right]$. For the DPAF of AFDM, the $N-1$ depressions are located in the 2D delay-Doppler plane with the indices $\left(\tau,{{\left\langle 2N{c_1}\tau \right\rangle }_N}\right), \tau\in\left[1,N-1\right]$, which can be flexibly adjusted by properly tuning the parameter $c_1$. 

To intuitively illustrate the DPAFs of AFDM, OFDM and OCDM, the following example is presented for the case of fractional normalized Doppler.
\begin{example}\label{ex:th_DPAF_AFDM}
	When $N=128$, 16-QAM is used, and $2Nc_1 = 8$ for AFDM, the 2D contour plots of derived theoretical squared DPAFs of AFDM, OFDM and OCDM waveforms for the case of fractional normalized Doppler are illustrated in Fig. \ref{fg:example_th_DPAF_AFDM}. For the DPAF of AFDM, we observe that there are $N-1=127$ depressions within the range $\tau, \nu \in \left[-{N \mathord{\left/
			{\vphantom {N 2}} \right.
			\kern-\nulldelimiterspace} 2},{N \mathord{\left/
			{\vphantom {N 2}} \right.
			\kern-\nulldelimiterspace} 2}-1\right]$. Moreover, the delay gap between adjacent depressions is $\frac{N}{2Nc_1}=16$, and the Doppler gap between adjacent depressions is $2Nc_1=8$. These results are consistent with the conclusion in Proposition \ref{prop:solution}. In contrast to the case of integer normalized Doppler, there is a smooth transition region from the ``sea level'' to the depression for the case of fractional normalized Doppler. For the DPAFs of OFDM, all $N-1$ depressions are positioned along the delay cut of the squared DPAF. The depressions of squared DPAF of OCDM are located at the index $\left(\tau,\tau\right)$, $\tau\in \left[-{N \mathord{\left/
			{\vphantom {N 2}} \right.
			\kern-\nulldelimiterspace} 2},-1\right]  \cup \left[1,{N \mathord{\left/
			{\vphantom {N 2}} \right.
			\kern-\nulldelimiterspace} 2}-1\right]$, which also matches the above discussion.
	\begin{figure*}[!htbp]
		\vspace*{-10pt} 
		\centering
		\subfigure[DPAF of AFDM]{
			\includegraphics[width=2.25in]{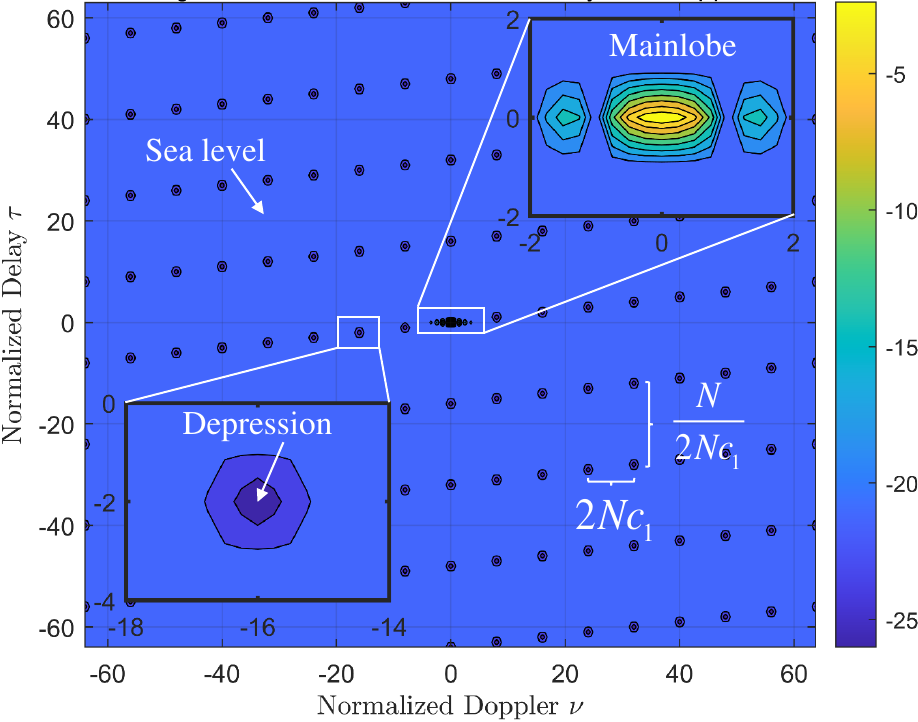}			
		}	
		\subfigure[DPAF of OFDM]{
			\includegraphics[width=2.25in]{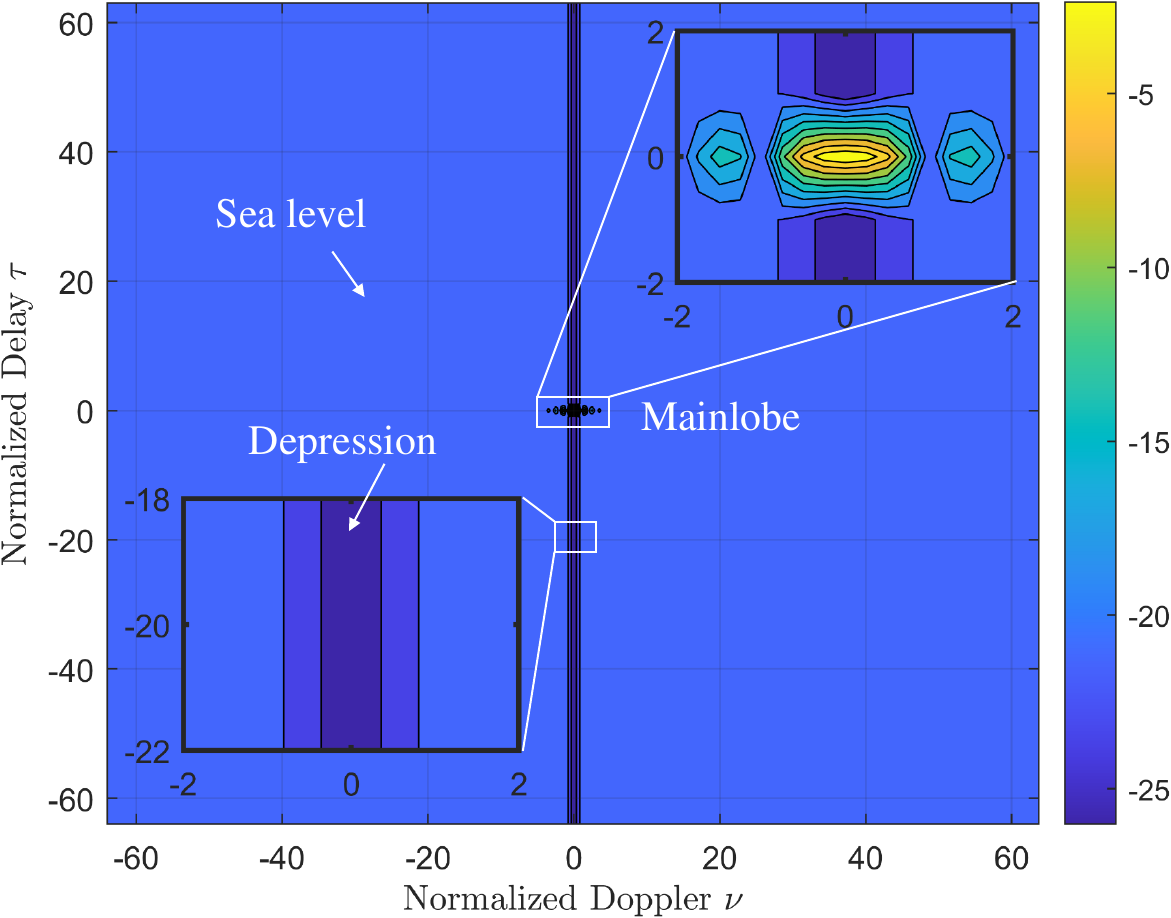}
		}
		\subfigure[DPAF of OCDM]{
			\includegraphics[width=2.25in]{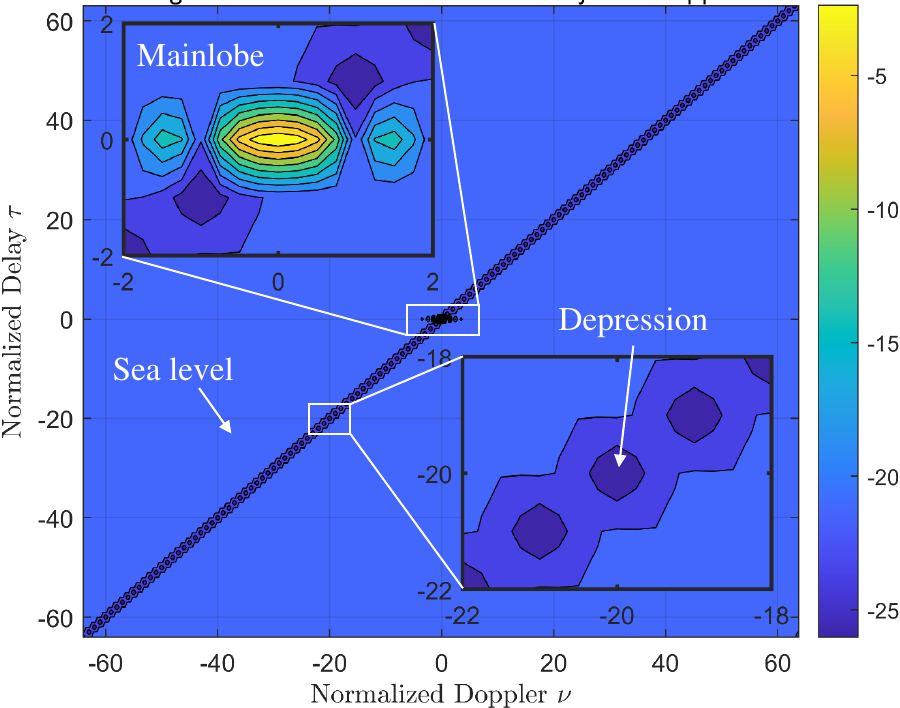}
		}
		\caption{The 2D Contour plots of theoretical derived DPAFs of AFDM, OFDM and OCDM without PS for the case of fractional normalized Doppler, where $N=128$, $2Nc_1=8$ and 16-QAM.  
			\label{fg:example_th_DPAF_AFDM}}
			\vspace*{-10pt} 
	\end{figure*}
\end{example}

\subsection{DPAF-Inspired Design Guideline for AFDM Parameter}


\subsubsection{Impact of Sidelobes of DPAF on Sensing Parameter Estimation}

\begin{figure}[!htbp]
	\centering	
	\subfigure[Weak target at depression part]{
		\includegraphics[width=1.62in]{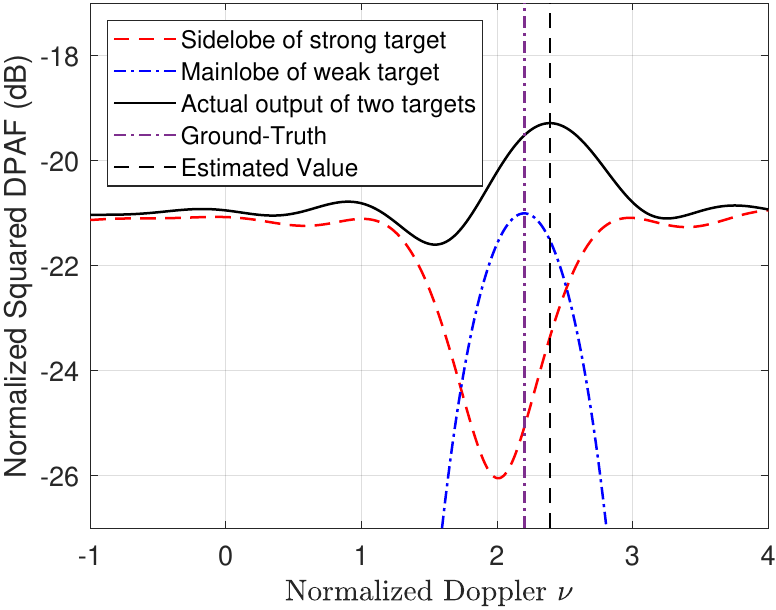}
	}
	\subfigure[Weak target at ``sea level'' part]{
		\includegraphics[width=1.62in]{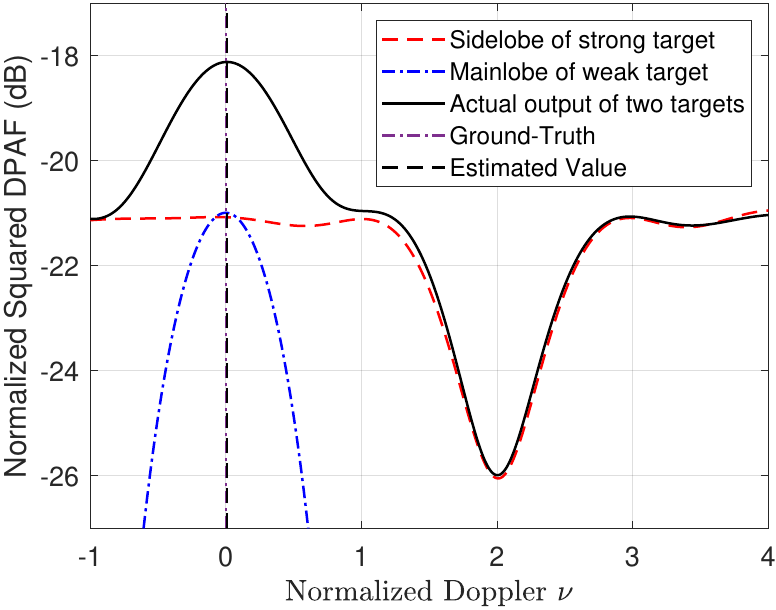}
	}		
	\caption{Simulated output of noncoherent integration after matched filter in a strong-weak target scenario with $N=128$ and 16-QAM, where the amplitude of the strong target is 21 dB higher than that of the weak target.  
		\label{fg:DPAF_strongWeakTarget}}
\end{figure}

We first analyze the impact of the sidelobes of the DPAF on sensing parameter estimation in a strong-weak target scenario, where there are two Swerling 2 targets, and the power of the echo from the strong target is significantly higher than that from the weak target. According to (\ref{eq:ave_squ_rk}), the actual output of noncoherent integration is the sum of the average squared DPAFs corresponding to the two targets. 
As shown in Fig. \ref{fg:DPAF_strongWeakTarget}, if the mainlobe of the weak target is located at the depression part of the strong target sidelobes, the shape of acutual output correspongding the mainlobe of the weak target may become distorted, due to the uneven impact from the depressions. The distorted mainlobe may lead to a significantly erroneous Doppler estimate for the weak target, even if the signal-to-noise ratio (SNR) is high. In contrast, if the mainlobe of the weak target is positioned at the ``sea level'' part of the strong target sidelobes, the amplitude of the mainlobe of the weak target is overall raised. As a result, the estimated Doppler of the weak target can still match closely the ground truth. This suggests that the depression imposes a negative effect on the parameter estimation of the weak target in the presence of the strong target. Hence, one should avoid positioning the weak target in the depressions of the strong target sidelobes, which motivates the following design guideline for AFDM parameter. 

\subsubsection{Design Guideline for AFDM Parameter} According to the comparison in Sec. \ref{sec:Analy_DPAF}, the AFDM waveform can flexibly control the positions of depressions by adjusting the parameter $c_1$. Based on these insights, we propose the following design guideline for the AFDM parameter $c_1$.
\begin{corollary}\label{cor:c1_design}
	For a strong-weak target scenario, let the ranges of the strong and weak targets be denoted by $d_{\rm s}$ and $d_{\rm w}$, and their velocities be denoted by $v_{\rm s}$ and $v_{\rm w}$, respectively. To avoid placing the weak target at the depressions of the strong target sidelobes, the AFDM parameter $c_1$ should be chosen such that ${c_1} \ne {\bar c_1}$, where
	\begin{align}\label{eq:c1_design}
	{\bar c_1} \buildrel \Delta \over =  \frac{{c\left( {{v_{\rm w}} - {v_{\rm s}}} \right){f_c}}}{{2\left( {{d_{\rm w}} - {d_{\rm s}}} \right)f_s^2}} + k\frac{c}{{4\left( {{d_{\rm w}} - {d_{\rm s}}} \right){f_s}}}, \ k \in \mathbb{Z},
	\end{align}	
	with $c$ denoting the speed of light and $f_c$ being the carrier frequency.
\end{corollary}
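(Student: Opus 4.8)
The plan is to translate the geometric ``depression locus'' of the strong target established in Proposition~\ref{prop:solution} into a condition on the physical range--velocity pair of the weak target, and then invert that condition to isolate the forbidden parameter values $\bar c_1$. The starting point is the structure of the noncoherent-integration output in (\ref{eq:ave_squ_rk}): the total return is a superposition of the average squared DPAFs $\mathbb{E}\{|\chi_k(\tau-\tau_q,\nu-\nu_q)|^2\}$, one per target. Hence the depression pattern contributed by the strong target ($q=\mathrm{s}$) is exactly the DPAF depression pattern, but translated so as to be centered at $(\tau_{\rm s},\nu_{\rm s})$. By Proposition~\ref{prop:solution} those depressions sit at relative coordinates $\bigl(\tau_{\rm rel},\langle 2Nc_1\tau_{\rm rel}\rangle_N\bigr)$, i.e.\ wherever $2Nc_1\tau_{\rm rel}-\nu_{\rm rel}\equiv 0\ (\mathrm{mod}\ N)$.

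Evaluating this locus at the weak target's mainlobe $(\tau_{\rm w},\nu_{\rm w})$ yields the ``bad event'': the weak target falls into a depression of the strong target precisely when
\begin{equation*}
2Nc_1(\tau_{\rm w}-\tau_{\rm s})-(\nu_{\rm w}-\nu_{\rm s}) = kN, \quad k\in\mathbb{Z}.
\end{equation*}
The key conceptual point, which I would emphasize, is that only the \emph{relative} (differential) delay and Doppler between the two targets enter, because each target carries its own translated copy of the depression lattice. Since Proposition~\ref{prop:solution} is stated for integer normalized Doppler, I would invoke the continuity observed in Example~\ref{ex:th_DPAF_AFDM} (the smooth transition regions surrounding each depression in the fractional-Doppler case) to argue that the same locus continues to govern the general setting relevant to the corollary. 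Solving the displayed equation for $c_1$ then gives the one-parameter family
\begin{equation*}
\bar c_1 = \frac{\nu_{\rm w}-\nu_{\rm s}}{2N(\tau_{\rm w}-\tau_{\rm s})} + \frac{k}{2(\tau_{\rm w}-\tau_{\rm s})}, \quad k\in\mathbb{Z}.
\end{equation*}

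The last step substitutes the standard radar mappings between physical and normalized parameters: the round-trip delay $\tilde\tau_q = 2d_q/c$ and the Doppler $\tilde\nu_q = 2v_q f_c/c$, together with the normalizations relating $(\tilde\tau,\tilde\nu)$ to the sample-domain pair $(\tau,\nu)$ through the sampling rate $f_s$, the block length $N$, and the subcarrier spacing. These give $\tau_{\rm w}-\tau_{\rm s}\propto(d_{\rm w}-d_{\rm s})$ and $\nu_{\rm w}-\nu_{\rm s}\propto(v_{\rm w}-v_{\rm s})f_c$, so that collecting the constants collapses the two terms above into the coefficients $\tfrac{c(v_{\rm w}-v_{\rm s})f_c}{2(d_{\rm w}-d_{\rm s})f_s^2}$ and $\tfrac{c}{4(d_{\rm w}-d_{\rm s})f_s}$, which is exactly (\ref{eq:c1_design}); requiring $c_1\neq\bar c_1$ for every $k$ then guarantees the weak target avoids all depressions. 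I expect the main obstacle to be bookkeeping rather than any deep idea: one must track the factors of $N$, $f_s$, and the delay/Doppler scalings consistently, and one must be careful that the forbidden set is the \emph{full} integer family indexed by $k$ (reflecting the mod-$N$ periodicity of the depression lattice and the equispaced delay/Doppler gaps $\tfrac{N}{2Nc_1}$ and $2Nc_1$ from Proposition~\ref{prop:solution}) rather than a single value. A secondary subtlety is rigorously transferring Proposition~\ref{prop:solution} from integer to fractional Doppler, for which the smooth-transition picture of Example~\ref{ex:th_DPAF_AFDM} supplies the needed justification.
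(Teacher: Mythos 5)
Your proposal is correct and takes essentially the same approach as the paper's own proof: compute the relative normalized delay/Doppler gaps between the two targets from the physical parameters, require that $\left(\Delta_\tau,\Delta_\nu\right)$ miss the strong target's depression locus $2Nc_1\Delta_\tau-\Delta_\nu\equiv 0 \pmod N$ from Proposition \ref{prop:solution}, and solve for the forbidden family of $c_1$ values indexed by $k\in\mathbb{Z}$. One bookkeeping remark: if you actually carry out the final substitution ($\Delta_\tau = 2\left(d_{\rm w}-d_{\rm s}\right)f_s/c$, $\Delta_\nu = 2\left(v_{\rm w}-v_{\rm s}\right)f_c N/\left(cf_s\right)$), the speed of light cancels in the first term, yielding $\frac{\left(v_{\rm w}-v_{\rm s}\right)f_c}{2\left(d_{\rm w}-d_{\rm s}\right)f_s^2}$, which is the dimensionally consistent version---so the extra factor of $c$ in the first term of (\ref{eq:c1_design}) appears to be a typo in the paper rather than something your ``collecting the constants'' step should reproduce.
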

\begin{proof}
	See Appendex \ref{proof:c1_design}.
\end{proof}
\begin{remark}
	Once the sensing targets have been tracked, their ranges and velocities can be roughly obtained through a forward prediction process including prediction errors. To enhance robustness against the prediction errors, $c_1$ could be set such that ${c_1} \notin \left[ {{{\bar c}_1} - {\sigma _c},{{\bar c}_1} + {\sigma _c}} \right]$, where ${\sigma _c}$ is affected by the prediction errors. Moreover, while the design guideline is proposed for the AFDM without PS, numerical results in Sec. \ref{sec:simulation} demonstrate its applicability to pulse-shaped AFDM.
\end{remark}

\section{Closed-Form Expression for Average Squared DPAF of Pulse-Shaped AFDM}

This section presents the analytical derivation of the closed-form expression for the average squared DPAF of pulse-shaped random AFDM. Based on this, we reveal the impact of PS on the AF.

\subsection{Closed-Form Expression of DPAF of Pulse-Shaped AFDM}

According to (\ref{eq:AFDM}) and (\ref{eq:pulsed_AFDM}), the pulse-shaped AFDM symbol can be expressed as
\begin{align} \label{eq:pulsed_AFDM3}
 {{x}_{{\rm ps}}}\left[i\right] &= \sum\nolimits_{n = 0}^{N - 1} {g_{{{\left\langle {i - nL} \right\rangle }_{NL}}} {x}_n}  \nonumber\\ 
& = \frac{1}{{\sqrt N }}\sum\limits_{n = 0}^{N - 1} {\sum\limits_{m = 0}^{N - 1} {g_{{{\left\langle {i - nL} \right\rangle }_{NL}}} {s}_ m{\phi _n}\left( m \right)} } ,
\end{align} 
where $i \in \left[0,NL-1\right]$, and $\mathbf{g}\in \mathbb{R}^{NL\times 1}$ is the first column of the matrix $\mathbf{G}$, denoting the effective PS response, defined as
\begin{align}
\mathbf{g} = \left[{\tilde g_{ML}}, \cdots, {\tilde g_{2ML}}, \mathbf{0}^{T}_{NL-2ML-1}, {\tilde g_0}, \cdots ,{\tilde g_{ML-1}}\right]^{{T}}.
\end{align} 
Consequently, the average squared DPAF of pulse-shaped AFDM can be formulated as (\ref{eq:AS_DPAF_pulsed_1}) at the top of this page.

\begin{figure*} [htb]
	\vspace*{-10pt} 
	\small
	\begin{flalign}\label{eq:AS_DPAF_pulsed_1}
	&\	\mathbb{E} \left\{ {{{\left| {\chi _{\text{AFDM,PS}} \left( {\tau ,\nu } \right)} \right|}^2}} \right\} = \frac{1}{{{N^2}}}\sum\limits_{n = 0}^{N - 1} {\sum\limits_{n' = 0}^{N - 1} {\sum\limits_{m = 0}^{N - 1} {\sum\limits_{m' = 0}^{N - 1} {\sum\limits_{q = 0}^{N - 1} {\sum\limits_{q' = 0}^{N - 1} {\sum\limits_{p = 0}^{N - 1} {\sum\limits_{p' = 0}^{N - 1} {\sum\limits_{k = 0}^{NL - 1} {\sum\limits_{k' = 0}^{NL - 1} \Bigg\{ {\mathbb{E}\left\{ {{{\bar s}_m^{*}} \bar s_{m'}^{} \bar s_p^{} {{\bar s}_{p'}^*}} \right\}} {e^{ - j2\pi {c_1}\left( {{n^2} - {{n'}^2}} \right)}}{e^{ - j\frac{{2\pi }}{N}\left( {mn - m'n'} \right)}}   } } } } } } } } } & \nonumber\\
	&\ \quad\quad\quad\quad\quad\quad\quad\quad\quad\quad\quad \cdot	{e^{j2\pi {c_1}\left( {{q^2} - {{q'}^2}} \right)}}{e^{j\frac{{2\pi }}{N}\left( {pq - p'q'} \right)}}  {g}_{{{\left\langle {k - nL} \right\rangle }_{NL}}}  g_{{{\left\langle {k - n'L - \tau } \right\rangle }_{NL}}}  g_{{{\left\langle {k' - qL} \right\rangle }_{NL}}} {g}_{{{\left\langle {k' - q'L - \tau } \right\rangle }_{NL}}}{e^{ j\frac{{2\pi }}{{NL}}\nu \left( {k - k'} \right)}} \Bigg\} .& 
	\end{flalign}
	\normalsize
	\hrule
	\vspace*{-10pt} 
\end{figure*}

Next, the closed-form expression for the average squared DPAF of pulse-shaped AFDM is provided by the following proposition. Although the closed-form expression in (\ref{eq:AS_DPAF_pulsed_2}) is derived for the case of integer normalized Doppler, numerical results in Sec. \ref{sec:simulation} validate that the derived closed-form expression still works for the case of fractional normalized Doppler.

\begin{proposition} \label{prop:AS_DPAF_pulsed}
	When the normalized Doppler $\nu$ is an integer, the closed-form expression for the average squared DPAF of pulse-shaped AFDM is given by
	\begin{flalign} \label{eq:AS_DPAF_pulsed_2}
	&\ \mathbb{E}\left\{ {{{\left| {\chi _{\text{AFDM,PS}} \left( {\tau ,\nu} \right)} \right|}^2}} \right\} & \nonumber \\
	&\  = {\left| {\frac{{\sin \left( {\pi \nu} \right)}}{{\sin \left( {{{\pi \nu} \mathord{\left/
								{\vphantom {{\pi \nu} N}} \right.
								\kern-\nulldelimiterspace} N}} \right)}}} \right|^2}{\left| {{\chi _g}\left( {\tau ,\nu} \right)} \right|^2} + \sum\limits_{n = 0}^{N - 1} { \Bigg\{ {{\left| {{\chi _g}\left( {{{\left\langle {\tau  - nL} \right\rangle }_{NL}},\nu} \right)} \right|}^2} } & \nonumber \\ 
	&\ \quad \cdot {\left[ {\frac{{\left( {{\mu _4} - 2} \right)}}{N}{{\left| {\frac{{\sin \left[ {\pi \left( {2N{c_1}n  - \nu} \right)} \right]}}{{\sin \left[ {{{\pi \left( {2N{c_1}n  - \nu} \right)} \mathord{\left/
											{\vphantom {{\pi \left( {2N{c_1}n  - \nu} \right)} N}} \right.
											\kern-\nulldelimiterspace} N}} \right]}}} \right|}^2} + N} \right]} \Bigg\}, & 			
	\end{flalign} 
	where ${\chi _g}$ is the DPAF of $\mathbf{g}$, referred to as DPAF-g, i.e.,
	\begin{equation}
	{\chi _g}\left( {\tau ,\nu} \right) = \sum\nolimits_{m = 0}^{NL - 1} {g_m  {g}_{{{\left\langle {m - \tau } \right\rangle }_{NL}}} {e^{-j\frac{{2\pi }}{{NL}}\nu m}}}.
	\end{equation}
\end{proposition}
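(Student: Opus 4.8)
The plan is to start from the tenfold sum (\ref{eq:AS_DPAF_pulsed_1}) and reduce it with Lemma \ref{Lemma:aver_symbol}. First I would dispose of $c_2$: writing $\bar s_m = s_m e^{j2\pi c_2 m^2}$ gives $\mathbb{E}\{\bar s_m^{*}\bar s_{m'}\bar s_p\bar s_{p'}^{*}\}=e^{j2\pi c_2(m'^2-m^2+p^2-p'^2)}\,\mathbb{E}\{s_m^{*}s_{m'}s_p s_{p'}^{*}\}$, and on the support of each of the three Kronecker patterns in (\ref{eq:aver_symbol_s}) one checks $m'^2-m^2+p^2-p'^2=0$, so the $c_2$ phase is unity (which also re-confirms the $c_2$-independence). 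Substituting Lemma \ref{Lemma:aver_symbol} then splits the average squared DPAF into three pieces: a \emph{double-delta} term from $\delta(m-m')\delta(p-p')$, a \emph{cross} term from $\delta(m-p)\delta(m'-p')$, and a \emph{kurtosis} term carrying the factor $(\mu_4-2)$.

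Two mechanical devices handle all three pieces. Summing over a free data index such as $m$ produces $\sum_{m=0}^{N-1}e^{\pm j\frac{2\pi}{N}m(\cdot)}=N\delta(\langle\cdot\rangle_N)$, which collapses one or two of the chirp indices $n,n',q,q'$. The inner sums over $k,k'$ are then recognized as DPAF-g values through the substitution $m=\langle k-nL\rangle_{NL}$: since $k=m+nL$ and $\nu$ is an integer, the wrap-around modulo $NL$ contributes no phase, so $e^{\mp j\frac{2\pi}{NL}\nu k}$ factors as $e^{\mp j\frac{2\pi}{N}\nu n}e^{\mp j\frac{2\pi}{NL}\nu m}$, leaving a factor $e^{\mp j\frac{2\pi}{N}\nu n}\chi_g(\langle\tau-(n-n')L\rangle_{NL},\nu)$ (or its conjugate). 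This is precisely where integer $\nu$ first enters.

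For the double-delta term, the $m$- and $p$-sums force $n=n'$ and $q=q'$, the chirp phases $e^{\mp j2\pi c_1(n^2-n'^2)}$ disappear, and the $k,k'$ sums factorize into $\chi_g^{*}(\tau,\nu)\chi_g(\tau,\nu)$ times $\big(\sum_n e^{j\frac{2\pi}{N}\nu n}\big)\big(\sum_q e^{-j\frac{2\pi}{N}\nu q}\big)=|\mathcal{D}_N(\nu)|^2$, giving the first summand of (\ref{eq:AS_DPAF_pulsed_2}). For the cross term, the $m,m'$-sums force $q=n$ and $q'=n'$, the chirp phases again cancel, and the residual double sum becomes $\sum_{n,n'}|\chi_g(\langle\tau-(n-n')L\rangle_{NL},\nu)|^2$. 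Because $|\chi_g(\langle\tau-wL\rangle_{NL},\nu)|^2$ is periodic in the shift $w=n-n'$ with period $N$, the triangular multiplicity $N-|w|$ recombines (pairing $w$ with $w-N$) into a flat weight $N$, yielding $N\sum_{n=0}^{N-1}|\chi_g(\langle\tau-nL\rangle_{NL},\nu)|^2$, i.e. the $+N$ inside the bracket of (\ref{eq:AS_DPAF_pulsed_2}).

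The kurtosis term is the main obstacle. Here only the single delta $\delta(\langle n-n'-q+q'\rangle_N)$ survives, coupling the two shifts $a=n-n'$ and $b=q-q'$ modulo $N$ rather than forcing them equal. Writing $n^2-n'^2=2an-a^2$ and $q^2-q'^2=2bq-b^2$, the chirp-and-Doppler sums over $n$ and $q$ become partial geometric sums $\sum_n e^{-j\frac{2\pi}{N}(2Nc_1 a-\nu)n}$ and $\sum_q e^{+j\frac{2\pi}{N}(2Nc_1 b-\nu)q}$ over shift-dependent ranges. The key steps will be: (i) using $2Nc_1\in\mathbb{Z}$ to show that $e^{j2\pi c_1 a^2}$ and $|\chi_g(\langle\tau-aL\rangle_{NL},\nu)|^2$ are $N$-periodic in $a$; (ii) summing the constraint $a\equiv b\pmod N$ together with the partial-range sums so that, exactly as in the cross term, the boundary pieces recombine into full-length sums over a single shift; and (iii) identifying the resulting full geometric sum with $|\mathcal{D}_N(2Nc_1 n-\nu)|^2$, again relying on integer $\nu$. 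This yields $\frac{\mu_4-2}{N}\sum_n|\chi_g(\langle\tau-nL\rangle_{NL},\nu)|^2\,|\mathcal{D}_N(2Nc_1 n-\nu)|^2$; adding the three pieces and merging the cross and kurtosis sums under the common factor $|\chi_g(\langle\tau-nL\rangle_{NL},\nu)|^2$ gives (\ref{eq:AS_DPAF_pulsed_2}). I expect step (ii), the bookkeeping of the coupled partial sums, to be where essentially all the difficulty lies.
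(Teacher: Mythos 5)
Your proposal follows essentially the same route as the paper's proof in Appendix D: substitute the moment identity for $\bar s_m$ (i.e., (\ref{eq:aver_symbol_s_bar}), which you correctly re-derive by checking the $c_2$ phase vanishes on the support of each delta pattern) into (\ref{eq:AS_DPAF_pulsed_1}), split the result into the three terms $\mathcal{A}_1,\mathcal{A}_2,\mathcal{A}_3$ corresponding to the double-delta, kurtosis, and cross patterns, and evaluate each via the index shift $k\to k+nL$ (valid precisely because $\nu$ is an integer) together with full or recombined geometric sums, yielding respectively the $\left|\mathcal{D}_N(\nu)\right|^2\left|\chi_g(\tau,\nu)\right|^2$ term, the $(\mu_4-2)$-weighted term, and the flat $+N$ term. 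Your explicit bookkeeping for the kurtosis term --- pairing the shifts $w$ and $w-N$ so that the partial ranges $[w,N-1]$ and $[0,w-1]$ recombine into the full sum $\sum_{n=0}^{N-1}$, producing $\left|\mathcal{D}_N(2Nc_1 w-\nu)\right|^2$ --- is exactly the step the paper compresses without display into the passage from the first to the second line of (\ref{eq:A_2}).
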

\begin{proof}
	See Appendex \ref{proof:AS_DPAF_pulsed}.	
\end{proof}

\subsection{Discussion on DPAF of Pulse-Shaped AFDM}


\subsubsection{Mainlobe and delay/Doppler cut of DPAF of pulse-shaped AFDM}

\begin{corollary} \label{cor:mainlobe_DPAF_pulsed_AFDM}
	The average mainlobe level of squared DPAF of pulse-shaped AFDM is given by	
	\begin{align}\label{eq:mainlobe_DPAF_pulsed_AFDM}
	& \mathbb{E}\left\{ {{{\left| {\chi _{\text{AFDM,PS}} \left( {0,0} \right)} \right|}^2}} \right\} =  {N^2} + N\sum\limits_{n = 0}^{N - 1} {{{\left| {{R_g}\left( {{{\left\langle { - nL} \right\rangle }_{NL}}} \right)} \right|}^2}}  \nonumber \\
	& + \left( {{\mu _4} - 2} \right)N\sum\limits_{n = 0}^{2N{c_1} - 1} {{{\left| {{R_g}\left( {{{\left\langle { - n\frac{{NL}}{{2N{c_1}}}} \right\rangle }_{NL}}} \right)} \right|}^2}} ,
	\end{align} \normalsize
	where ${R_g}$ is the PACF of effective PS response $\mathbf{g}$, defined as
	\begin{equation}
	{R_g}\left( \tau  \right) = \sum\nolimits_{m = 0}^{NL - 1} {g_m  g_{{{\left\langle {m - \tau } \right\rangle }_{NL}}} }.
	\end{equation}	
\end{corollary}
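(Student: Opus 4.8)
The plan is to specialize the general closed-form expression for the average squared DPAF of pulse-shaped AFDM, namely (\ref{eq:AS_DPAF_pulsed_2}) in Proposition \ref{prop:AS_DPAF_pulsed}, to the mainlobe point $(\tau,\nu)=(0,0)$, and then simplify the two Dirichlet factors and the DPAF-g factor. First I would observe that along the zero-Doppler cut the DPAF-g collapses to the PACF of the effective PS response: since $\mathbf{g}$ is real, $\chi_g(\tau,0)=\sum_{m=0}^{NL-1}g_m\,g_{\left\langle m-\tau\right\rangle_{NL}}=R_g(\tau)$, so setting $\tau=0$ turns each factor $\left|\chi_g\!\left(\left\langle \tau-nL\right\rangle_{NL},\nu\right)\right|^2$ into $\left|R_g\!\left(\left\langle -nL\right\rangle_{NL}\right)\right|^2$. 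In particular $\chi_g(0,0)=R_g(0)=\|\mathbf{g}\|^2=1$, which follows from the normalization $\|\tilde{\mathbf{g}}\|^2=1$ together with the fact that $\mathbf{g}$ is merely a zero-padded cyclic rearrangement of $\tilde{\mathbf{g}}$.

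Next I would dispose of the two Dirichlet kernels $\mathcal{D}_N$ appearing in (\ref{eq:AS_DPAF_pulsed_2}). The prefactor $\left|\mathcal{D}_N(\nu)\right|^2=\left|\sin(\pi\nu)/\sin(\pi\nu/N)\right|^2$ tends to $N^2$ as $\nu\to 0$, so the first summand evaluates to $N^2\,|R_g(0)|^2=N^2$, yielding the first term of (\ref{eq:mainlobe_DPAF_pulsed_AFDM}). For the bracketed factor I would evaluate $\mathcal{D}_N(2Nc_1 n-\nu)$ at $\nu=0$, i.e. $\mathcal{D}_N(2Nc_1 n)$. Because $2Nc_1$ is an integer by assumption, $2Nc_1 n$ is an integer and $\sin(\pi\cdot 2Nc_1 n)=0$; the quotient therefore vanishes unless the denominator $\sin(\pi\cdot 2Nc_1 n/N)$ also vanishes, which occurs precisely when $N\mid 2Nc_1 n$, in which case the kernel attains magnitude $N$. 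Hence $\left|\mathcal{D}_N(2Nc_1 n)\right|^2=N^2\,\mathbf{1}\!\left[N\mid 2Nc_1 n\right]$, and the bracket reduces to $N+(\mu_4-2)N\,\mathbf{1}\!\left[N\mid 2Nc_1 n\right]$.

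Substituting these evaluations, the single sum in (\ref{eq:AS_DPAF_pulsed_2}) splits as $N\sum_{n=0}^{N-1}\left|R_g(\left\langle -nL\right\rangle_{NL})\right|^2+(\mu_4-2)N\sum_{n:\,N\mid 2Nc_1 n}\left|R_g(\left\langle -nL\right\rangle_{NL})\right|^2$. The first piece is exactly the second term of (\ref{eq:mainlobe_DPAF_pulsed_AFDM}). The remaining work, which I expect to be the only delicate step, is the reindexing of the kurtosis sum via a characterization of the solution set $\{n\in[0,N-1]:N\mid 2Nc_1 n\}$. Under the standing AFDM parametrization in which $2Nc_1$ divides $N$ (the same condition that makes the depressions equally spaced with delay gap $N/(2Nc_1)$ in Proposition \ref{prop:solution}), these solutions are exactly the $2Nc_1$ equally spaced indices $n=j\,N/(2Nc_1)$, $j=0,\dots,2Nc_1-1$. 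Substituting $nL=jNL/(2Nc_1)$ and renaming $j\to n$ produces the third term $(\mu_4-2)N\sum_{n=0}^{2Nc_1-1}\left|R_g(\left\langle -nNL/(2Nc_1)\right\rangle_{NL})\right|^2$, and collecting the three contributions gives (\ref{eq:mainlobe_DPAF_pulsed_AFDM}). The main obstacle is thus purely number-theoretic, namely counting and correctly indexing the integers $n$ for which the squared Dirichlet kernel attains its peak $N^2$ instead of $0$; everything else is direct substitution into the already-established closed form.
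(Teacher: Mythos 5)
Your proof is correct and follows essentially the same route as the paper's: substitute $(\tau,\nu)=(0,0)$ into the closed form (\ref{eq:AS_DPAF_pulsed_2}) of Proposition \ref{prop:AS_DPAF_pulsed}, observe that $\left|\mathcal{D}_N\left(2Nc_1 n\right)\right|^2$ equals $N^2$ precisely on the equally spaced indices $n=\beta N/(2Nc_1)$ and $0$ elsewhere, and reindex the kurtosis sum to obtain (\ref{eq:mainlobe_DPAF_pulsed_AFDM}). You are in fact slightly more explicit than the paper, which leaves implicit both the normalization $R_g(0)=\|\mathbf{g}\|^2=1$ (needed for the first term to be exactly $N^2$) and the divisibility condition $2Nc_1 \mid N$ underlying the characterization of the solution set.
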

\begin{proof}
	When $\tau=0$ and $\nu=0$, we have 
	\begin{equation}
	{\left| {\frac{{\sin \left[ {\pi \left( {2N{c_1}n} \right)} \right]}}{{\sin \left[ {{{\pi \left( {2N{c_1}n} \right)} \mathord{\left/
								{\vphantom {{\pi \left( {2N{c_1}n} \right)} N}} \right.
								\kern-\nulldelimiterspace} N}} \right]}}} \right|^2} = \left\{ {\begin{array}{*{20}{c}}
		{{N^2},}&{n = {{\beta N} \mathord{\left/
					{\vphantom {{\beta N} {\left( {2N{c_1}} \right)}}} \right.
					\kern-\nulldelimiterspace} {\left( {2N{c_1}} \right)}},}\\
		{0,}&{otherwise,}
		\end{array}} \right.
	\end{equation}
	where $\beta  \in \left[ {0,2N{c_1}} \right]$ is an integer. Consequently, it immediately yields (\ref{eq:mainlobe_DPAF_pulsed_AFDM}), thereby completing the proof.
\end{proof}

Next, we present the delay cut and Doppler cut of the average squared DPAF of pulse-shaped AFDM.

\begin{corollary} \label{cor:delayCut_DPAF_AFDM_Pulsed}
	The average delay cut of the squared DPAF of pulse-shaped AFDM is expressed as	
	\begin{flalign} \label{eq:delayCut_DPAF_pulsed_1}
	&\ \mathbb{E}\left\{ {{{\left| {\chi _{\text{AFDM,PS}} \left( {\tau ,0} \right)} \right|}^2}} \right\} \hspace{-0.125cm} = \hspace{-0.125cm} {N^2}{\left| {{R_g}\left( \tau  \right)} \right|^2} \hspace*{-1pt} \hspace{-0.125cm} + \hspace{-0.125cm} \sum\limits_{n = 0}^{N - 1} { \hspace{-0.125cm} \Bigg\{ \hspace{-0.125cm} {{\left| {{R_g}\left( {{{\left\langle {\tau  - nL} \right\rangle }_{NL}}} \right)} \right|}^2} } & \nonumber\\
	&\ \quad\quad \cdot	\left[ {\frac{{\left( {{\mu _4} - 2} \right)}}{N}{{\left| {\frac{{\sin \left[ {\pi \left( {2N{c_1}n } \right)} \right]}}{{\sin \left[ {{{\pi \left( {2N{c_1}n } \right)} \mathord{\left/
										{\vphantom {{\pi \left( {2N{c_1}n } \right)} N}} \right.
										\kern-\nulldelimiterspace} N}} \right]}}} \right|}^2} \hspace*{-3pt} + \hspace*{-2pt} N} \right] \Bigg\}. &
	\end{flalign} 
\end{corollary}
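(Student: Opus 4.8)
The plan is to obtain the delay cut (\ref{eq:delayCut_DPAF_pulsed_1}) directly as the zero-Doppler specialization of the general closed form in Proposition \ref{prop:AS_DPAF_pulsed}. Since (\ref{eq:AS_DPAF_pulsed_2}) already supplies the full average squared DPAF of pulse-shaped AFDM for integer $\nu$, the entire task reduces to setting $\nu = 0$ term by term in that expression, so no new summation needs to be evaluated from scratch.

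First I would simplify the DPAF-g at zero Doppler. Setting $\nu = 0$ in the definition of $\chi_g$ removes the factor $e^{-j\frac{2\pi}{NL}\nu m}$, so that $\chi_g\left(\tau, 0\right) = \sum\nolimits_{m=0}^{NL-1} g_m g_{\left\langle m - \tau\right\rangle_{NL}} = R_g\left(\tau\right)$; that is, the DPAF-g collapses onto the PACF $R_g$ introduced in Corollary \ref{cor:mainlobe_DPAF_pulsed_AFDM}. Applying this to both occurrences replaces $\left|\chi_g\left(\tau,\nu\right)\right|^2$ and $\left|\chi_g\left(\left\langle\tau - nL\right\rangle_{NL},\nu\right)\right|^2$ in (\ref{eq:AS_DPAF_pulsed_2}) by $\left|R_g\left(\tau\right)\right|^2$ and $\left|R_g\left(\left\langle\tau - nL\right\rangle_{NL}\right)\right|^2$, respectively, while the sine ratio inside the bracketed term simplifies to $\sin\left[\pi\left(2Nc_1 n\right)\right]/\sin\left[\pi\left(2Nc_1 n\right)/N\right]$.

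The only step that is not pure substitution is the leading Dirichlet prefactor $\left|\mathcal{D}_N\left(\nu\right)\right|^2 = \left|\sin\left(\pi\nu\right)/\sin\left(\pi\nu/N\right)\right|^2$, which takes the indeterminate form $0/0$ at $\nu = 0$. I would resolve this with a one-line limiting argument: from $\sin\left(\pi\nu\right)\sim\pi\nu$ and $\sin\left(\pi\nu/N\right)\sim\pi\nu/N$ as $\nu\to 0$, one obtains $\mathcal{D}_N\left(0\right) = N$ and hence $\left|\mathcal{D}_N\left(0\right)\right|^2 = N^2$. Combined with $\chi_g\left(\tau,0\right) = R_g\left(\tau\right)$, the leading term of (\ref{eq:AS_DPAF_pulsed_2}) becomes $N^2\left|R_g\left(\tau\right)\right|^2$, matching the first term of (\ref{eq:delayCut_DPAF_pulsed_1}); collecting this with the simplified summation reproduces (\ref{eq:delayCut_DPAF_pulsed_1}) exactly. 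I anticipate no genuine obstacle here, since the analytical weight is carried by Proposition \ref{prop:AS_DPAF_pulsed} and the sole subtlety, the indeterminate prefactor, is dispatched by the limit above.
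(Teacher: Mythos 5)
Your proposal is correct and takes essentially the same route as the paper, whose proof consists of exactly this step: substituting $\nu=0$ into (\ref{eq:AS_DPAF_pulsed_2}) from Proposition \ref{prop:AS_DPAF_pulsed}. Your explicit resolution of the $0/0$ Dirichlet prefactor via the limit $\mathcal{D}_N(0)=N$ (equivalently, recognizing that the factor originates from the sum $\sum_{n=0}^{N-1}e^{-j\frac{2\pi}{N}\nu n}$, which equals $N$ at $\nu=0$) is a detail the paper leaves implicit, but the argument is otherwise identical.
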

\begin{proof}
	Substituting $\nu=0$ into (\ref{eq:AS_DPAF_pulsed_2}) immediately yields (\ref{eq:delayCut_DPAF_pulsed_1}), thereby completing the proof.
\end{proof}

\begin{corollary} \label{cor:DopplerCut_DPAF_AFDM_Pulsed}
	The average Doppler cut of the squared DPAF of pulse-shaped AFDM is given by	
	\begin{flalign} \label{eq:DopplerCut_DPAF_pulsed_1}
	&\ \mathbb{E}\left\{ {{{\left| {\chi _{\text{AFDM,PS}} \left( {0,\nu} \right)} \right|}^2}} \right\} & \nonumber\\
	&\ = {\left| {\frac{{\sin \left( {\pi \nu} \right)}}{{\sin \left( {{{\pi \nu} \mathord{\left/
								{\vphantom {{\pi \nu} N}} \right.
								\kern-\nulldelimiterspace} N}} \right)}}} \right|^2}{\left| \mathcal{{F}}_{g}\left(\nu\right) \right|^2} + \sum\limits_{n = 0}^{N - 1} {\Bigg\{ {{\left| {{\chi _g}\left( {{{\left\langle { - nL} \right\rangle }_{NL}},\nu} \right)} \right|}^2} }& \nonumber\\
	&\ \quad \cdot {\left[ {\frac{{\left( {{\mu _4} - 2} \right)}}{N}{{\left| {\frac{{\sin \left[ {\pi \left( {2N{c_1}n - \nu} \right)} \right]}}{{\sin \left[ {{{\pi \left( {2N{c_1}n - \nu} \right)} \mathord{\left/
											{\vphantom {{\pi \left( {2N{c_1}n - \nu} \right)} N}} \right.
											\kern-\nulldelimiterspace} N}} \right]}}} \right|}^2} + N} \right]} \Bigg\}, & 
	\end{flalign} \normalsize
	where $\mathcal{{F}}_{g}$ denotes the spectrum of squared envelope (SSE) of $\mathbf{g}$, defined as
	\begin{equation}
	\mathcal{{F}}_{g}\left(\nu\right)={\sum\nolimits_{m = 0}^{NL - 1} {{{\left| {g_m} \right|}^2}{e^{-j\frac{{2\pi }}{{NL}} \nu m}}} }.
	\end{equation}		
\end{corollary}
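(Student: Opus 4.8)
The plan is to specialize Proposition~\ref{prop:AS_DPAF_pulsed} to the Doppler cut by setting $\tau = 0$ in the closed-form expression (\ref{eq:AS_DPAF_pulsed_2}), and then to identify the resulting zero-delay DPAF-g factor with the SSE $\mathcal{F}_g$. First I would substitute $\tau = 0$ directly. In the summation term of (\ref{eq:AS_DPAF_pulsed_2}) the delay argument $\langle \tau - nL \rangle_{NL}$ collapses to $\langle -nL \rangle_{NL}$, while the bracketed factor containing the Dirichlet ratio and the kurtosis $\mu_4$ does not depend on $\tau$ at all; hence this entire term transfers verbatim to the second term of (\ref{eq:DopplerCut_DPAF_pulsed_1}). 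Likewise, the leading Dirichlet prefactor $|\sin(\pi\nu)/\sin(\pi\nu/N)|^2$ is independent of $\tau$ and carries over unchanged.

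The only step requiring genuine justification is the identification of the leading factor $|\chi_g(0,\nu)|^2$ with $|\mathcal{F}_g(\nu)|^2$. Here I would invoke the definition $\chi_g(\tau,\nu) = \sum_{m=0}^{NL-1} g_m\, g_{\langle m - \tau\rangle_{NL}}\, e^{-j 2\pi \nu m/(NL)}$ and evaluate it at $\tau = 0$. Since $\langle m \rangle_{NL} = m$ for every $m \in [0, NL-1]$, the product reduces to $g_m^2$; and because the effective PS response $\mathbf{g}$ is real-valued (it is assembled from the real finite-tap filter $\mathbf{\tilde g} \in \mathbb{R}^{(2ML+1)\times 1}$), we have $g_m^2 = |g_m|^2$. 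This yields $\chi_g(0,\nu) = \sum_{m=0}^{NL-1} |g_m|^2\, e^{-j 2\pi \nu m/(NL)} = \mathcal{F}_g(\nu)$, which is exactly the leading term of (\ref{eq:DopplerCut_DPAF_pulsed_1}).

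Given Proposition~\ref{prop:AS_DPAF_pulsed}, there is no substantive obstacle: the corollary is a direct specialization, and the single nontrivial point is the realness of $\mathbf{g}$, which is what converts the chirp-weighted correlation $\chi_g$ into an honest squared-envelope transform along the $\tau = 0$ cut. I would therefore present the argument as a short substitution followed by the remark that realness of the PS response collapses the zero-delay DPAF-g onto the spectrum of its squared envelope.
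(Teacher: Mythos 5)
Your proposal is correct and takes essentially the same route as the paper, whose entire proof reads ``Substituting $\tau=0$ into (\ref{eq:AS_DPAF_pulsed_2}) immediately yields (\ref{eq:DopplerCut_DPAF_pulsed_1})''. The one detail you spell out that the paper leaves implicit---that $\chi_g(0,\nu)=\mathcal{F}_g(\nu)$ because the effective PS response $\mathbf{g}$ is real, so $g_m^2=|g_m|^2$---is exactly right and is the only nontrivial identification in the substitution.
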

\begin{proof}
	Substituting $\tau=0$ into (\ref{eq:AS_DPAF_pulsed_2}) immediately yields (\ref{eq:DopplerCut_DPAF_pulsed_1}), thereby completing the proof.
\end{proof}

\subsubsection{Impact analysis of PS on DPAF}

To intuitively understand the DPAF of pulse-shaped AFDM, we present the following example.
\begin{example}\label{ex:impact_fulse_DPAF_AFDM}
	Given $N=128$, $2Nc_1 = 8$, $L=4$, 16-QAM and a PS filter being the well-known RRC filter with $\alpha = 0.35$, the resulting theoretical average squared DPAF of pulse-shaped AFDM is illustrated in Figs. \ref{fg:th_DPAF_AFDM_pulsed} and \ref{fg:th_impact_fulse_cuts_DPAF_AFDM}. 
	\begin{figure}[!htbp]
		\centering
		\subfigure[Theoretical DPAF of AFDM]{
			\includegraphics[width=1.6in]{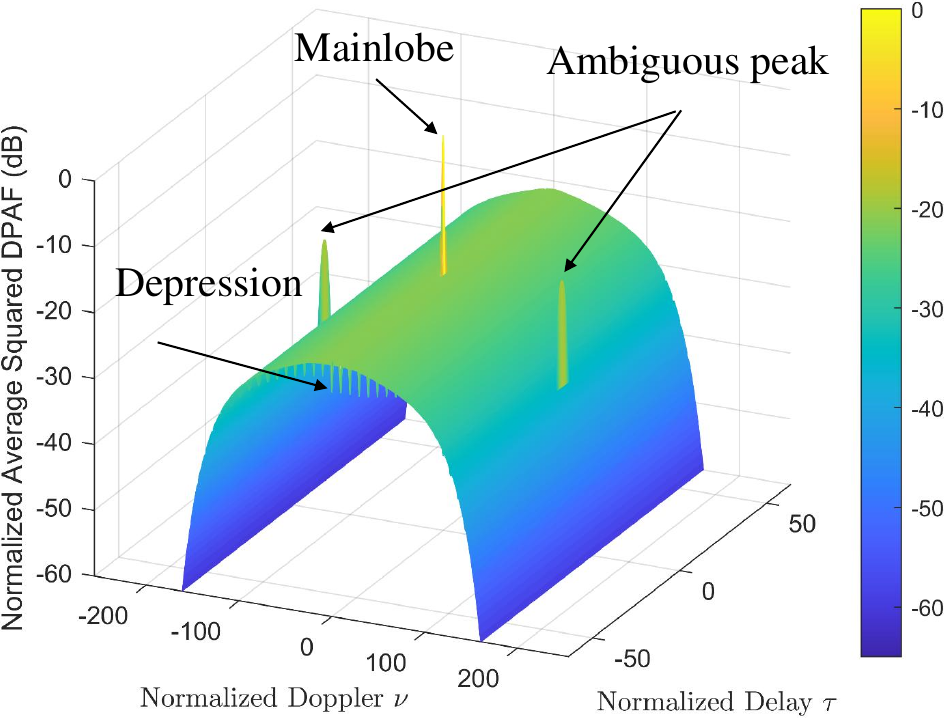}
		}
		\subfigure[2D Contour plots]{
			\includegraphics[width=1.6in]{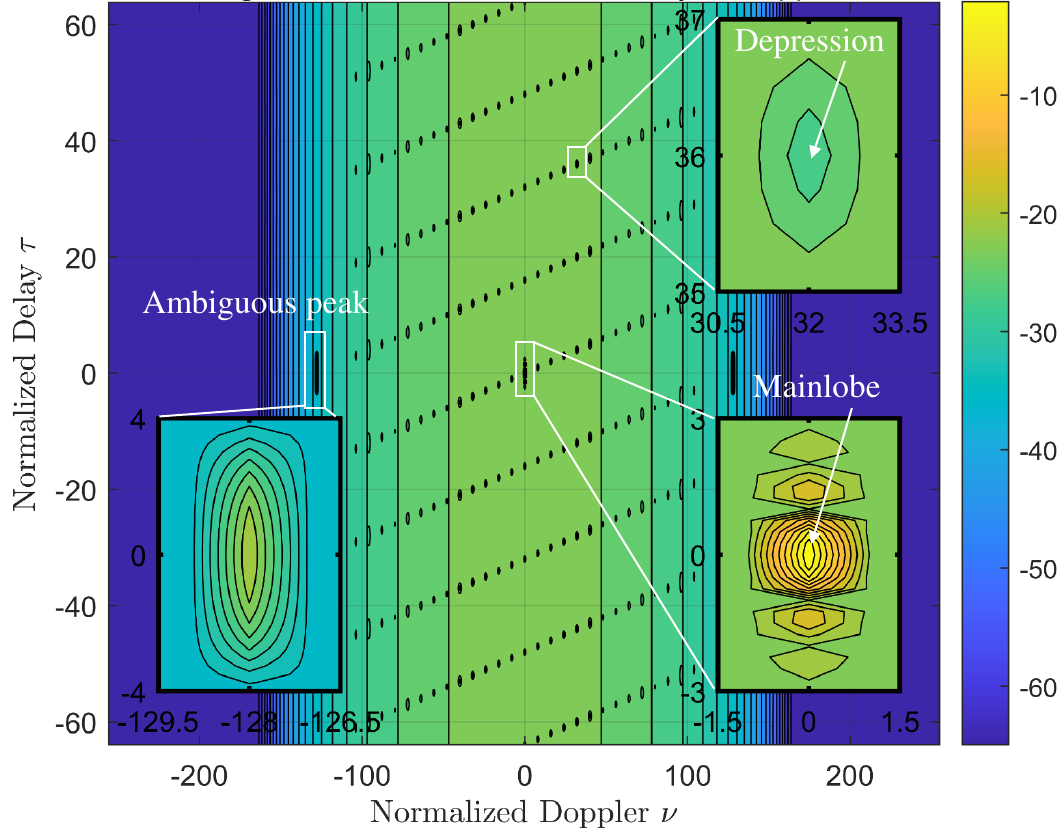}
		}	
		\caption{The theoretical average squared DPAF of pulse-shaped AFDM for $N=128$, $2Nc_1 = 8$, $L=4$, and 16-QAM.  
			\label{fg:th_DPAF_AFDM_pulsed}}
		\vspace*{-5pt} 
	\end{figure}
	\begin{figure}[!htbp]
		\centering
		\subfigure[Delay Cut]{
			\includegraphics[width=2.3in]{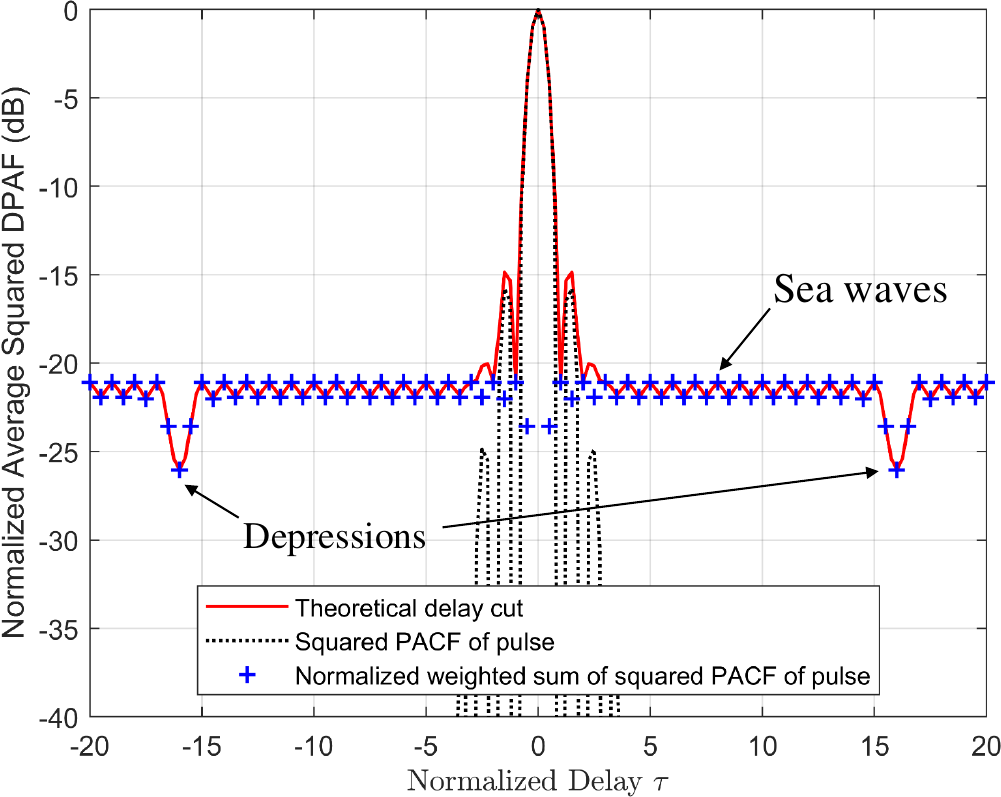}
		}
		\subfigure[Doppeler Cut]{
			\includegraphics[width=2.3in]{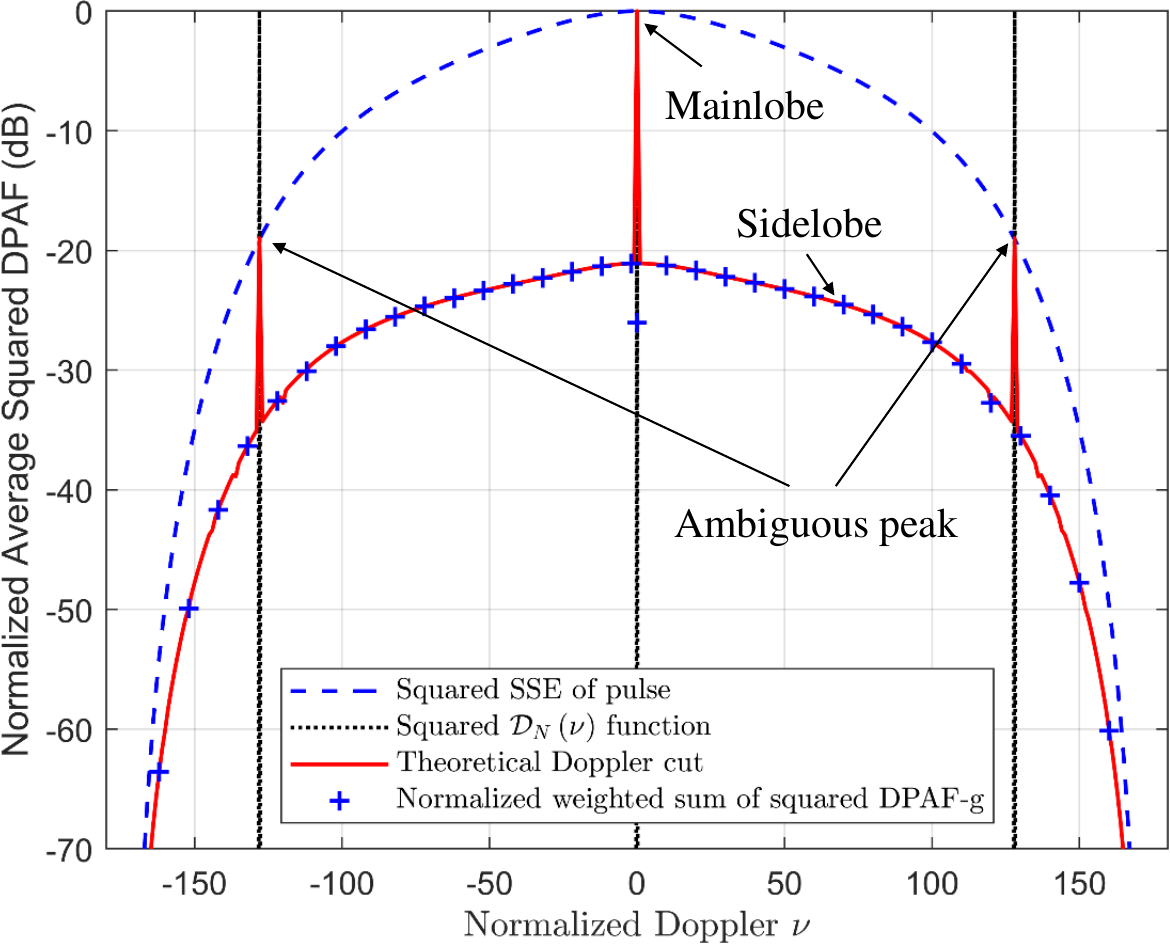}
		}	
		\caption{The theoretical delay cut and Doppler cut of derived DPAF of pulse-shaped AFDM for $N=128$, $2Nc_1 = 8$, $L=4$, and 16-QAM.  
			\label{fg:th_impact_fulse_cuts_DPAF_AFDM}}
		\vspace*{-5pt} 
	\end{figure}
\end{example}

It can be observed that compared with the DPAF of AFDM without PS shown in Fig. \ref{fg:example_th_DPAF_AFDM}, the mainlobe of the DPAF of pulse-shaped AFDM is shaped along the delay axis, while the sidelobe rolls off rapidly along the Doppler axis, as shown in Fig. \ref{fg:th_DPAF_AFDM_pulsed}. Moreover, there are two ambiguous peaks located at $\left(-N,0\right)$ and $\left(N,0\right)$, whose powers are lower than the mainlobe but higher than the sidelobes, which are influenced by both up-sampling and PS. Meanwhile, while the impact of PS exists, the positions of depressions in DPAF of AFDM remain unchanged compared to Fig. \ref{fg:example_th_DPAF_AFDM}.

More intuitive impacts of PS on the DPAF are illustrated in Fig. \ref{fg:th_impact_fulse_cuts_DPAF_AFDM}. On the one hand, as shown in Fig. \ref{fg:th_impact_fulse_cuts_DPAF_AFDM}(a), the delay cut of the DPAF of pulse-shaped AFDM closely matches the squared PACF of the PS response $\mathbf{g}$ within the delay region $\left[-1,1\right]$, corresponding to the ``iceberg" component of DPAF following Lemma \ref{lemma:Iceberg}. This means that the mainlobe of the delay cut is primarily contributed by the squared PACF of $\mathbf{g}$, which is the first term on the right-hand side of Eq. (\ref{eq:delayCut_DPAF_pulsed_1}). Beyond this region, the delay cut closely matches the weighted sum of squared PACF of $\mathbf{g}$, which is the second term on the right-hand side of Eq. (\ref{eq:delayCut_DPAF_pulsed_1}). This suggests that the sidelobe level is mainly affected by the weighted sum of squared PACF of $\mathbf{g}$, attributed to the ``sea level" with ``sea waves''. 

On the other hand, as shown in Fig. \ref{fg:th_impact_fulse_cuts_DPAF_AFDM}(b), for the Doppler cut of DPAF of pulse-shaped AFDM, the peaks including the main peak and two ambiguous peak are contributed by the squared Dirichlet function $\mathcal{{D}}_{N}\left(\nu\right)$ and the squared SSE of $\mathbf{g}$, simultanously, corresponding to the first term on the right-hand side of Eq. (\ref{eq:DopplerCut_DPAF_pulsed_1}). The Dirichlet function is a periodic function with a period of $N$, and the SSE of $\mathbf{g}$ exhibits the rapid roll-off, which is why two ambiguous peaks have lower power than the main peak. Moreover, the sidelobes of DPAF of pulse-shaped AFDM closely match the weighted sum of squared DPAF-g along the Doppler axis, corresponding to the second term on the right-hand side of Eq. (\ref{eq:DopplerCut_DPAF_pulsed_1}), which also exhibits the rapid roll-off.

\section{Simulation Results} \label{sec:simulation}
In this section, numerical results based on Monte Carlo simulations are presented. 
In accordance with \cite{liu2024ofdm}, we employ a 16-QAM constellation for all waveforms with $N = 128$ and $L=4$. The PS filter is the RRC filter with a roll-off factor of $\alpha = 0.35$. In the case of AFDM, $2Nc_1=8$. The simulated DPAFs of random waveforms are calculated according to (\ref{eq:DPAF_k}), and all simulation results are attained by averaging over 10000 random realizations.

\subsection{DPAFs without PS}

\begin{figure}[!htbp]
	\centering
	\subfigure[Delay Cut]{
		\includegraphics[width=2.3in]{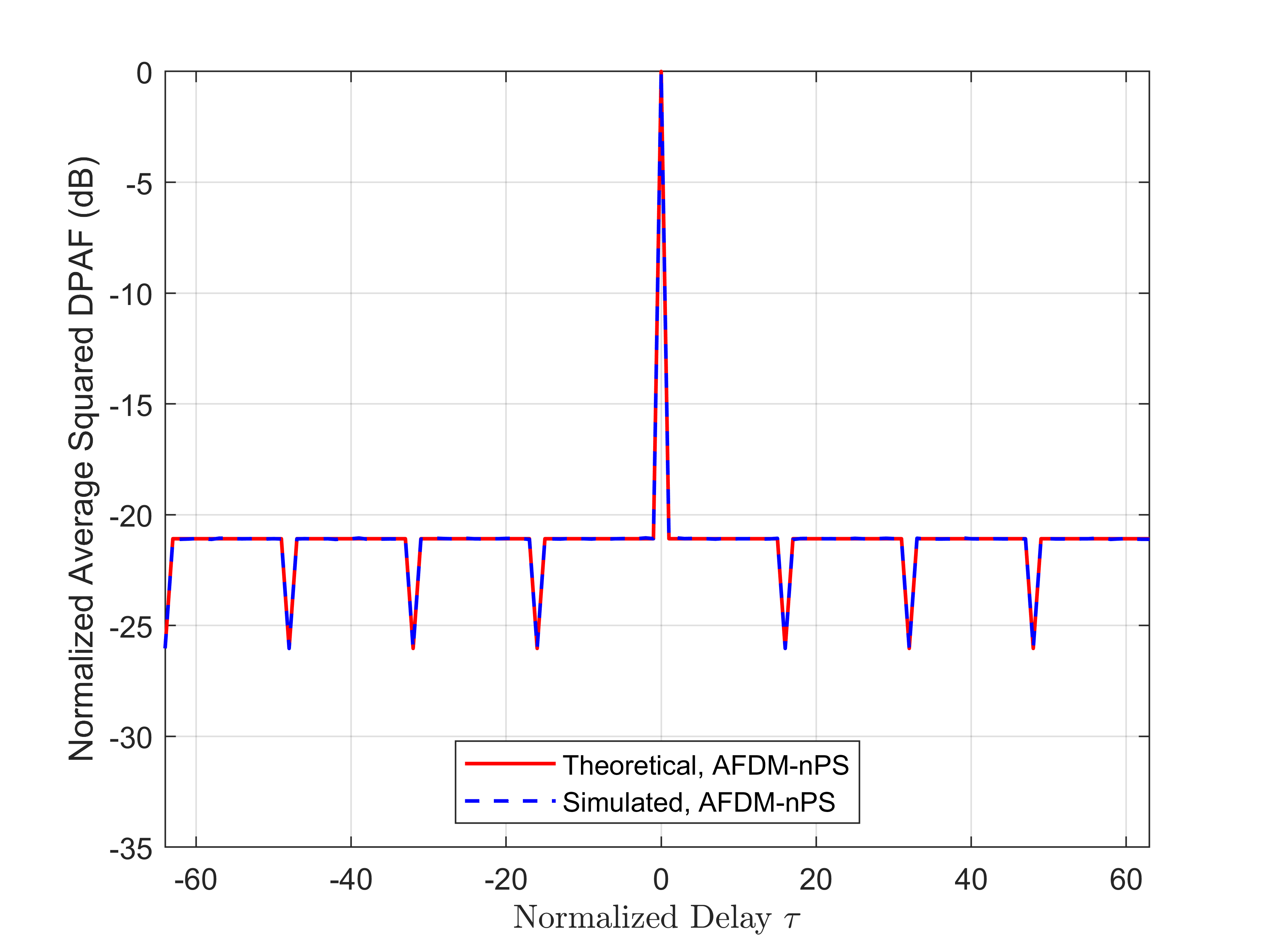}
	}
	\subfigure[Doppler Cut]{
		\includegraphics[width=2.3in]{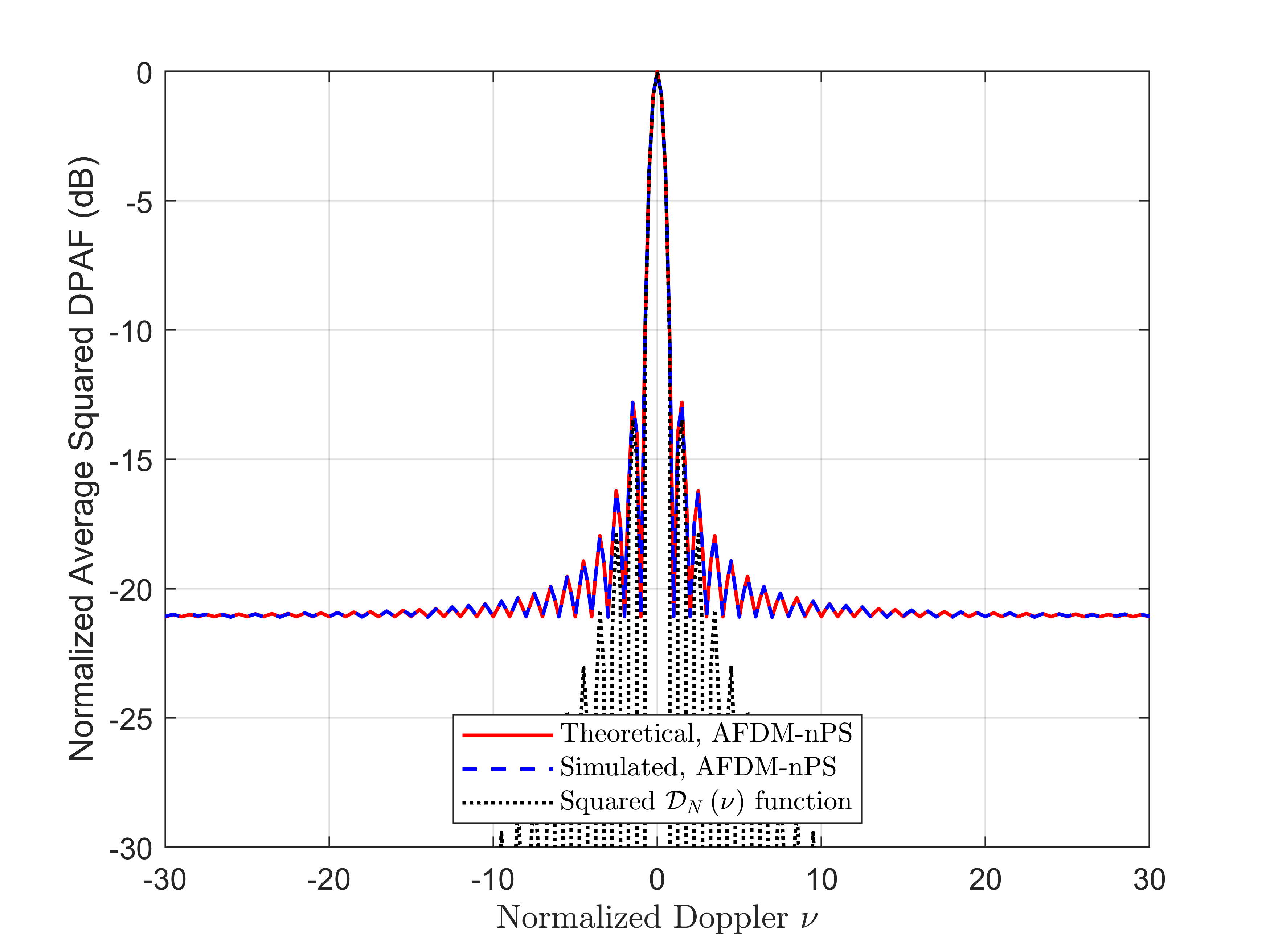}
	}		
	\caption{The theoretical and simulated delay and Doppler Cuts of DPAFs of AFDM without PS for the case of fractional normalized Doppler.  
		\label{fg:sumlated_cut_DPAF_AFDM_noPulse}}
	\vspace*{-5pt} 
\end{figure}

Firstly, we compare simulated and theoretical results of the delay cut and Doppler cut of DPAF of AFDM without PS to verify the effectiveness of our theoretical analysis, as shown in Fig. \ref{fg:sumlated_cut_DPAF_AFDM_noPulse}. ``Theoretical, AFDM-nPS" and ``Simulated, AFDM-nPS" denote the theoretical and simulated results of DPAF of AFDM without PS, respectively, and the theoretical results are obtained from (\ref{eq:DPAF_AFDM_noPulse_fracDop}). On the one hand, the theoretical and simulated delay cuts in Fig. \ref{fg:sumlated_cut_DPAF_AFDM_noPulse}(a) match completely, where the range of normalized delay $\tau$ is in  $\left[-\frac{N}{2},\frac{N}{2}-1\right]$. Moreover, it is shown that there are $2Nc_1-1 = 7$ depressions whose delay indices are $\left(16k,0\right), k\in\left[-4,-1\right]\cup \left[1,3\right]$, which matches the theoretical results shown in Fig. \ref{fg:example_th_DPAF_AFDM}(a) and (b). On the other hand, it can be observed from Fig. \ref{fg:sumlated_cut_DPAF_AFDM_noPulse}(b) that the theoretical Doppler cut is also consistent with the simulated result. Moreover, the mainlobe of Doppler cut matches closely the squared Dirichlet function $\mathcal{{D}}_{N}\left(\nu\right)$ within the normalized Doppler region $\left[-1,1\right]$.

\subsection{DPAFs with PS}

Next, we present the DPAFs of the pulse-shaped AFDM waveform for the integer and fractional normalized Doppler cases in Figs. \ref{fg:simulated_delayCut_DPAF_AFDM_pulsed_intDop} and \ref{fg:simulated_DopplerCut_DPAF_AFDM_pulsed_fracDop}, respectively. ``Theoretical, AFDM-PS'' and ``Simulated, AFDM-PS'' represent the theoretical DPAF of pulse-shaped AFDM in (\ref{eq:AS_DPAF_pulsed_2}) and the simulated DPAF of pulse-shaped AFDM, respectively. ``Squared PACF of pulse'' and ``Squared SSE of pulse'' denote the squared PACF and squared SSE of $\bf{g}$ defined in (\ref{eq:mainlobe_DPAF_pulsed_AFDM}) and (\ref{eq:DopplerCut_DPAF_pulsed_1}), respectively. The simulated average delay cut and Doppler cut of squared DPAFs of pulse-shaped AFDM for the case of integer Doppler are illustrated in Fig. \ref{fg:simulated_delayCut_DPAF_AFDM_pulsed_intDop}. It is shown that the simulated results are completely consistent with the theoretical results for both delay cut and Doppler cut, which demonstrates the effectiveness of our derived closed-form expressions of DPAF. 

\begin{figure}[!htbp]
	\centering
	\subfigure[Simulated Delay Cut]{
		\includegraphics[width=2.3in]{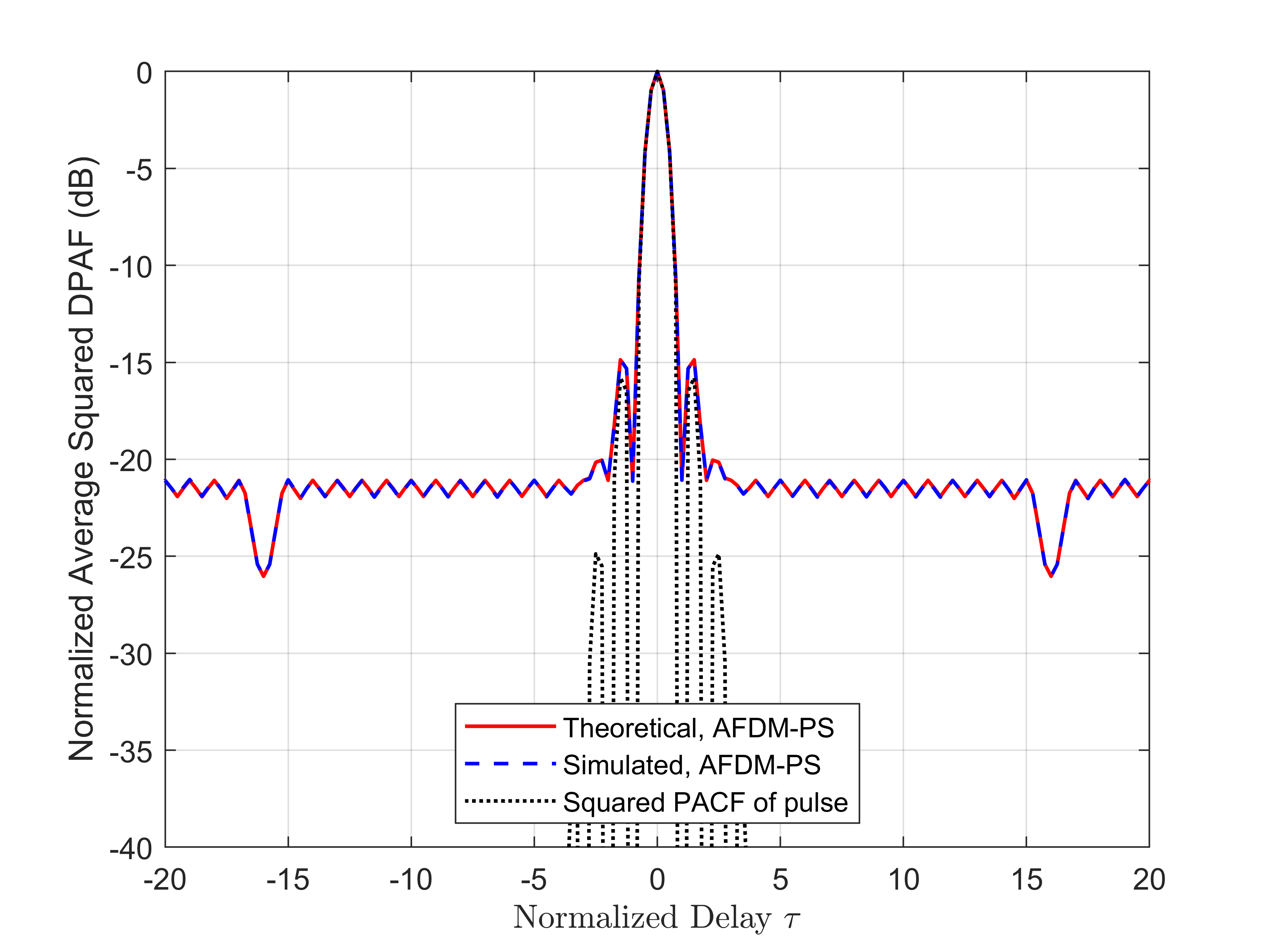}
	}
	\subfigure[Simulated Doppler Cut]{
		\includegraphics[width=2.3in]{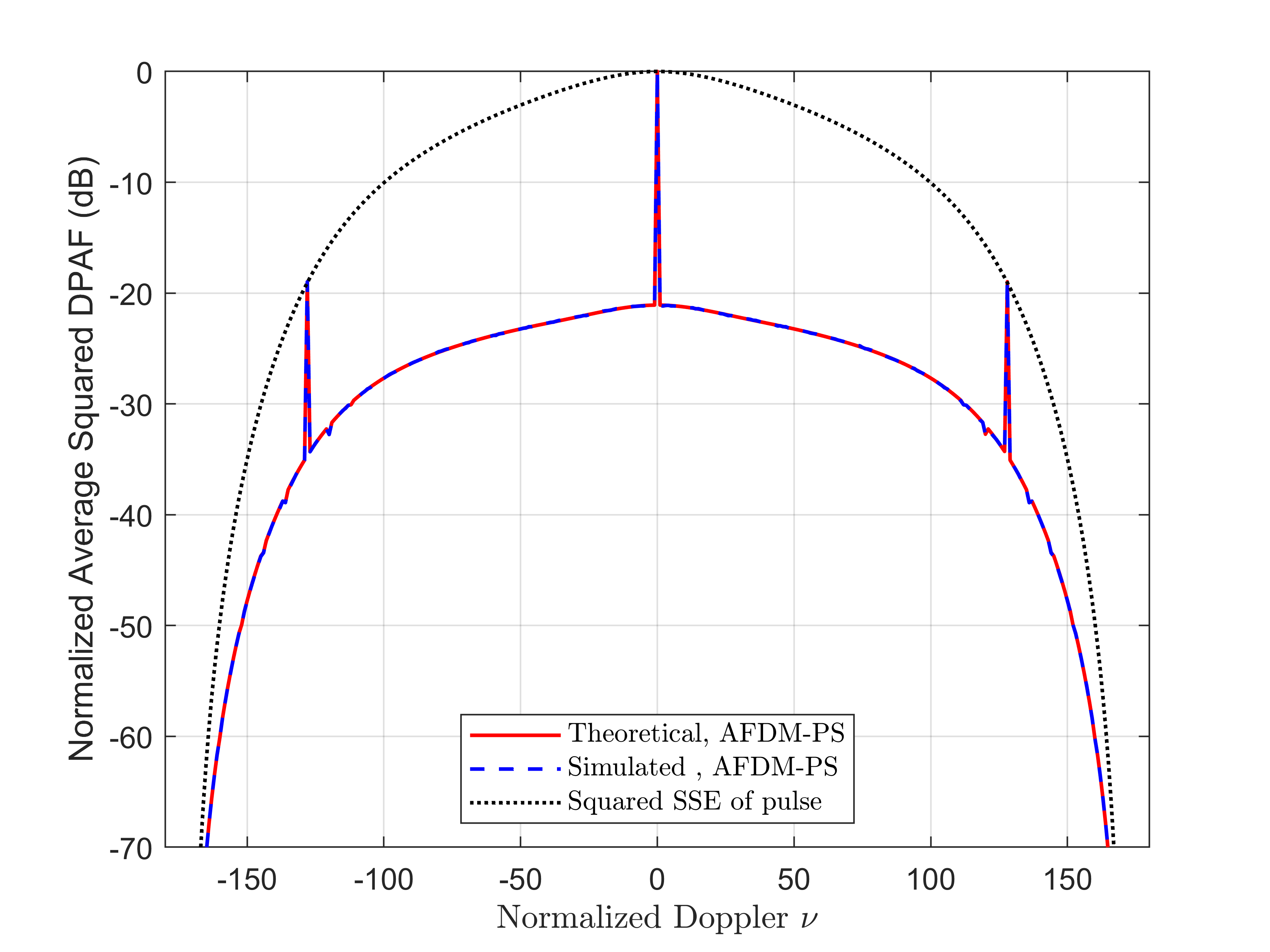}
	}
	\caption{The simulated delay cut and Doppler cut of average squared DPAFs of pulse-shaped AFDM for the case of integer normalized Doppler.  
		\label{fg:simulated_delayCut_DPAF_AFDM_pulsed_intDop}}
	\vspace*{-5pt} 
\end{figure}

Furthermore, the robustness of our derived closed-form expression of DPAF of pulse-shaped AFDM for the case of fractional normalized Doppler is also verified in Fig. \ref{fg:simulated_DopplerCut_DPAF_AFDM_pulsed_fracDop}. The line ``Theoretical, AFDM-PS'' is obtained by substituting fractional normalized Doppler into (\ref{eq:AS_DPAF_pulsed_2}). It is shown that the derived theoretical DPAF of pulse-shaped AFDM is still close to the simulated result. 

\begin{figure}[!htbp]
	\centering
	{
		\includegraphics[width=2.3in]{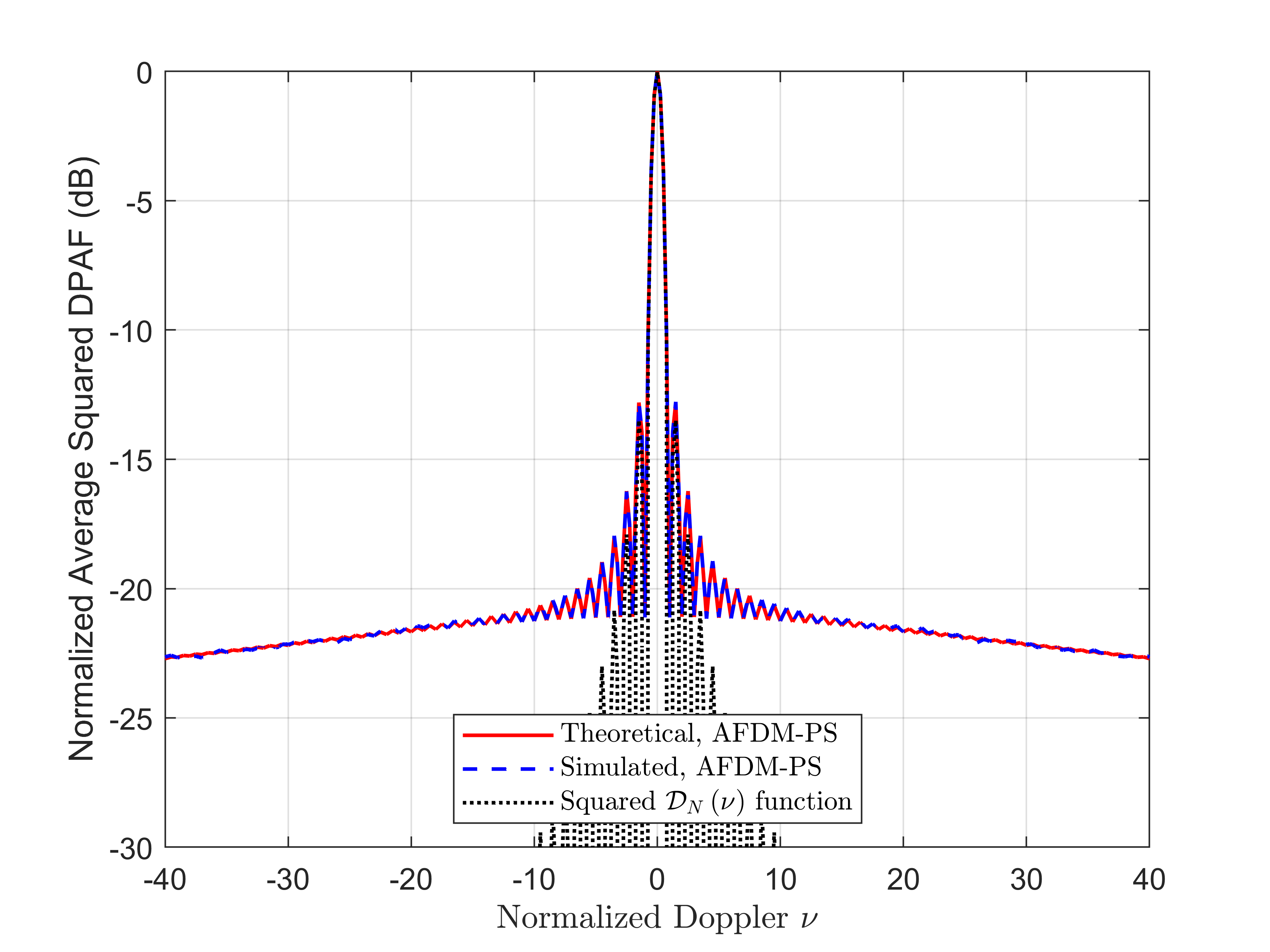}
	}	
	\caption{The simulated Doppler cut of DPAF of pulse-shaped AFDM for the case of fractional normalized Doppler.  
		\label{fg:simulated_DopplerCut_DPAF_AFDM_pulsed_fracDop}}
\end{figure}

Finally, we examine the velocity estimation performance of pulse-shaped AFDM-ISAC and OFDM-ISAC waveforms under a two-target scenario, where strong and weak targets are located at 156.25 m and 937.50 m, respectively, with the same velocity of 100 m/s. The amplitude of the strong target is 21 dB higher than that of the weak target. $c_1 = 0$ for OFDM, and $c_1$ is set according to Corollary \ref{cor:c1_design} for AFDM ($2Nc_1=2$). $N_{\rm sym}=50$ AFDM symbols are processed by noncoherent integration. The maximum likelihood (ML) estimator is used to estimate the velocity for both waveforms. The carrier frequency $f_c = 24$ GHz, and the subcarrier spacing $\Delta _f = 15$ kHz with a symbol duration of $T = 66.67 \mu$s. The root mean squared errors (RMSEs) of velocity estimation of the weak target are illustrated in Fig. \ref{fg:RMSE_velocityEst_OFDM_AFDM_v6}. It is shown that the designed AFDM-ISAC waveform shows 75\% improvement at SNR = 0 dB compared to that of the OFDM-ISAC waveform. 

\begin{figure}[!htbp]
	\centering
	{
		\includegraphics[width=2.3in]{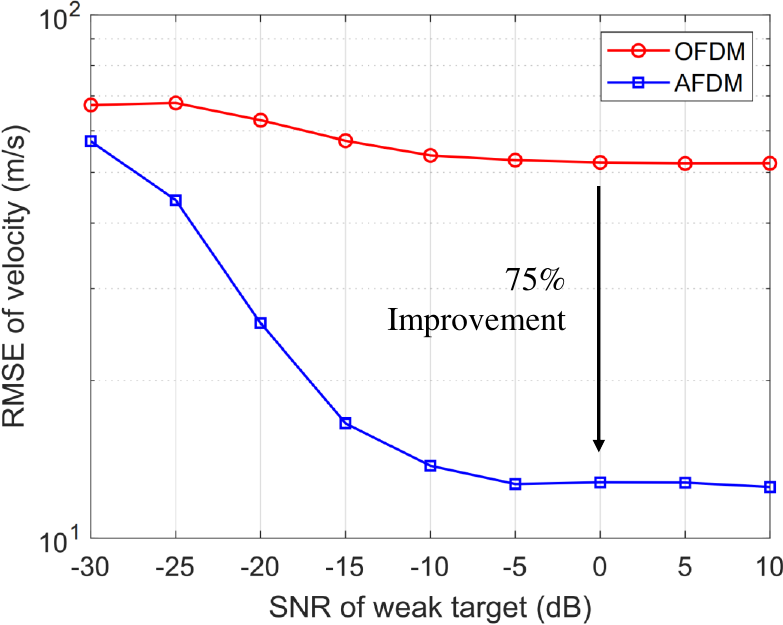}
	}	
	\caption{The RMSE of velocity estimation of the weak target for pulse-shapsed AFDM and OFDM waveforms.  
		\label{fg:RMSE_velocityEst_OFDM_AFDM_v6}}
\end{figure}

\section{Conclusion}

This paper comprehensively analyzed closed-form expressions for the average squared DPAFs of AFDM, OFDM, and OCDM waveforms in both cases with and without PS. We demonstrated that AFDM can control the positions of depressions of DPAF by appropriately adjusting the parameter $c_1$ and proposed a DPAF-inspired design guideline for AFDM parameter $c_1$. Numerical evaluations validated our theoretical findings and proposed design methodology.

\section*{APPENDIX}

\appendices

\section{Proof of Proposition \ref{prop:AS_DPAF_noPS}}\label{proof_prop_AS_DPAF}	
According to Appendix A of our early conference version in \cite{ni2025AFDM}, $\bar s _m$ meets the same assumptions as the symbol $s _m$, and the conclusion in Lemma \ref{Lemma:aver_symbol} still holds for $\bar s _m$, i.e.,
\begin{align}\label{eq:aver_symbol_s_bar}
&\mathbb{E}\left\{ {{{\bar s}_m^{*}} \bar s_{m'}^{} \bar s_n^{} {{\bar s}_{n'}^*}} \right\} \nonumber \\
& = \delta \left( {m - m'} \right)\delta \left( {n - n'} \right)  + \delta \left[ {m - n} \right]\delta \left( {m' - n'} \right) \nonumber \\
& \quad + \left( {{\mu _4} - 2} \right)\delta \left( {m - m'} \right)\delta \left( {m - n} \right)\delta \left( {m - n'} \right),
\end{align}
where $m,m',n,n' \in \left[0,N-1\right]$. Then, substituting (\ref{eq:aver_symbol_s_bar}) into (\ref{eq:DPAF_AFDM_1}), the average squared DPAF of AFDM without PS can be given by (\ref{eq:DPAF_AFDM_noPulse_fracDop_2}) at the bottom of this page. The proof is completed.

\begin{figure*}[b]
	\vspace*{-10pt} 
	\hrule
	\small
	\begin{flalign} \label{eq:DPAF_AFDM_noPulse_fracDop_2}
	&\ \mathbb{E}\left\{ {{{\left| {\chi _{\text{AFDM}} \left( {\tau ,\nu} \right)} \right|}^2}} \right\} = \frac{1}{{{N^2}}}\frac{{{e^{j2\pi \left( {2N{c_1}\tau  - \nu } \right)}} - 1}}{{{e^{j\frac{{2\pi }}{N}\left( {2N{c_1}\tau  - \nu } \right)}} - 1}}\frac{{{e^{ - j2\pi \left( {2N{c_1}\tau  - \nu } \right)}} - 1}}{{{e^{ - j\frac{{2\pi }}{N}\left( {2N{c_1}\tau  - \nu } \right)}} - 1}}\sum\limits_{m = 0}^{N - 1} {{e^{ - j\frac{{2\pi }}{N}m\tau}}\sum\limits_{n = 0}^{N - 1} {{e^{j\frac{{2\pi }}{N}n\tau}}} } &\nonumber\\
	&\ + \frac{{\left( {{\mu _4} - 2} \right)}}{{{N^2}}}\sum\limits_{m = 0}^{N - 1} {\left\{ {\frac{{{e^{j2\pi \left( {2N{c_1}\tau  - \nu } \right)}} - 1}}{{{e^{j\frac{{2\pi }}{N}\left( {2N{c_1}\tau  - \nu } \right)}} - 1}}\frac{{{e^{ - j2\pi \left( {2N{c_1}\tau  - \nu } \right)}} - 1}}{{{e^{ - j\frac{{2\pi }}{N}\left( {2N{c_1}\tau  + \nu } \right)}} - 1}}} \right\}}  + \frac{1}{{{N^2}}}\sum\limits_{m = 0}^{N - 1} {\sum\limits_{m' = 0}^{N - 1} {\left\{ {\frac{{{e^{ - j2\pi \left( {m' - m - 2N{c_1}\tau  + \nu } \right)}} - 1}}{{{e^{ - j\frac{{2\pi }}{N}\left( {m' - m - 2N{c_1}\tau  + \nu } \right)}} - 1}}\frac{{{e^{j2\pi \left( {m' - m - 2N{c_1}\tau  + \nu } \right)}} - 1}}{{{e^{j\frac{{2\pi }}{N}\left( {m' - m - 2N{c_1}\tau  + \nu } \right)}} - 1}}} \right\}} }  &\nonumber\\
	&\	=  \frac{1}{{{N^2}}}{\left| {\frac{{\sin \left[ {\pi \left( {2N{c_1}\tau  - \nu} \right)} \right]}}{{\sin \left[ {{{\pi \left( {2N{c_1}\tau  - \nu} \right)} \mathord{\left/
								{\vphantom {{\pi \left( {2N{c_1}\tau  - \nu} \right)} N}} \right.
								\kern-\nulldelimiterspace} N}} \right]}}} \right|^2}{\left| {\frac{{\sin \left( {\pi \tau} \right)}}{{\sin \left( {{{\pi \tau} \mathord{\left/
								{\vphantom {{\pi \tau} N}} \right.
								\kern-\nulldelimiterspace} N}} \right)}}} \right|^2} 	+ \left( {{\mu _4} - 2} \right)\frac{1}{N}{\left| {\frac{{\sin \left[ {\pi \left( {2N{c_1}\tau  - \nu} \right)} \right]}}{{\sin \left[ {{{\pi \left( {2N{c_1}\tau  - \nu} \right)} \mathord{\left/
								{\vphantom {{\pi \left( {2N{c_1}\tau  - \nu} \right)} N}} \right.
								\kern-\nulldelimiterspace} N}} \right]}}} \right|^2} 	+ \frac{1}{N}\sum\limits_{m = 0}^{N - 1} {{{\left| {\frac{{\sin \left[ {\pi \left( {m + 2N{c_1}\tau  - \nu} \right)} \right]}}{{\sin \left[ {{{\pi \left( {m + 2N{c_1}\tau  - \nu} \right)} \mathord{\left/
										{\vphantom {{\pi \left( {m + 2N{c_1}\tau  - \nu} \right)} N}} \right.
										\kern-\nulldelimiterspace} N}} \right]}}} \right|}^2}} .&
	\end{flalign} \normalsize
	\vspace*{-10pt} 
\end{figure*}		

\section{Proof of Proposition \ref{prop:solution}}\label{proof_prop_solution}	
Since depressions are only located at the indices that meet ${{\left\langle { 2N{c_1}\tau - {\nu} } \right\rangle }_N} = 0$, we start from analyzing the solutions to ${{\left\langle { 2N{c_1}\tau - {\nu}} \right\rangle }_N} = 0$, where $\tau,\nu \in \left[0,N-1\right]$. When $2Nc_1$ is an integer, $2Nc_1\tau = \beta N + b$, where $\beta \in \mathbb{N}$, and $b\in \left[0,N-1\right]$. Hence, for a fixed $\tau \in \left[0,N-1\right]$, the solution to ${{\left\langle { 2N{c_1}\tau - {\nu}} \right\rangle }_N} = 0$ can be given by $\nu = {{\left\langle {2N{c_1}\tau} \right\rangle }_N} = {{\left\langle {b} \right\rangle }_N}$. As $\nu, b \in \left[0,N-1\right]$, we can get that there is only one $\nu = {{\left\langle {2N{c_1}\tau} \right\rangle }_N}$ to make ${{\left\langle { 2N{c_1}\tau - {\nu}} \right\rangle }_N} = 0$. Therefore, ${{\left\langle { 2N{c_1}\tau - {\nu}} \right\rangle }_N} = 0$ has $N$ solutions, which are $\left(\tau,{{\left\langle {2N{c_1}\tau} \right\rangle }_N}\right), \tau \in \left[0,N-1\right]$. Since the index of $\left(0,0\right)$ corresponds to the mainlobe, the remaining $N-1$ solutions correspond to the indices of depressions, that is, there are $N-1$ depressions whose indices are $\left(\tau,{{\left\langle {2N{c_1}\tau} \right\rangle }_N}\right), \tau \in \left[1,N-1\right]$. Proposition \ref{prop:solution} is proved.  

\section{Proof of Corollary \ref{cor:c1_design}}\label{proof:c1_design}

The gap of normalized delay and the gap of normalized Doppler between the strong target and the weak target are equal to
\begin{align}
{\Delta _\tau } &= {{2\left( {{r_w} - {r_s}} \right){f_s}} \mathord{\left/
		{\vphantom {{2\left( {{r_w} - {r_s}} \right){f_s}} c}} \right.
		\kern-\nulldelimiterspace} c},\\
{\Delta _\nu } &= {{2\left( {{v_w} - {v_s}} \right){f_c}N} \mathord{\left/
		{\vphantom {{2\left( {{v_w} - {v_s}} \right){f_c}N} {\left( {c{f_s}} \right)}}} \right.
		\kern-\nulldelimiterspace} {\left( {c{f_s}} \right)}}.
\end{align}
To avoide that the weak target is positioned in the depression of sidelobes of the strong target, it should meet that $\left( {{\Delta _\tau },{{\left\langle {2N{c_1}{\Delta _\tau }} \right\rangle }_N}} \right) \ne \left( {{\Delta _\tau },{\Delta _\nu }} \right)$, that is, ${\left\langle {2N{c_1}{\Delta _\tau }} \right\rangle _N} \ne {\Delta _\nu }$. After mathematical derivation, we can get (\ref{eq:c1_design}). Corollary \ref{cor:c1_design} is proved.

\section{Proof of Proposition \ref{prop:AS_DPAF_pulsed}}\label{proof:AS_DPAF_pulsed}

Substituting (\ref{eq:aver_symbol_s_bar}) into (\ref{eq:AS_DPAF_pulsed_1}), the squared DPAF of pulse-shaped AFDM is rewritten as (\ref{eq:AS_DPAF_pulsed_3}) at the top of next page, where $\mathcal{A}_1$, $\mathcal{A}_2$, $\mathcal{A}_3$ are shown in (\ref{eq:A_1}), (\ref{eq:A_2}) and (\ref{eq:A_3}) at the top of next page through mathematical derivation, respectively. Consequently, substituting (\ref{eq:A_1}), (\ref{eq:A_2}) and (\ref{eq:A_3}) into (\ref{eq:AS_DPAF_pulsed_3}), the average squared DPAF of pulse-shaped AFDM can be expressed as (\ref{eq:AS_DPAF_pulsed_2}). Proposition \ref{prop:AS_DPAF_pulsed} is proved.

\begin{figure*} [htb]
	\vspace*{-10pt} 
	\small
	\begin{flalign}\label{eq:AS_DPAF_pulsed_3}
	&\	\mathbb{E} \left\{ {{{\left| {\chi _{\text{AFDM,PS}} \left( {\tau ,\nu } \right)} \right|}^2}} \right\} & \nonumber\\
	&\ = \frac{1}{{{N^2}}}\sum\limits_{n = 0}^{N - 1} {\sum\limits_{n' = 0}^{N - 1} {\sum\limits_{m = 0}^{N - 1} {\sum\limits_{m' = 0}^{N - 1} {\sum\limits_{q = 0}^{N - 1} {\sum\limits_{q' = 0}^{N - 1} {\sum\limits_{p = 0}^{N - 1} {\sum\limits_{p' = 0}^{N - 1} {\sum\limits_{k = 0}^{NL - 1} {\sum\limits_{k' = 0}^{NL - 1} \Bigg\{ {e^{ - j2\pi {c_1}\left( {{n^2} - {{n'}^2}} \right)}}{e^{ - j\frac{{2\pi }}{N}\left( {mn - m'n'} \right)}}{e^{j2\pi {c_1}\left( {{q^2} - {{q'}^2}} \right)}}{e^{j\frac{{2\pi }}{N}\left( {pq - p'q'} \right)}}   } } } } } } } } } & \nonumber\\	
	&\ \quad\quad\quad\quad\quad\quad\quad\quad\quad\quad\quad \cdot  {g}_{{{\left\langle {k - nL} \right\rangle }_{NL}}} g_{{{\left\langle {k - n'L - \tau } \right\rangle }_{NL}}}  g_{{{\left\langle {k' - qL} \right\rangle }_{NL}}}  {g}_{{{\left\langle {k' - q'L - \tau } \right\rangle }_{NL}}} {e^{  j\frac{{2\pi }}{{NL}}\nu \left( {k - k'} \right)}}  &\nonumber \\
	&\ \quad\quad\quad\quad\quad\quad\quad\quad\quad\quad\quad \cdot { \left[\delta \left( {m - m'} \right)\delta \left( {p - p'} \right)   + \left( {{\mu _4} - 2} \right)\delta \left( {m - m'} \right)\delta \left( {m - p} \right)\delta \left( {m - p'} \right) + \delta \left( {m - p} \right)\delta \left( {m' - p'} \right) \right]} \Bigg\} & \nonumber \\
	&\  = \mathcal{A}_1 + \mathcal{A}_2 + \mathcal{A}_3. &
	\end{flalign}
	\normalsize
	\vspace*{-10pt} 
\end{figure*}
\begin{figure*} [htb]
	\vspace*{-15pt} 
	\small
	\begin{flalign} \label{eq:A_1}
	&\ \mathcal{A}_1 = \frac{1}{{{N^2}}}{\left| {\sum\limits_{n = 0}^{N - 1} {\sum\limits_{n' = 0}^{N - 1} {\sum\limits_{k = 0}^{NL - 1} {{e^{ - j2\pi {c_1}\left( {{n^2} - {{n'}^2}} \right)}}g_{{{\left\langle {k - nL} \right\rangle }_{NL}}}  } } } } {g}_{{{\left\langle {k - n'L - \tau } \right\rangle }_{NL}}} {e^{j\frac{{2\pi }}{{NL}}\nu k}}\sum\limits_{m = 0}^{N - 1} {{e^{ - j\frac{{2\pi }}{N}\left( {n - n'} \right)m}}}   \right|^2}\; & \nonumber\\
	&\ \ \quad = {\left| {\sum\limits_{n = 0}^{N - 1} {{e^{-j\frac{{2\pi }}{N}\nu n}}\sum\limits_{k = 0}^{NL - 1} {g_k {g}_{{{\left\langle {k - \tau } \right\rangle }_{NL}}} {e^{-j\frac{{2\pi }}{{NL}}\nu k}}} } } \right|^2}  = {\left| {\frac{{\sin \left( {\pi \nu} \right)}}{{\sin \left( {{{\pi \nu} \mathord{\left/
								{\vphantom {{\pi \nu} N}} \right.
								\kern-\nulldelimiterspace} N}} \right)}}} \right|^2}{\left| {\sum\limits_{k = 0}^{NL - 1} {g_k {g}_{{{\left\langle {k - \tau } \right\rangle }_{NL}}} {e^{-j\frac{{2\pi }}{{NL}}\nu k}}} } \right|^2} .&
	\end{flalign} \normalsize
	\vspace*{-10pt} 
\end{figure*}
\begin{figure*} [htb]
	\vspace*{-10pt} 
	\small
	\begin{flalign}\label{eq:A_2}
	&\ \mathcal{A}_2 = \frac{{\left( {{\mu _4} - 2} \right)}}{{{N}}}\sum\limits_{n = 0}^{N - 1} {\sum\limits_{q = 0}^{N - 1} {\sum\limits_{q' = 0}^{N - 1} {\sum\limits_{k = 0}^{NL - 1} {\sum\limits_{k' = 0}^{NL - 1} {\Bigg\{ {{e^{ - j2\pi {c_1}\left( {{n^2} - \left\langle {n - q + q'} \right\rangle _N^2} \right)}}} } } } } }  {e^{j2\pi {c_1}\left( {{q^2} - {{q'}^2}} \right)}}{g}_{{{\left\langle {k - nL} \right\rangle }_{NL}}}  g_{{{\left\langle {k - {{\left( {n - q + q'} \right) }}L - \tau } \right\rangle }_{NL}}}  & \nonumber \\
	&\ \quad\quad\quad\quad\quad\quad\quad\quad\quad\quad\quad\quad\quad\quad\quad\quad\quad\quad  { \cdot g_{{{\left\langle {k' - qL} \right\rangle }_{NL}}} {g}_{{{\left\langle {k' - q'L - \tau } \right\rangle }_{NL}}} {e^{  j\frac{{2\pi }}{{NL}}\nu \left( {k - k'} \right)}}} \Bigg\} & \nonumber \\
	&\ \ \quad = \frac{{\left( {{\mu _4} - 2} \right)}}{{{N}}}\sum\limits_{m = 0}^{N - 1} {\Bigg\{ {{{\left| {\sum\limits_{k = 0}^{NL - 1} {g_k {g}_{{{\left\langle {k - \left( {\tau  - mL} \right)} \right\rangle }_{NL}}} {e^{-j\frac{{2\pi }}{{NL}}\nu k}}} } \right|}^2}} }  \left. { \sum\limits_{q = 0}^{N - 1} {{e^{j\frac{{2\pi }}{N}\left( {2N{c_1}m - \nu } \right)q}}} \sum\limits_{n = 0}^{N - 1} {{e^{ - j\frac{{2\pi }}{N}\left( {2N{c_1}m - \nu } \right)n}}} } \right\}&\nonumber \\
	&\ \ \quad = \sum\limits_{m = 0}^{N - 1} {\Bigg\{ \frac{{\left( {{\mu _4} - 2} \right)}}{{{N}}} {{{\left| {\frac{{\sin \left[ {\pi \left( {2N{c_1}m - \nu} \right)} \right]}}{{\sin \left[ {{{\pi \left( {2N{c_1}m - \nu} \right)} \mathord{\left/
											{\vphantom {{\pi \left( {2N{c_1}m - \nu} \right)} N}} \right.
											\kern-\nulldelimiterspace} N}} \right]}}} \right|}^2}} }   {  {{\left| {\sum\limits_{k = 0}^{NL - 1} {g_k g_{{{\left\langle {k - \left( {\tau  - mL} \right)} \right\rangle }_{NL}}} {e^{-j\frac{{2\pi }}{{NL}}\nu k}}} } \right|}^2}} \Bigg\} .&
	\end{flalign} \normalsize
	\vspace*{-10pt} 
\end{figure*}
\begin{figure*} [htb]
	\vspace*{-10pt} 
	\small
	\begin{flalign}\label{eq:A_3}
	&\ \mathcal{A}_3 = \frac{1}{{{N^2}}}\sum\limits_{n = 0}^{N - 1} {\sum\limits_{n' = 0}^{N - 1} {\sum\limits_{q = 0}^{N - 1} {\sum\limits_{p = 0}^{N - 1} {\sum\limits_{k = 0}^{NL - 1} {\mathop \sum \limits_{k' = 0}^{NL - 1} \left\{ {\mathop \sum \limits_{m = 0}^{N - 1} {e^{ - j\frac{{2\pi }}{N}\left( {n - q} \right)m}} } \sum\limits_{m' = 0}^{N - 1} {{e^{j\frac{{2\pi }}{N}\left( {n' - q'} \right)m'}}}{e^{ - j2\pi {c_1}\left( {{n^2} - {{n'}^2}} \right)}}{e^{j2\pi {c_1}\left( {{q^2} - {{q'}^2}} \right)}} \right.} } } } } \;\;  &\nonumber \\
	&\ \quad\quad\quad\quad\quad\quad\quad\quad\quad\quad\quad\quad\quad\quad\quad\quad\quad\quad \cdot  {g}_{{{\left\langle {k - nL} \right\rangle }_{NL}}}  g_{{{\left\langle {k - n'L - \tau } \right\rangle }_{NL}}}  g_{{{\left\langle {k' - qL} \right\rangle }_{NL}}}   {g}_{{{\left\langle {k' - q'L - \tau } \right\rangle }_{NL}}} {e^{ j\frac{{2\pi }}{{NL}}\nu \left( {k - k'} \right)}} \Bigg\} &\nonumber \\
	&\ \ \quad = \sum\limits_{n = 0}^{N - 1} {\sum\limits_{n' = 0}^{N - 1} {{{\left| {\sum\limits_{k = 0}^{NL - 1} {g_k  {g}_{{{\left\langle {k - \left[ {\tau  - \left( {n - n'} \right)L} \right]} \right\rangle }_{NL}}}  {e^{-j\frac{{2\pi }}{{NL}}\nu k}}} } \right|}^2}} }  =  \sum\limits_{m = 0}^{N - 1} { \left\{ N {{\left| {\sum\limits_{k = 0}^{NL - 1} {g_k  {g}_{{{\left\langle {k - \left[ {\tau  - mL} \right]} \right\rangle }_{NL}}} {e^{-j\frac{{2\pi }}{{NL}}\nu k}}} } \right|}^2} \right\} } .
	\end{flalign} \normalsize
	\hrule
	\vspace*{-10pt} 
\end{figure*}

\small
\bibliographystyle{IEEEbib}
\bibliography{IEEEabrv,IEEE_JRCJ_ref}

\end{document}